\newcommand{\EE}{\mathbb{E}}
\newcommand{\ptrue}{q_{\mathrm{lim}}}
\newcommand{\pchoice}{p}
\newcommand{\pmax}{p_O}
\newcommand{\pt}{p_t^*}
\newcommand{\pagg}{Q^{\mathrm{agg}}}
\newcommand{\pbc}{Q^{\mathrm{BC}}}
\newcommand{\pperm}{Q^{\mathrm{perm}}}
\newcommand{\plim}{Q_{\mathrm{lim}}}
\newcommand{\pval}{Q}
\newcommand\independent{\protect\mathpalette{\protect\independenT}{\perp}}\def\independenT#1#2{\mathrel{\rlap{$#1#2$}\mkern2mu{#1#2}}}
\newcommand{\tabitem}{~~\llap{\textbullet}~~}
\newtheorem{theorem}{Theorem}[section]
\newtheorem{corol}[theorem]{Corollary}
\newtheorem{proposition}[theorem]{Proposition}
\newtheorem{lemma}[theorem]{Lemma}
\newtheorem{remark}{Remark}
\newtheorem{example}{Example}
\title{Sequential Monte-Carlo testing by betting}
\author{%
  Lasse Fischer and Aaditya Ramdas\\
   University of Bremen and Carnegie Mellon University \\
  \texttt{fischer1@uni-bremen.de, aramdas@cmu.edu}
%   \thanks{Use footnote for providing further information
    % about author (webpage, alternative address)---\emph{not} for acknowledging  funding agencies.} \\
%   Department of Computer Science\\
%   Carnegie Mellon University\\
%   Pittsburgh, PA 15213 \\
%   \texttt{aramdas@cmu.edu} \\
  % examples of more authors
  % \And
  % Coauthor \\
  % Affiliation \\
  % Address \\
  % \texttt{email} \\
  % \AND
  % Coauthor \\
  % Affiliation \\
  % Address \\
  % \texttt{email} \\
  % \And
  % Coauthor \\
  % Affiliation \\
  % Address \\
  % \texttt{email} \\
  % \And
  % Coauthor \\
  % Affiliation \\
  % Address \\
  % \texttt{email} \\
}
\begin{document}

\maketitle
\begin{abstract}
    In a Monte-Carlo test, the  observed dataset is fixed, and several resampled or permuted versions of the dataset are generated in order to test a null hypothesis that the original dataset is exchangeable with the resampled/permuted ones. Sequential Monte-Carlo tests aim to save computational resources by generating these additional datasets sequentially one by one, and potentially stopping early. While earlier tests yield valid inference at a particular prespecified stopping rule, our work develops a new anytime-valid Monte-Carlo test that can be continuously monitored, yielding a p-value or e-value at any stopping time possibly not specified in advance. It generalizes the well-known method by Besag and Clifford, allowing it to stop at any time, but also encompasses new sequential Monte-Carlo tests that tend to stop sooner under the null and alternative without compromising power. The core technical advance is the development of new test martingales (nonnegative martingales with initial value one) for testing exchangeability against a very particular alternative. These test martingales are constructed using new and simple betting strategies that smartly bet on whether a generated test statistic is greater or smaller than the observed one. The betting strategies are guided by the derivation of a simple log-optimal betting strategy, have closed form expressions for the wealth process, provable guarantees on resampling risk, and display excellent power in practice. 
\end{abstract}

\tableofcontents

\section{Introduction}

Suppose we have some data $X_0$, from which we calculate a test statistic $S(X_0)$. Suppose further that we have the ability to generate additional data $X_1,X_2,\dots$ and calculate test statistics $S(X_1),S(X_2),\dots$~.  Denoting $Y_i = S(X_i)$, assume further that $Y_1,Y_2,\dots$ are always exchangeable conditional on $Y_0$. Suppose that we are trying to test the null hypothesis
\[
H_0: Y_0, Y_1, \dots \text{ are exchangeable.}
\]
This type of null appears in permutation tests (for example, when performing two-sample or independence testing), but also in many other settings. While we focus on permutation tests for concreteness, the reader may safely substitute \enquote{Monte-Carlo} for \enquote{permutation} in all occurrences. In particular the methods also apply without alteration to conditional randomization tests  frequently encountered in causal inference, and Monte-Carlo tests of group invariance for any other group actions beyond permutations (such as sign symmetry or rotation tests). The p-value corresponding to $H_0$ would be
\begin{equation}
\label{eq:perm-pval}
\pperm_T := \frac{1+\sum_{t=1}^T \mathbbm{1}\{Y_t \geq Y_0\}}{T+1}
\end{equation}
for some prespecified number of permutations $T$.
% \deleted{Use the shorthand $Y_0^T:=\{Y_0,\dots,Y_T\}$, it will be useful to denote the numerator above by}
% \[
% \deleted{\rank(Y_0;Y_0^T) := \sum_{b=0}^T \mathbbm{1}\{Y_b \geq Y_0\}.}
% \]
But how should one choose $T$? In practice, it is often just set as a suitably large constant like 1000, or sometimes even larger if multiple testing corrections need to be performed. The goal of this paper is to describe a simple ``anytime-valid p-value'' (or ``p-process'') and an ``e-process'' for $H_0$ that is obtained using the principle of testing by betting.

Before we do this, it makes sense to first describe the (working) alternative: 
\[
H_1: Y_1, Y_2,\dots \text{ are exchangeable conditional on } Y_0,\text{ but } Y_0 \text{ is stochastically larger than } Y_1, Y_2,\dots\text{ }.
\]
We used the word ``working'' above to highlight that when the null is false, we will not require or assume that the above $H_1$ is true. Since the complement of $H_0$ is enormous, no test can be powerful against all alternatives, and specifying such a ``working'' alternative helps guide the design of powerful and practical tests.

While there do exist sequential tests for exchangeability, they have focused on (for example) changepoint \citep{vovk2021testing} or Markov alternatives \citep{ramdas2022testing, saha2023testing}. The technique introduced in this paper is different, and focuses on the above $H_1$ that is practically relevant for settings like two-sample or independence testing using permutations.

\begin{remark}
    In order to avoid ties, each test statistic $Y_i$ should be considered as coming together with a random sample $\theta_i$ from $U[0,1]$ which is independent of all test statistics and all other random samples \citep{vovk2003testing,vovk2005algorithmic, vovk2021testing}. If two test statistics are equal $Y_i=Y_j$, then comparing $\theta_i,\theta_j$ decides which test statistic is considered as \enquote{greater}. Hence, the event $\{Y_t\geq Y_0\}$ should be read as $\{Y_t>Y_0\} \cup \{Y_t=Y_0, \theta_t>\theta_0\}$. To the detriment of power, drawing random samples can be avoided when a conservative choice can be made in case of a tie. For instance, if we count all ties between $Y_0$ and a generated test statistics $Y_t$ as $Y_t\geq Y_0$, the permutation p-value $\pperm$ would still be valid, but conservative. The same holds for our binomial mixture strategy with uniform density presented in Section \ref{sec:closed_form_binomial_mixture}.
\end{remark}

\begin{remark}\label{remark:randomization}
    In a permutation test one can either draw the permutations without replacement or with replacement. If it is computationally feasible to draw all possible permutations, one usually does the former and sets the parameter $T$ in~\eqref{eq:perm-pval}
    to equal the total number of possible permutations. However, it is computationally more convenient to draw the permutations with replacement. In the hypothesis formulation above it is implicitly assumed that we have access to an infinite number of generated test statistics and therefore that the permutations are drawn with replacement. This allows us to characterize the asymptotic behavior in the remainder of the paper. However, our proposed methods are also valid if permutations are sampled without replacement and/or a maximum number of permutations is fixed in advance.
\end{remark}

\begin{example}[Testing independence in a treatment vs. control trial using a permutation test]\label{example:Fisher}
    Suppose we observe some data $X_0$ of a (randomized) treatment vs.\ control trial, where $X_0$ consists of the responses $X$ and  binary treatment indicators $W_0$ (both being $n$-dimensional vectors if there are $n$ patients).  We wish to test whether the treatment has no effect on the response, which can also be written as
    $$
    H_0^{\mathrm{IT}}: W_0 \independent X.
    $$
    We can generate new datasets $X_1=(X,W_1),X_2=(X,W_2),\ldots$ by randomly permuting $W_0$. Note that we only permute the treatment labels and not the observed response data $X$. In this way, the datasets $X_0,X_1,\ldots$  are exchangeable under the null hypothesis, but $X_0$ is not exchangeable with $X_1,X_2,\ldots$ if the null hypothesis is not true. In addition, the generated datasets $X_1,X_2,\ldots$ are always exchangeable (even conditional on $X_0$), since they were generated in an identical manner. The same properties hold for $Y_0,Y_1,\ldots$, where $Y_i=S(X_i)$ and $S$ can be any test statistic --- a common choice is the difference in mean between the treated and untreated observations. Hence, this example fits perfectly into our introduced setting, and the reader may keep it in mind in what follows. We also consider it in our simulations (Section \ref{sec:sim_p-value}--\ref{sec:sim_stochastic_rounding}) and a real data application (Section \ref{sec:real_data}). 
\end{example}

\begin{example}[Testing conditional independence with the CRT]\label{example:CRT}
    Consider testing the independence between two variables $W_0$ and $X$ conditional on some covariates $Z$, denoted by $$H_0^{\mathrm{CI}}: W_0 \independent X \ |\ Z.$$
    For example, $W_0$ could be a binary treatment vector, $X$ the observed responses and $Z$ contains the patient covariates.
If the distribution of $W_0|Z$ is known, one can test $H_0^{\mathrm{CI}}$ using a conditional randomization test (CRT); while this is a common idea in causal inference, \citet{candes2018panning} discuss broader applications. For example, $W_0 | Z$ is indeed known if we are analyzing data from a designed experiment where the treatment was randomly assigned to each patient based on a known probability that depends on that patient's covariates. Here, we cannot permute $W_0$ because the subjects are not exchangeable conditional on $Z$, but we can redraw $W_0$ as follows.
Let $W_1,W_2,\ldots$ be sampled independently from the distribution of $W_0|Z$  and $Y_t:=S(W_t,X,Z)$, $t\in \mathbb{N}_0$, for an arbitrary test statistic $S$. The test statistics $Y_1,Y_2,\ldots$ are always exchangeable conditional on $Y_0$, while $Y_0,Y_1,\ldots$ are exchangeable under $H_0^{\mathrm{CI}}$, matching the setting of our paper. The CRT p-value essentially uses the same formula as the permutation p-value~\eqref{eq:perm-pval} (even though it is technically not a permutation test, but a randomization test). Generating samples from $W_0|Z$ can be computationally challenging or recalculating the test statistic on these samples can be very costly, so a lot of computational effort can potentially be saved by using our sequential permutation tests. We revisit this example in Section \ref{sec:real_CRT}.
\end{example}

\subsection{Our approach at a high level\label{sec:high_level}}

We design a game of chance, where a gambler begins with one dollar, makes bets in each round so that their wealth changes over time. The value of the wealth at any point will be a measure of evidence against the null. Throughout the paper, 
$\mathbb{P}$ denotes the true but unknown probability distribution and $\mathbb{E}$  the corresponding expected value, whereas $\mathbb{P}_{H_0}$ and $\mathbb{E}_{H_0}$ denote the probability and expectation under the null hypothesis (i.e. corresponding to an arbitrary distribution that satisfies $H_0$). 

The game will proceed in rounds, which we index by $t=1,2,\dots$ --- in round $t$, the gambler will first bet on the outcome of $I_t:=\mathbbm{1}\{Y_t\geq Y_0\}$. We will soon specify what properties this bet must satisfy. Then nature will draw $X_t$ and calculate $Y_t$ and reveal $I_t$ (hiding $Y_t$ itself), thus determining our payoff for that round. In the following we will also refer to the event $\{I_t=1\}$ as a \enquote{loss} and to $\{I_t=0\}$ as a \enquote{win}. This is because under the alternative, it is intuitive that the gambler will usually put more money on the event $\{I_t=0\}$, but we also theoretically justify this in Section~\ref{sec:log-optimal}. Similarly, the random variable $L_T:=\sum_{t=1}^T I_t$ will be called the number of losses after $T$ permutations. 

It is important that we do not know the raw $Y$ values, but only the indicators $I$ are revealed in each round. So after the first round, we only know if $Y_1$ was larger or smaller than $Y_0$, but not the values of either $Y_0$ or $Y_1$.

The gambler's bet will be specified by a betting function $B_t:\{0,1\}\rightarrow \mathbb{R}_{\geq 0}$. $B_t(r)$ denotes the amount their wealth gets multiplied by when $I_t=r$, $r\in \{0,1\}$. Their wealth after $T$ rounds of betting is
\[
W_T := \prod_{t=1}^T B_t(I_t).
\]

Denoting $I_1^t=\{I_1,\dots,I_t\}$, our bet will have to satisfy the constraint $\EE_{H_0}[B_t(I_t) \mid I_1^{t-1}]=1$ so that the wealth process is a test martingale (a nonnegative martingale with initial value one under $H_0$). Of course, this implies that only one of $B_t(0)$ and $B_t(1)$ can be larger than 1, and typically this will be $B_t(0)$ (as guided by $H_1$). To simplify the above constraint, note that conditional on $I_1^{t-1}$, the event $\{I_t=1\}$ is equivalent to the event that $Y_t$ ranks among the $L_{t-1}+1$ largest test statistics of $Y_0,\ldots,Y_t$. Since $Y_0,\ldots, Y_t$ are exchangeable under $H_0$, the constraint simplifies to 
\begin{align}
B_t(0)\frac{t-L_{t-1}}{t+1}+B_t(1)\frac{1+L_{t-1}}{t+1}=1. \label{eq:cond_bet}
\end{align}

As a consequence of the optional stopping theorem for nonnegative martingales, the wealth being a test martingale implies that it is an ``e-process''~\citep{ramdas2022testing,ramdas2023game}, which is a nonnegative sequence of random variables that satisfies
\begin{equation}\label{eq:e-process}
\mathbb E_{H_0}[W_\tau] \leq 1,
\end{equation}
for any stopping time $\tau$ with respect to the filtration $(\mathcal I_t)_{t \geq 1}$ of the indicators $\mathcal I_t := \sigma(I_1,\dots,I_t)$. Recall that an integer-valued random variable $\tau$ is called stopping time, if $\{\tau=t\}$ is measurable with respect to $\mathcal{I}_t$ for all $t\in \mathbb{N}$.

Said differently,~\eqref{eq:e-process} states that
$W_\tau$ is an e-value at any stopping time $\tau$ that is possibly not even specified in advance. E-processes capture evidence against the null: the larger the wealth, the more the evidence against the null. If the null is true,~\eqref{eq:e-process} implies that no betting strategy and stopping rule is guaranteed to make money (i.e., demonstrate evidence against the null). However, if the alternative is true, then a good betting strategy can make money and the evidence against the null can grow over time (as more permutations are drawn). We will design such good betting strategies in this paper.  

Ville's martingale inequality states that for any $\alpha\in(0,1)$, 
\begin{equation}\label{eq:ville}
\mathbb P_{H_0} (\exists t \geq 1: W_t \geq 1/\alpha) \leq \alpha,
\end{equation}
and thus if we wish to reject the null at level $\alpha$, we may stop as soon as the wealth exceeds $1/\alpha$. For example, if $\alpha=0.05$, we reject the null if we were able to turn our initial (toy) dollar into twenty dollars. However, this is only one possible stopping rule, and our e-process can be interpreted as the evidence available against the null at \emph{any} stopping time. For example, suppose we stop (for whatever reason) when our wealth is 16, that is still reasonable evidence against the null (indeed we would be quite impressed with a gambler who went to a casino and multiplied their wealth 16-fold), and one does not require a predefined level or threshold to be able to interpret it. 

Finally, since readers may be more familiar with p-values,~\eqref{eq:ville} implies that the process defined by 
\begin{align}
\pval_t := 1/(\sup_{s \leq t} W_s) \label{eq:p-process}
\end{align}
is an anytime-valid p-value, or ``p-process''~\citep{johari2022always,howard2021time,ramdas2023game}, meaning that at any arbitrary stopping time $\tau$, $\pval_\tau$ is a p-value:
\[
\mathbb P_{H_0}(\pval_\tau \leq \alpha) \leq \alpha.
\]
We notate the p-values with the capital letter $\pval$ and their realizations with the lowercase letter $q$ to distinguish them from the parameters $p$ introduced later.

\begin{algorithm}
\caption{General strategy for permutation testing by betting} \label{alg:general}
 \hspace*{\algorithmicindent} \textbf{Input:} Sequence of test statistics $Y_0,Y_1, Y_2, \ldots$.\\
 \textbf{Optional Input:} Stopping rule $\mathcal S$ (potentially data-dependent and decided on the fly).\\
 % \hspace*{\algorithmicindent} \hphantom{\textbf{Input:}} Optional: significance level $\alpha\in (0,1)$ and some threshold $\alpha_0\in $.\\
 \hspace*{\algorithmicindent} \textbf{Output:} E-process $(W_t)_{t \geq 1}$, and p-process $(1/\sup_{s \leq t} W_s)_{t \geq 1}$.\\
 \textbf{Optional output:} Stopping time $\tau$, e-value $W_{\tau}$, p-value $1/\sup_{s \leq \tau} W_s$.
 % \hspace*{\algorithmicindent} \hphantom{\textbf{Output:}} 
 % Optional: Anytime-valid p-value $p_{\mathrm{seq}}=(\max_{s=1,\ldots,\tau} W_s)^{-1}$. 
\begin{algorithmic}[1]
\State $W_0 = 1$ 
\State $L_0=0$
\For{$t=1,2,...$}
\State Choose a bet $B_t=(B_t(0),B_t(1))$ with $B_t(0)\frac{t-L_{t-1}}{t+1}+B_t(1)\frac{1+L_{t-1}}{t+1}=1$
\State Reveal $I_t$
\State Calculate $L_t=L_{t-1} + I_t$
\State $W_t = W_{t-1} \cdot B_t(I_t)$ 
\If{$\mathcal S(I_1,\ldots,I_t)=\text{stop}$} 
    \State $\tau=t$
    \State \Return $\tau, W_{\tau}, (\max_{s=1,\ldots,\tau} W_s)^{-1}$
\EndIf
\EndFor
\end{algorithmic}
\end{algorithm}

\begin{remark}\label{remark:vovk}
    A general approach for the sequential testing of exchangeability has been proposed by Vovk and colleagues \citep{vovk2005algorithmic, vovk2003testing, vovk2021testing} based on conformal prediction. Instead of betting at each step $t$ on the indicator $I_t=\mathbbm{1}\{Y_t\geq Y_0\}$, Vovk considers the rank $R_t=\sum_{i=0}^{t} \mathbbm{1}\{Y_i\geq Y_t\}$ of $Y_t$ among $Y_0,\ldots, Y_t$. We could have proceeded in the same way. However, since $Y_1,Y_2,\ldots$ are exchangeable conditional on $Y_0$ under our alternative, the rank of $Y_t$ among $Y_1,\ldots, Y_t$ is uniform (even conditional on $R_1,\ldots, R_{t-1}$). Therefore, the only relevant information is whether $Y_t\geq Y_0$. Another subtle difference to Vovk's approach is the randomization technique used to handle ties (see Remark~\ref{remark:randomization}). In Vovk's approach, one has to draw a random sample $\theta_t$ for every test statistic $Y_t$, while we only require a random sample $\theta_t$ if $Y_t$ and $Y_0$ actually tie. This reduces the computational effort, the main goal of this paper. In addition, with Vovk's randomization approach one must specify a bet for each value in $[0,1]$, while we only bet on $\{0,1\}$ which simplifies the betting function $B_t$ and leads to closed forms for the wealth $W_T$, as we will illustrate in the remainder of the paper.
    
    To summarize, our approach can be considered as a special case of Vovk's conformal martingale approach. However, our amendments simplify the method substantially for our particular alternative --- an alternative, which to the best of our knowledge has not been stated in the literature before and naturally arises from the classical problem of sequential permutation testing. The development of powerful tests for this alternative requires the construction of new betting strategies.
\end{remark}

\subsection{Related work}
The use of test martingales and  \enquote{testing by betting} have been popularized as a safe approach to sequential, anytime-valid inference \citep{shafer2011test, shafer2021testing,howard2021time,grunwald2020safe,waudby2023estimating}. 
An overview of this new methodology and the recent advances is given by \citet{ramdas2023game}.

There have been several other works on sequential tests of exchangeability~\citep{vovk2021testing,ramdas2022testing,saha2023testing,koning2023online, lardy2024anytime}. The most important differentiating factor of the current paper from those is the alternative hypothesis being tested. The aforementioned works involve the \emph{data} coming in sequentially, and the alternative involves either (a) a changepoint from exchangeability (i.e., the first $j$ samples are exchangeable, but not the following samples, with $j=0$ being a special case), (b) Markov alternatives (i.e., there is some serial time-ordered dependence between the data).

In contrast, our paper considers the setting where the data are fixed, the permutations are drawn sequentially, and thus --- by design --- the alternative is such that the first test statistic is special while the rest are exchangeable. 

In some sense, our paper is really targeted at a sequential version of the classical problem of permutation testing, while the earlier papers are essentially trying to test if the data are i.i.d., or if there is some distribution shift or drift. Despite their different goals, we describe some key technical aspects of these papers in Appendix~\ref{appsec:related-work}. A more detailed description of the relation to the conformal prediction approach by Vovk and colleagues \citep{vovk2003testing, vovk2005algorithmic, vovk2021testing} is given in Remark~\ref{remark:vovk}. Another related martingale-based approach by \citet{waudby2020confidence} constructs confidence sets for some parameter by sampling without replacement (WoR). However, this problem implicitly tests a different null hypothesis, which leads to a dissimilar method. 

The closest related work to ours is the famous sequential Monte-Carlo p-values method of \citet{besag1991sequential}, who introduced a simple strategy for reducing the number of permutations drawn that is still very popular today. It can be shown that for a certain parameter choice, their method leads to the same decisions as the classical permutation test while needing less permutations \citep{silva2009power}. Inspired by this, further sequential Monte-Carlo tests with the objective to reduce the number of permutations were proposed \citep{fay2007using, silva2013optimal}. An important difference to our approach is that theirs are stopped based on fixed predefined rules, and thus yield valid p-values only at these particular stopping times, and are not valid at any other stopping time. In addition, the existing approaches do not have the option to continue testing by drawing more permutations once the stopping criterium is reached (for example, if the null is not rejected), while our method can always handle drawing additional permutations and stopping at a later point, even if these flexibilities are not specified in advance.

Our general betting approach generalizes the Besag-Clifford method, but also allows to construct new strategies. Our simulations suggest that these new strategies can stop sooner under both the null and the alternative, without sacrificing power. 
% While we do not expect to always beat the Besag-Clifford method, our simulations paint quite a favorable picture.

There are also other approaches to reduce the computational cost of permutation tests than sequential permutation p-values. \citet{koning2024more} and  \citet{koningHemerik2024more} consider gaining computational efficiency by considering a subgroup of permutations. Another line of work tries to approximate the null distribution of $Y_0$ analytically \citep{robinson1982saddlepoint, davison1988saddlepoint, niu2024computationally}. \citet{gandy2009sequential} proposes to bound the probability of obtaining a different decision than the limiting permutation p-value. A recent and comprehensive overview of the literature on computational efficient permutation tests --- including an assessment of the approach introduced in this paper --- is given by \citet{stoepker2024inference}.

\subsection{Outline of the paper and our contributions}

In this paper we construct anytime-valid permutation tests based on  novel test martingales. Our proposed e-values and p-values have easy calculable closed forms, save computational efforts under both the null and the alternative and can bound the resampling risk at arbitrary small values.
We start with two simple betting strategies which serve to exemplify the approach and provide some first intuition of  the opportunities it offers (Section~\ref{sec:aggressive_strat}). Afterwards, we derive the (oracle) log-optimal betting strategy, which depends on an unknown parameter (Section~\ref{sec:log-optimal}). Based on these results, we introduce in Section~\ref{sec:our_strat} a simple strategy that is based on a single parameter indicating aggressiveness and which leads to a wealth that is proportional to the likelihood of a binomial distribution. In Section~\ref{sec:average_wealth}, we show how the log-optimal strategy can be mimicked by a mixture of these simple binomial strategies and use this to introduce a certain working prior under which the strategy always rejects when the limiting permutation p-value $\plim$ is less than $c$ for some constant $c<\alpha$ arbitrary close to $\alpha$. In Section \ref{sec:betting_is_general}, we show that every permutation test based on the number of losses can be obtained by our general betting approach. In particular, we derive concrete betting strategies that yield anytime-valid generalizations of the classical permutation p-value and the Besag-Clifford method. We compare our approach to the Besag-Clifford strategy \citep{besag1991sequential} and the classical permutation p-value via simulations and real data analyses in Section~\ref{sec:simulations}. The R code to reproduce all figures and results is available at the GitHub repository \url{github.com/fischer23/MC-testing-by-betting}.

% \section{Two important strategies: aggressive and (oracle) log-optimal}
\section{The aggressive strategy}
\label{sec:aggressive_strat}
% This section discusses two important betting strategies: the most aggressive one, and the oracle log-optimal strategy. The following section uses these building blocks to define several more sophisticated (and non-oracle) strategies.

% \subsection{The aggressive strategy}

% We begin by discussing two special betting strategies --- passive and aggressive. The passive is trivial, the aggressive is not. 

We begin by discussing two simple strategies to provide some intuition about our approach. In Sections \ref{sec:log-optimal}--\ref{sec:average_wealth} we then introduce our recommended strategies based on maximizing the expected logarithm of the wealth.
The most passive betting strategy uses the betting function
\begin{equation}\label{eq:pass}
B_t(0)=B_t(1)=1 \quad (t\in \mathbb{N})
\end{equation}
 and thus no matter the outcome, our wealth is multiplied by 1. Thus, the wealth starts at one and stays at 1, meaning that the resulting test is powerless.

In contrast, the most aggressive betting strategy uses the function
\begin{equation}\label{eq:agg}
B_t(0)=\frac{t+1}{t} \quad \text{and} \quad B_t(1)=0 \quad (t\in \mathbb{N}),
\end{equation}
as long as $L_{t-1}=0$.
 This would be the strategy of a gambler who not only strongly believes that $H_1$ is true, but also  believes that $Y_0$ is the largest amongst all other $Y_i$'s, and so the gambler deems it is (nearly) impossible for $Y_{t}$ to be smaller than $Y_0$, forsaking all his money if that happens. This is a risky strategy: the first round $t$ in which $Y_{t}$ ends up larger than $Y_0$, we would multiply our wealth by 0 and can never recover from it. However, if our gamble was actually always correct, then our wealth after $T$ rounds would  be $\prod_{t=1}^T (t+1)/t = T+1$, and this is the largest possible wealth. Indeed, since the inverse of the wealth is an anytime-valid p-value, it would yield a p-value of $1/(T+1)$, the best possible in this case.

% As justified in the introduction using Ville's inequality, we have that
% \[
% p^{\mathrm{agg}}_\tau := (\max_{s=1,\dots,\tau} W_s)^{-1}
% \]
% is a valid p-value. 
% This is a direct consequence of Ville's inequality, which states that  for a nonnegative martingale like $W_s$, we have
% \[
% P(\max_{s \in \mathbb N} W_s \geq 1/\alpha) \leq \alpha.
% \]

% \added{Therefore, even if the wealth hit zero almost surely at some point, the corresponding p-value can still be powerful.}
% Thus, for the stopping time $\tau$, the p-value is the inverse of the first time that they have a ``loss''. 

\subsection{Connection to Besag-Clifford\label{sec:besag_clifford}}

The aggressive strategy yields an anytime-valid generalization of the  sequential permutation test by~\citet{besag1991sequential}, when their method is stopped after one loss. Define the stopping time $\gamma(h,T)$ that stops after $h$ losses or after sampling $T$ (possibly infinite) permutations, whichever happens first. Recalling our notation $L_t$ for the number of losses after $t$ steps, Besag and Clifford defined a single p-value $\pbc_{\gamma(h,T)}$ (as opposed to a p-process) as 
\begin{align}\pbc_{\gamma(h,T)}=\begin{cases}
            h/\gamma(h,T), & L_{\gamma(h,T)}=h \\
            (L_{\gamma(h,T)}+1)/(T+1), & \text{otherwise}.
        \end{cases} \label{eq:bc_pval}\end{align}
The p-value $\pbc_{\gamma(h,T)}$ is valid only at time $\gamma(h,T)$ and not at any other stopping time. While Besag and Clifford mentioned the possibility of $T=\infty$, they mainly focused on $T<\infty$ and most follow-up work ignored $T=\infty$ entirely. For this reason, we briefly discuss the case $T=\infty$ in Appendix \ref{sec:neg_bin}.

Since our test yields a p-value at any stopping time, it makes sense to ask what the p-process \eqref{eq:p-process} defined by the aggressive strategy $(\pagg_t)_{t\in \mathbb{N}}$ equals at time $\gamma(1,T)$. The fascinating answer is that it equals the Besag-Clifford p-value at that stopping time:
\[
\pagg_{\gamma(1,T)} = \pbc_{\gamma(1,T)}.
\]
The above equality has two key implications. First, it clearly leads to the interpretation that our aggressive betting strategy yields a p-process that is an anytime-valid generalization of the Besag-Clifford p-value that stops after one loss. (Our p-process recovers the latter p-value exactly at time $\gamma(1,T)$, but remains valid at all other stopping times.) Second, it is well known that $\pbc_{\gamma(1,T)}$ is in fact an \emph{exact} p-value, meaning that even though it can only take on the values $1/1, 1/2, 1/3,\dots, 1/(T+1)$, we have $\mathbb P(\pbc_{\gamma(1,T)} \leq c) = c$ for any $c$ in this support. Thus, the same property also holds for $\pagg_{\gamma(1,T)}$. 

In Section~\ref{sec:betting_is_general}, we provide an anytime-valid generalization of the Besag-Clifford p-value for general $h$, using a more sophisticated betting strategy.

\subsection{Resampling risk\label{sec:resampling_risk}}

 Since $Y_1,Y_2,\ldots$ are always exchangeable conditional on $Y_0$, it follows that $I_1, I_2,\ldots$ are also exchangeable. De Finetti's representation theorem implies that $L_t/t \stackrel{a.s.}{\to} \plim$ for $t \to \infty$, where $\plim$ is some random variable with values in $[0,1]$ and $I_1,I_2,\ldots$ are i.i.d. conditional on $\plim$ with $\mathbb{P}(I_1=1|\plim)=\plim$. 
 If $H_0$ is true, then $\plim$ follows a uniform distribution on $[0,1]$. If $H_1$ is true, then $\plim$ is stochastically smaller than uniform. Consequently, the hypotheses $H_0$ and $H_1$ imply the following null $H_0'$ and alternative $H_1'$, respectively, formulated in terms of the indicators $I_t$ instead of the test statistics $Y_t$
 \begin{align*}
  H_0':I_1,I_2,\ldots\text{ are i.i.d. } \mathrm{Ber}(\ptrue) \text{ cond. on } \plim=\ptrue \text{ with } \plim \sim U[0,1],  
 \end{align*}
 $$
 H_1':I_1,I_2,\ldots\text{ are i.i.d. } \mathrm{Ber}(\ptrue) \text{ cond. on } \plim=\ptrue \text{ with } \plim \text{ stochastically smaller than } U[0,1].
 $$
 % \added{Having these implied hypotheses in mind will help us to guide powerful betting strategies. } 
 
 The \emph{limiting permutation p-value} $\plim$ can also be interpreted as the p-value we would choose if we knew the distribution of $Y_0$ under $H_0$. Hence, a reasonable
 % \added{Note that in Monte-Carlo testing there are two sources of randomness. The randomness in $\plim$ only comes from the randomness of the data $X_0$. It even contains all information about the data relevant for permutation testing, since it determines the distribution of $I_1, I_2,\dots$ independently of the data. The remaining randomness in the indicators $I_1,I_2,\ldots$ comes from drawing the permutation samples randomly.  }
  objective in the Monte Carlo test literature is to find a test  $\phi\in \{0,1\}$ that bounds or minimizes the probability of obtaining a different decision than $\plim$. This is called resampling risk \citep{fay2002designing}, and is defined by 
    $$\mathrm{RR}_{\ptrue}(\phi)=\begin{cases}
            \mathbb{P}(\phi=0|\plim=\ptrue), & \ptrue\leq \alpha \\
            \mathbb{P}(\phi=1|\plim=\ptrue), & \ptrue > \alpha.
        \end{cases}$$
     Note that only the randomness induced by the resampling mechanism is relevant for the resampling risk $\mathrm{RR}_{\ptrue}$, since $\plim=\ptrue$ is fixed. A small resampling risk guarantees that running the same permutation test (with different random seed) on the same data will likely yield the same result. \citet{gandy2009sequential} introduced an algorithm that uniformly bounds $\mathrm{RR}_{\ptrue}$ for all $\ptrue \in[0,1]$ by arbitrary small $\epsilon >0$ and which stops after a finite number of steps if $\ptrue\neq \alpha$. In this paper, we are mainly concerned with the case of $\ptrue \leq \alpha$, since our tests are valid by construction (which Gandy's are not) such that additional rejections can be seen as power improvement and do not pose a problem for validity. 
     
     The aggressive strategy leads to a resampling risk of $\mathrm{RR}_{\ptrue}(\mathbbm{1}\{\pagg_{\tau}\leq \alpha\})=1-(1-\ptrue)^{\lceil\frac{1}{\alpha}-1\rceil}\approx (\ptrue-\alpha \ptrue)/\alpha$ for $\ptrue \leq \alpha $, which is greater than zero for all $\ptrue\neq 0$. Note that $\ptrue$ never equals exactly $0$ (since there is a positive probability of drawing the identity) and therefore the resampling risk of the aggressive strategy is never zero. In Section~\ref{sec:average_wealth}, we introduce a ``binomial mixture strategy'' with zero resampling risk for all $\ptrue \in [0,c)$, where $c<\alpha$ can be chosen arbitrarily close to $\alpha$. Along the way, we introduce a ``binomial strategy'' with a zero resampling risk for all $\ptrue \in [0,\alpha/C)$ for an (explicit) constant $C$.
     
     Note that the resampling risk of the permutation p-value and Besag-Clifford method equals zero only if $\ptrue\in \{0,1\}$ --- and therefore is practically never zero. To the best of our knowledge, our methods are the first yielding zero resampling risk for a nontrivial set of $\ptrue$ values in the literature (whether using anytime-valid tests or not).
    % Furthermore, Fay et al. (2002) \citep{fay2002designing} and Silva \& Assun{\c{c}}{\~a}o (2013) \citep{silva2013optimal} derived algorithms that bound $\mathbb{E}_{\plim}[\mathrm{RR}_{\ptrue}(\phi)]$ under distributional assumptions about $\plim$.

\section{The oracle log-optimal strategy}
\label{sec:log-optimal}

Suppose we knew the true distribution of $I_t|I_1^{t-1}$, $t\in \mathbb{N}$, how should we place our bets $B_t$ to maximize the wealth $W_T$? This question is not easy to answer, since $W_T$ is a random variable and thus it is not clear in which sense it should be maximized. Intuitively, one could think of maximizing $\mathbb{E}[W_T]$. However, in this case we may end up with a strategy that leads to low or even zero wealth in many cases and to a very large wealth in a few cases. This seems not desirable. In order to avoid this, in similar problems, such as in portfolio theory or betting on horse races \citep{cover1999elements}, the objective is to maximize the expected logarithmic wealth: \begin{align}\mathbb{E}[\log(W_T)]=\sum_{t=1}^T \mathbb{E}[\log(B_t(I_t))].\label{eq:log_obj}\end{align} 
This is also called the Kelly criterion \citep{kelly1956new, breiman1961optimal}. The expected log-wealth has now become the de facto standard measure of performance in testing by betting; see~\cite{shafer2021testing,grunwald2020safe,waudby2023estimating,ramdas2023game}. Despite not being the only option, we see no particular reason to deviate from it, so we adopt this perspective going forward in guiding our design of good betting strategies. However, it should be noted that log-optimality does not imply optimality of the corresponding p-process \eqref{eq:p-process}.

The log-wealth can be maximized by maximizing $\mathbb{E}[\log(B_t(I_t))|I_1^{t-1}]$ at each step $t\in \mathbb{N}$. Note that we need to condition on the past, since we are allowed to choose our betting strategy based on the previous loss indicators. A betting strategy that maximizes \eqref{eq:log_obj} is called a log-optimal strategy. One main reason to consider the logarithmic wealth is that the sum offers nice asymptotic behavior. For example, if the considered random variables $Z_1,\ldots, Z_T$ are i.i.d., the strong law of large numbers implies that $$\frac{1}{T}\sum_{t=1}^T \log(Z_1) \to \mathbb{E}[\log(Z_1)] \text{ with probability } 1.$$
 But for short time periods and when the i.i.d. assumption is violated, log-optimal strategies still offer desirable properties \citep{cover1999elements}. Therefore, maximizing the logarithmic wealth seems like a reasonable objective, although we consider a non-i.i.d. betting process. 
% \begin{theorem}
%     Let $p_t^s=\mathbb{P}(R_t=s|R_1^{t-1}, X_0)$, $t\in \mathbb{N}$ and $s\in \{1,\ldots,t+1\}$. Then the (unique) log-optimal strategy is given by $B_t^*(r)=(t+1)p_t^r$ and $\mathbb{E}[\log (B_t^*(R_t))|R_1^{t-1}, X_0]= \log(t+1)+ \sum_{r=1}^{t+1} p_t^r \log (p_t^r)$. \label{theo:log_optimal}
% \end{theorem}
% \begin{proof}
%     Let $b_t^r= B_t(r)/(t+1)$ for some betting strategy $B_t$. We obtain by Gibbs' inequality
%     \begin{align*}
%         \mathbb{E}[\log (B_t(R_t))|R_1^{t-1}, X_0] &= \sum_{r=1}^{t+1} p_t^r \log (B_t(r)) \\
%         &= \log(t+1)+ \sum_{r=1}^{t+1} p_t^r \log (b_t^r ) \\
%         &\leq \log(t+1)+ \sum_{r=1}^{t+1} p_t^r \log (p_t^r)\\ &=\mathbb{E}[\log (B_t^*(R_t))|R_1^{t-1}, X_0],
%     \end{align*}
%     with an equality iff $b_t^r=p_t^r$.
% \end{proof}

%\paragraph{The log-optimal strategy bets proportional to loss probability.}

The loss indicators $I_1, I_2,\ldots$ are exchangeable under the null and the alternative and therefore the vector $I_1^{t-1}$ does not contain more information about $I_t$ than $L_{t-1}=\sum_{i=1}^{t-1} I_i$. Therefore, we replace $I_1^{t-1}$ by $L_{t-1}$ in the condition $\mathbb{P}(I_t=1|I_1^{t-1})$ from now on. In the following proposition we show a basic result that characterizes the log-optimal strategy.

\begin{proposition}
    Let $\pt=\mathbb{P}(I_t=1|L_{t-1})$, $t\in \mathbb{N}$. Then the (unique) log-optimal strategy is given by 
    \begin{align}
        B_t^*(0)=(1-\pt)\frac{t+1}{t-L_{t-1}} \quad \text{and} \quad B_t^*(1)=\pt\frac{t+1}{L_{t-1}+1.} 
        \label{eq:log_opt_strategy}
    \end{align}
 \label{theo:log_optimal}
\end{proposition}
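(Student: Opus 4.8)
The plan is to reduce the stated (global) optimization to a family of independent one-round, one-dimensional concave problems, solve each explicitly, and read off the formula. First I would invoke the decomposition $\mathbb{E}[\log(W_T)]=\sum_{t=1}^T\mathbb{E}[\log(B_t(I_t))]$ from \eqref{eq:log_obj} together with the tower property to write $\mathbb{E}[\log(B_t(I_t))]=\mathbb{E}\big[\,\mathbb{E}[\log(B_t(I_t))\mid L_{t-1}]\,\big]$. Since $B_t$ may be chosen as an arbitrary function of the past $I_1,\dots,I_{t-1}$ --- equivalently, by exchangeability, of $L_{t-1}$ --- and since the choice at round $t$ does not restrict the feasible bets at later rounds (the constraint \eqref{eq:cond_bet} at round $s$ involves only $L_{s-1}$ and $B_s$, and $L_{s-1}$ does not depend on earlier bets), it suffices to maximize, for each value of $L_{t-1}$,
\[
g_t(B_t(0),B_t(1)):=\mathbb{E}[\log(B_t(I_t))\mid L_{t-1}]=(1-\pt)\log(B_t(0))+\pt\log(B_t(1))
\]
over $\{B_t(0),B_t(1)\ge 0\}$ intersected with the martingale constraint \eqref{eq:cond_bet}, adopting the convention $0\log 0:=0$ so that betting everything on a probability-one outcome is admissible.

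Next I would use that, because $L_{t-1}\in\{0,1,\dots,t-1\}$, both coefficients $\tfrac{t-L_{t-1}}{t+1}$ and $\tfrac{L_{t-1}+1}{t+1}$ in \eqref{eq:cond_bet} are strictly positive, so the feasible set is a nondegenerate line segment in the first quadrant with endpoints $\big(\tfrac{t+1}{t-L_{t-1}},0\big)$ and $\big(0,\tfrac{t+1}{L_{t-1}+1}\big)$; it contains $(1,1)$, hence is nonempty. Parametrizing by $x:=B_t(0)\in\big[0,\tfrac{t+1}{t-L_{t-1}}\big]$ through $B_t(1)=\tfrac{t+1}{L_{t-1}+1}-\tfrac{t-L_{t-1}}{L_{t-1}+1}\,x$ turns the problem into maximizing the single function $f(x):=(1-\pt)\log(x)+\pt\log\big(\tfrac{t+1}{L_{t-1}+1}-\tfrac{t-L_{t-1}}{L_{t-1}+1}\,x\big)$ on that interval. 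This $f$ is strictly concave, being a sum of compositions of the strictly concave $\log$ with non-constant affine maps (when $\pt\in\{0,1\}$ one summand vanishes, but the survivor is still strictly concave), so a maximizer exists and is unique.

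To identify it: for $\pt\in(0,1)$, $f$ tends to $-\infty$ at both endpoints of the interval, so the optimum is interior and solves $f'(x)=0$; clearing denominators in that equation and using $(1-\pt)+\pt=1$ collapses it to $x=(1-\pt)\tfrac{t+1}{t-L_{t-1}}$, whence $B_t(1)=\pt\tfrac{t+1}{L_{t-1}+1}$, which is exactly \eqref{eq:log_opt_strategy}. For the degenerate cases: if $\pt=0$ then $f(x)=\log(x)$ is strictly increasing, so the unique maximizer is the right endpoint $x=\tfrac{t+1}{t-L_{t-1}}$ with $B_t(1)=0$; if $\pt=1$ then $f$ is strictly decreasing, so the unique maximizer is the left endpoint $x=0$ with $B_t(1)=\tfrac{t+1}{L_{t-1}+1}$ --- and both agree with \eqref{eq:log_opt_strategy} evaluated at the corresponding $\pt$. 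This establishes existence and uniqueness at once, and maximizing each round's conditional objective maximizes $\mathbb{E}[\log(W_T)]$.

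I expect no genuine obstacle: the computation is routine once the per-round reduction is in place. The only points needing a little care are (i) justifying that optimizing each round's bet separately is legitimate --- which rests on the additive decomposition \eqref{eq:log_obj} and on the fact that $B_t$ may depend on $L_{t-1}$ while nothing binds across rounds --- and (ii) the boundary cases $\pt\in\{0,1\}$, where the maximizer sits at an endpoint of the constraint segment rather than at an interior stationary point, so one must also fix the convention $0\log 0=0$ to make "bet all on the certain outcome" well-defined. One could alternatively run a Lagrangian/KKT argument: stationarity forces $B_t(r)=\mathbb{P}(I_t=r\mid L_{t-1})/\mathbb{P}_{H_0}(I_t=r\mid L_{t-1})$ once the multiplier is pinned to $1$ by the constraint and $(1-\pt)+\pt=1$; but the one-variable substitution above is cleaner because it handles the boundary automatically.
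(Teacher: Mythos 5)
Your proposal is correct and follows essentially the same route as the paper: reduce to maximizing the per-round conditional expectation $\pt\log B_t(1)+(1-\pt)\log B_t(0)$ subject to the constraint \eqref{eq:cond_bet}, and identify the optimizer as the one proportional to $(\pt,1-\pt)$. The only difference is cosmetic --- the paper substitutes $b_t=B_t(1)(1+L_{t-1})/(t+1)$ and invokes Gibbs' inequality to get the maximizer and uniqueness in one line, whereas you solve the same one-dimensional concave problem by a first-order condition plus an explicit endpoint analysis for $\pt\in\{0,1\}$.
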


\begin{proof}
    Let $b_t= B_t(1)(1+L_{t-1})/(t+1)$ for some betting strategy $B_t$. We obtain by Gibbs' inequality
    \begin{align*}
        &\mathbb{E}[\log (B_t(I_t))|L_{t-1}] \\ &=  \pt \log (B_t(1)) + (1-\pt) \log (B_t(0)) \\
        &= \pt\log\left(\frac{t+1}{1+L_{t-1}}\right)+ (1-\pt)\log\left(\frac{t+1}{t-L_{t-1}}\right)+\pt\log(b_t)+(1-\pt)\log(1-b_t) \\
        &\leq \pt\log\left(\frac{t+1}{1+L_{t-1}}\right)+ (1-\pt)\log\left(\frac{t+1}{t-L_{t-1}}\right)+\pt\log(\pt)+(1-\pt)\log(1-\pt)\\ &=\mathbb{E}[\log (B_t^*(I_t))|L_{t-1}],
    \end{align*}
    with an equality iff $b_t=\pt$.
\end{proof}

Proposition~\ref{theo:log_optimal} shows that the log-optimal strategy is to choose the bets proportional to the probabilities $\pt$ and $1-\pt$. This strategy is also called Kelly gambling \citep{kelly1956new, breiman1961optimal}. Using the limiting permutation p-value $\plim$, we can also rewrite $\pt$ as
$$
\pt=\mathbb{P}(I_t=1|L_{t-1})=\mathbb{E}[\mathbb{P}(I_t=1|\plim)|L_{t-1}]=\mathbb{E}[\plim|L_{t-1}],$$
since $I_1,I_2,\dots$ are i.i.d. conditional on $\plim$ with $\mathbb{P}(I_t=1|\plim)=\plim$ (see Section~\ref{sec:resampling_risk}). Hence, the log-optimal strategy tries to learn the realized value of $\plim$ on our data set $X_0$ over time. If we would know the distribution of $\plim$, then we could apply the log-optimal strategy. For example, $\plim$ follows a uniform distribution on $[0,1]$ under $H_0$. Hence, $\plim|L_{t-1}=\ell$ follows a Beta distribution with parameters $\alpha=\ell+1$ and $\beta=t-\ell$ under $H_0$, which implies that $\pt=\mathbb{E}_{H_0}[\plim|L_{t-1}]=(L_{t-1}+1)/(t+1)$. Therefore, the log-optimal strategy under $H_0$ is the powerless passive betting strategy introduced in Section~\ref{sec:aggressive_strat}. This makes sense, as we should not be able gain wealth under the null. Under the alternative, $\plim$ is stochastically smaller than uniform. Therefore, $B_t^*(0)\geq B_t^*(1)$, meaning our wealth increases in case of a win (and decreases otherwise). Without additional information about the distribution of $\plim$ under $H_1$, it is not possible to pin down the log-optimal bet further.

% In practice, the distribution of $\plim$ under $H_1$ is usually unknown, however, one can use the above insights to replace $\pt$ by some reasonable choice. We follow this approach in the following to derive powerful betting strategies. 
In Appendix~\ref{appn:asymp_log_opt}, we characterize the asymptotic behavior of the log-optimal strategy which turns out to be quite different from the standard behavior observed in betting games. In particular, we prove that $B_t^*(I_t)\to 1$ almost surely for $t\to \infty$, implying that we can only accumulate wealth very slowly at late stages of the testing process.

\section{The binomial strategy}
\label{sec:our_strat}

We now use the oracle results from Section~\ref{sec:log-optimal} to define powerful (fully data-driven) strategies. Therefore, we choose $B_t(r)$ as the log-optimal strategy from \eqref{eq:log_opt_strategy}, where we replace $$\pt=\mathbb{P}(I_t=1|L_{t-1})=\mathbb{E}[\plim|L_{t-1}]$$ by some reasonable hyperparameter $p_t$. 
 Recall from the previous section that if we would know the distribution of $\plim$, the log-optimal bet would be to set $p_t=\mathbb{E}[\plim|L_{t-1}]$. However, such information is usually not available as permutation tests are used in non-parametric settings and the alternative $H_1$ is composite. An intuitive practical approach is therefore to choose some reasonable \enquote{working prior} density $f$ for $\plim$ instead and calculate the log-optimal betting strategy with respect to $f$, meaning to set $p_t=\mathbb{E}_{f}[\plim|L_{t-1}]$. We call this the \emph{mimicked log-optimal strategy}. Note that if $f$ is not the true density of $\plim$, then the mimicked log-optimal strategy is still valid (meaning that it results in valid e-values and p-values at any stopping time), since it is valid for all $p_t\in [0,1]$ due to \eqref{eq:cond_bet}. However, it would not be log-optimal in this case. 

The mimicked log-optimal strategy requires to calculate the posterior distribution $f_{\plim|L_{t-1}}$ of $\plim$ given the working prior $f$ and the previous number of losses $L_{t-1}$ to then set  $p_t=\mathbb{E}_{f_{\plim}|L_{t-1}}[\plim]$ at each step. This is computationally demanding and thus not well suited for our sequential permutation tests. For this reason, we begin with considering a constant working prior $\plim=\pchoice$ for some fixed $\pchoice \in [0,1]$ in this section, which does not require any updating. Note that this working prior never describes the true distribution of $\plim$, as it assumes that the observed test statistic $Y_0$ is fixed. However, it is a simple starting point and already bears interesting results. In particular, it yields a simple closed form for the wealth. In Section~\ref{sec:average_wealth}, we show how this can be used to derive closed forms for more complex working priors $f$, thus avoiding the computational burden of updating the posterior distribution. For interpretation, one can think of the working prior $\plim=\pchoice$ as our choice if we cheated and peeked at the actually realized value $y_0$ of $Y_0$ before choosing a betting strategy. In this case the only randomness comes from the random sampling mechanism.

\subsection{Closed-form wealth of the binomial strategy}

The binomial strategy is defined by \eqref{eq:log_opt_strategy} with some user-chosen constant $\pchoice \in [0,1]$ plugged in place of the unknown $\pt$. The wealth of the binomial strategy after $T$ permutations has a simple closed form, as we will show in the following.

\begin{proposition}
    Using the binomial strategy with parameter $\pchoice \in [0,1]$, after $T$ permutations and $\ell$ losses, the wealth equals \[W_T^\pchoice(\ell)=(T+1) \pchoice^\ell(1-\pchoice)^{T-\ell} {T \choose \ell},\]
    which does not depend on the order of the $\ell$ losses.
    \label{prop:wealth_log_optimal}
\end{proposition}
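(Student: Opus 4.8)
The plan is to compute the product $W_T = \prod_{t=1}^T B_t(I_t)$ directly, grouping the factors coming from loss-rounds and win-rounds separately and exploiting the explicit form of the log-optimal multipliers from \eqref{eq:log_opt_strategy}. With the binomial strategy, $B_t(0) = (1-\pchoice)\frac{t+1}{t - L_{t-1}}$ and $B_t(1) = \pchoice\frac{t+1}{L_{t-1}+1}$. Every one of the $T$ rounds contributes a numerator factor $t+1$, so these multiply to $\prod_{t=1}^T (t+1) = (T+1)!$; every loss contributes a factor $\pchoice$ and every win a factor $1-\pchoice$, so (since there are $\ell$ losses among $T$ rounds) these contribute $\pchoice^\ell (1-\pchoice)^{T-\ell}$. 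It then remains only to evaluate the product of the denominators.

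The key observation is that each denominator depends only on how many losses (resp.\ wins) have occurred so far, not on when. If the losses occur at rounds $t_1 < \cdots < t_\ell$, then at round $t_j$ we have $L_{t_j-1} = j-1$, so the loss-denominator equals $L_{t_j-1}+1 = j$; multiplying over $j = 1,\dots,\ell$ gives $\ell!$. For a win at round $t$ the denominator is $t - L_{t-1}$, and since $t-1-L_{t-1}$ is exactly the number of wins among rounds $1,\dots,t-1$, this equals $1 + \#\{\text{wins in rounds } 1,\dots,t-1\}$; hence the $k$-th win has denominator $k$, and multiplying over the $T-\ell$ wins gives $(T-\ell)!$. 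Combining the pieces,
\[
W_T = \frac{(T+1)!}{\ell!\,(T-\ell)!}\,\pchoice^\ell (1-\pchoice)^{T-\ell} = (T+1)\binom{T}{\ell}\pchoice^\ell (1-\pchoice)^{T-\ell},
\]
which manifestly depends on $\ell$ and on $T$ but not on the order in which the losses occurred.

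There is essentially no hard step; the only thing to be careful about is the bookkeeping identity $t - L_{t-1} = 1 + \#\{\text{wins in rounds } 1,\dots,t-1\}$, which is what makes the denominator product telescope into a factorial. An equally clean alternative would be a short induction on $T$: the base case $W_0 = 1$ is immediate, and in the inductive step one checks separately that multiplying $T\binom{T-1}{\ell}\pchoice^\ell(1-\pchoice)^{T-1-\ell}$ by $B_T(0)$ (loss count unchanged) or multiplying $T\binom{T-1}{\ell-1}\pchoice^{\ell-1}(1-\pchoice)^{T-\ell}$ by $B_T(1)$ (loss count rising to $\ell$) both yield $(T+1)\binom{T}{\ell}\pchoice^\ell(1-\pchoice)^{T-\ell}$, using $\frac{T}{T-\ell}\binom{T-1}{\ell} = \binom{T}{\ell}$ and $\frac{T}{\ell}\binom{T-1}{\ell-1} = \binom{T}{\ell}$. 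I would present the direct product computation as the main argument, since it also transparently explains the claimed order-independence.
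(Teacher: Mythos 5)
Your proof is correct and follows essentially the same route as the paper's: both compute the product $\prod_{t=1}^T B_t(I_t)$ directly, noting that each round contributes a numerator $t+1$ and a factor of $\pchoice$ or $1-\pchoice$, and that the denominator at the $j$-th loss equals $j$ while at the $k$-th win it equals $k$, yielding $\ell!\,(T-\ell)!$. Your write-up is somewhat more explicit about the bookkeeping identity $t - L_{t-1} = 1 + \#\{\text{wins before } t\}$, but the argument is the same.
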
 
\begin{proof}
    We need show that $W_T^{\pchoice}(\ell)=\prod_{t=1}^{T} B_t(I_t)=\pchoice^\ell(1-\pchoice)^{T-\ell}(T+1)!/(\ell!(T-\ell)!)$. The numerator follows immediately from the definition of $B_t(I_t)$. For the denominator, realize that if there were $s\geq 0$ losses before step $t$, it holds $L_{t-1}+1=s+1$, and if there were $r\geq 0$ wins, it holds $t-L_{t-1}=r+1$, independently from the number of previous wins and losses, respectively. Since we assumed that $\ell$ losses and $T-\ell$ wins happened, the above formula follows. 
\end{proof}

Note that $W_T^{\pchoice}(\ell)$ equals $T+1$ times the binomial likelihood of observing $\ell$ successes in $T$ trials of probability $\pchoice$. The name of our strategy was motivated by this fact. It is known that the binomial likelihood is maximised by the oracle choice of $\pmax=\ell/T$, yielding the next result.

\begin{proposition}
    The wealth $W_T^\pchoice(\ell)$ is maximized by the choice $\pchoice=\pmax=\ell/T$. We call this the oracle binomial strategy, as it cannot be specified in advance of seeing the data.  If $0<\ell<T$, the oracle binomial strategy achieves a wealth of $W_T^{\pmax}(\ell)\geq  T \sqrt{(T+1)/(\ell(T-\ell)2\pi e^{1/(6\ell)})}$, which grows linearly in $T$ for constant $\ell$, but is always $\geq \sqrt{2/(\pi e^{1/6})}\sqrt{T+1}$ for any $\ell$.\label{prop:max_wealth_p}
\end{proposition}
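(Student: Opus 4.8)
The statement has two independent pieces: identifying the maximizer $\pmax=\ell/T$, and bounding the resulting wealth. For the first piece I would argue as the sentence preceding the proposition already hints: the only $\pchoice$‑dependent factor of $W_T^\pchoice(\ell)=(T+1)\binom{T}{\ell}\pchoice^\ell(1-\pchoice)^{T-\ell}$ is $\pchoice^\ell(1-\pchoice)^{T-\ell}$, whose logarithm $\ell\log\pchoice+(T-\ell)\log(1-\pchoice)$ is strictly concave on $(0,1)$ with a unique stationary point at $\pchoice=\ell/T$; the boundary cases $\ell=0$ and $\ell=T$ are checked directly and also give $\pchoice=\ell/T$. Substituting $\pchoice=\ell/T$ into the closed form of Proposition~\ref{prop:wealth_log_optimal} gives $W_T^{\pmax}(\ell)=(T+1)\binom{T}{\ell}(\ell/T)^\ell(1-\ell/T)^{T-\ell}$, which is the quantity to be bounded.

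For the main bound (the case $0<\ell<T$) the plan is to invoke the standard non‑asymptotic Stirling inequalities $\sqrt{2\pi n}\,(n/e)^n\le n!\le \sqrt{2\pi n}\,(n/e)^n e^{1/(12n)}$ for $n\in\{T,\ell,T-\ell\}$. In $\binom{T}{\ell}=T!/(\ell!\,(T-\ell)!)$ one lower‑bounds $T!$ and upper‑bounds $\ell!$ and $(T-\ell)!$; the resulting powers of $T$, $\ell$, $T-\ell$ and $e$ cancel precisely against the factors $(\ell/T)^\ell(1-\ell/T)^{T-\ell}$, leaving
\[
W_T^{\pmax}(\ell)\;\ge\;(T+1)\,\sqrt{\tfrac{T}{2\pi\,\ell\,(T-\ell)}}\;e^{-\frac1{12\ell}-\frac1{12(T-\ell)}}.
\]
To reach the claimed form I would first use the symmetry $W_T^{\pmax}(\ell)=W_T^{\pmax}(T-\ell)$, immediate from the closed form, to assume without loss of generality $\ell\le T/2$, so $T-\ell\ge T/2$. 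Then write $(T+1)\sqrt{T}=T\sqrt{T+1}\,\sqrt{1+1/T}$ and absorb the stray factor $e^{-1/(12(T-\ell))}$ into the slack $\sqrt{1+1/T}$; this reduces to the elementary inequality $3T\log(1+1/T)\ge 1$, true for every $T\ge 1$. This produces $W_T^{\pmax}(\ell)\ge T\sqrt{(T+1)/(\ell(T-\ell)\,2\pi\,e^{1/(6\ell)})}$, and the linear‑in‑$T$ growth for fixed $\ell$ is read off the closed form.

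For the uniform bound ``$\ge\sqrt{2/(\pi e^{1/6})}\sqrt{T+1}$ for any $\ell$'' I would start from the same displayed inequality (no symmetry reduction needed): bound $\ell(T-\ell)\le T^2/4$ by AM--GM, note that the two Stirling error terms combine as $\tfrac1{12\ell}+\tfrac1{12(T-\ell)}=\tfrac{T}{12\ell(T-\ell)}=:u/6$ with $u=T/(\ell(T-\ell))\in[4/T,2]$, and use that $u\mapsto u e^{-u/6}$ is increasing on $[0,6]$ together with $(T+1)/T\ge1$; this again collapses to a one‑variable inequality in $T$. The two boundary cases $\ell\in\{0,T\}$ give $W_T^{\pmax}(\ell)=T+1$, which trivially dominates the bound.

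The step I expect to be the main obstacle is keeping the Stirling error terms under control. A naive application leaves a symmetric penalty $e^{-1/(12(T-\ell))}$ in the exponent that is genuinely of constant order when $\ell$ is close to $T$, so it cannot simply be discarded. The key observation is that the gap between the $(T+1)$ prefactor of $W_T^{\pmax}(\ell)$ and the $T\sqrt{T+1}$ appearing in the target is exactly the $\sqrt{1+1/T}$ needed to swallow this penalty --- but only after the symmetry $\ell\leftrightarrow T-\ell$ is used to put the bad term on the larger of $\ell,T-\ell$. Everything after that observation is routine calculus.
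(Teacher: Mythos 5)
Your overall route is the same as the paper's (non-asymptotic Stirling bounds on the factorials in $\binom{T}{\ell}$), and several of your steps are actually cleaner: applying Stirling to $T!$ rather than $(T+1)!$ gives your intermediate display $W_T^{\pmax}(\ell)\ge(T+1)\sqrt{T/(2\pi\ell(T-\ell))}\,e^{-1/(12\ell)-1/(12(T-\ell))}$ directly, without the paper's detour through $(1+1/T)^{T}\ge e/(1+1/T)$. The maximizer argument is fine, and your derivation of the uniform bound $\ge\sqrt{2/(\pi e^{1/6})}\sqrt{T+1}$ via $u=T/(\ell(T-\ell))$ and monotonicity of $u\mapsto ue^{-cu}$ also works, up to a bookkeeping slip ($\tfrac1{12\ell}+\tfrac1{12(T-\ell)}=\tfrac{u}{12}$, not $\tfrac{u}{6}$) and provided you keep the factor $\sqrt{(T+1)/T}$ rather than discarding it: the final one-variable inequality is $\tfrac12\log(1+1/T)\ge\tfrac1{3T}-\tfrac1{12}$, which holds for every integer $T\ge2$ but would fail for $T\le 3$ if you replaced $\sqrt{(T+1)/T}$ by $1$.

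The genuine gap is the ``WLOG $\ell\le T/2$ by symmetry'' step for the main bound. The target $T\sqrt{(T+1)/(\ell(T-\ell)2\pi e^{1/(6\ell)})}$ is \emph{not} symmetric under $\ell\leftrightarrow T-\ell$: for $\ell>T/2$, the identity $W_T^{\pmax}(\ell)=W_T^{\pmax}(T-\ell)$ only yields the bound with $e^{1/(6(T-\ell))}$ in place of $e^{1/(6\ell)}$, which is strictly weaker since $T-\ell<\ell$. This gap cannot be closed, because the stated inequality is actually false in that regime: for $T=10$, $\ell=9$ one has $W_{10}^{0.9}(9)=11\cdot 10\cdot 0.9^{9}\cdot 0.1\approx 4.262$, whereas the claimed lower bound is $10\sqrt{11/(9\cdot 2\pi e^{1/54})}\approx 4.370$; more starkly, for $\ell=T-1$ the wealth is asymptotically $(T+1)/e$ while the bound is asymptotically $T/\sqrt{2\pi}>T/e$. (The paper's own proof has the same defect: it silently replaces $e^{-1/(12\ell)-1/(12(T-\ell))}$ by $e^{-1/(6\ell)}$, which is legitimate only when $\ell\le T-\ell$.) What your argument correctly establishes is the statement with $e^{1/(6\ell)}$ replaced by $e^{1/(6\min(\ell,T-\ell))}$, or equivalently by $e^{1/(12\ell)+1/(12(T-\ell))}$, and that corrected version suffices for everything downstream — Proposition~\ref{prop:always_reject} only invokes the bound with $\ell/\tilde T=\pchoice\approx\alpha/\sqrt{2\pi e^{1/6}}\ll 1/2$. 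You should state the restriction $\ell\le T-\ell$ (or the corrected exponent) explicitly instead of hiding it behind an invalid symmetry reduction.
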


\begin{proof}
     We have, by Stirlings's approximation:
    \begin{align*}
        W_T^{\pmax}(\ell)&=(\pmax)^\ell(1-\pmax)^{T-\ell}(T+1)!/(\ell!(T-\ell)!) \\
        &= \left(\frac{\ell^\ell(T-\ell)^{T-\ell}}{T^T}\right) (T+1)!/(\ell!(T-\ell)!) \\
        &\geq \left(\frac{\ell^\ell(T-\ell)^{T-\ell}}{T^T}\right) \frac{1}{e}\frac{(T+1)^{T+1}}{\ell^\ell (T-\ell)^{T-\ell}}  \sqrt{\frac{T+1}{\ell(T-\ell) 2 \pi e^{1/(6\ell)}}} \\
        &\geq T \sqrt{\frac{T+1}{\ell(T-\ell) 2 \pi e^{1/(6\ell)}}},
    \end{align*}
    where we used $(1+(1/T))^{T}\geq e/(1+1/T)$ in the second inequality. The term $\ell(T-\ell)$ is minimized for $\ell=T/2$ and $e^{1/(6\ell)}$ for $\ell=1$ such that we obtain $W_T^{\pmax}(\ell)\geq   \sqrt{(T+1)2/(\pi e^{1/6})}$.
\end{proof}

% Proposition~\ref{prop:wealth_log_optimal} shows that the wealth obtained by the binomial strategy equals $B+1$ times the likelihood of a binomial distribution with parameter $\pchoice$ and $k$ successes in $B$ trials. In particular, this implies that the wealth does not depend on the order the losses occur, but solely on the number of losses and permutations, and that it is maximized by $\pmax=k/B$.
Of course, $\pmax$ is not known to us in advance,  but if it were, then Proposition~\ref{prop:max_wealth_p} shows that we always end up with a wealth of at least $\sqrt{2/(\pi e^{1/6})}\sqrt{T+1}$, and it can be much larger if $\ell\ll T/2$ or $\ell\gg T/2$ (growing linearly in $T$ in that case). This might be surprising, as it also holds under $H_0$. However, note that even if we believe $H_0$ is true, we do not know $\pmax$ in advance. 

\subsection{A theoretically grounded choice for binomial parameter $p$}

Comparing the binomial strategy with the extreme ones in Section~\ref{sec:aggressive_strat}, the parameter $\pchoice$ can be interpreted as the aggressiveness with which we bet. The lower the $\pchoice$, the more aggressive we bet on the alternative. In the following, we introduce a choice of $\pchoice$ that adapts to the level $\alpha\in (0,1)$ the considered hypothesis is tested at and which offers a good compromise between aggressiveness and safety.
% For this, note that we are usually not interested in the wealth after $B$ permutations but the maximum wealth during the entire process, since we can reject the hypothesis as soon as $W_B\geq 1/\alpha$. Therefore, assume that we do not fix the number of permutations $B$ in advance but can always generate additional data as long as $W_B< 1/\alpha$.

% \begin{lemma}\label{lemma:strong_law}
%     Let $\pchoice=1/N$ for some $N\in \mathbb{N}$. If $0<\ptrue<\pchoice$ and $r_N\geq 1$, where $r_N$ is the number of losses after $N$ permutations, there exists  $B\in \mathbb{N}$ a.s. such that $\pchoice=r_B/B$.
% \end{lemma}
% \begin{proof}
%     Note that $r_B/B \xrightarrow{\text{a.s.}} \ptrue$ by the strong law of large numbers. Therefore, there exists  $b\in \mathbb{N}$ a.s. such that $r_b/b\leq \pchoice$ and $r_b\geq 1$. Define $B=\min\{b\in \mathbb{N}: r_b/b\leq \pchoice, r_b\geq 1\}$. Note that $r_B/(Nr_B)=\pchoice$. Therefore, $Nr_B\leq B$ and $r_{Nr_B}\leq r_B$ (otherwise $r_B/B>\pchoice$). Since $r_{Nr_b}/(Nr_B)\leq r_{B}/(Nr_B) = \pchoice$ and $r_{Nr_b}\geq r_N \geq 1$, we have $B=Nr_B=r_B/\pchoice$.
% \end{proof}

\begin{proposition}
    Let $N=\lceil{\sqrt{2 \pi e^{1/6}}/\alpha}\rceil$ and $\pchoice=1/N$. If $\pperm_T\leq \pchoice$ for some $T\in \mathbb{N}$, then there exists  $\tilde{T}\leq T$ such that  $W_{\tilde{T}}^{\pchoice}\geq  1/\alpha$. \label{prop:always_reject}
\end{proposition}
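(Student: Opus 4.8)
The plan is to single out, for each realized loss-path, one particularly favourable stopping time $\tilde T\le T$ at which the binomial wealth $W^{\pchoice}$ already sits at its peak value for the number of losses accrued, and then to verify that this peak value is always at least $1/\alpha$ under the choice $\pchoice=1/N$. Throughout write $\pperm_t=(1+L_t)/(t+1)$; note that since $\alpha<1$ the constant $N=\lceil\sqrt{2\pi e^{1/6}}/\alpha\rceil$ satisfies $N\ge3$, and since $\sqrt{2\pi e^{1/6}}>e$ (i.e.\ $2\pi>e^{11/6}$) we have $N\alpha\ge\sqrt{2\pi e^{1/6}}>e$, hence $1/\alpha\le N/\sqrt{2\pi e^{1/6}}<N/e$.

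\textit{Step 1: the favourable stopping time.} Since $\pperm_0=1>\pchoice$ and, by hypothesis, $\pperm_T\le\pchoice$, the time $\tilde T:=\min\{1\le t\le T:\pperm_t\le\pchoice\}$ is well defined, lies in $\{1,\dots,T\}$, and satisfies $\pperm_{\tilde T-1}>\pchoice\ge\pperm_{\tilde T}$. A one-line computation gives $\pperm_t-\pperm_{t-1}=(t-L_{t-1}-1)/(t(t+1))\ge0$ whenever $I_t=1$ (losses never decrease $\pperm$), so the strict drop at time $\tilde T$ forces $I_{\tilde T}=0$ and $L_{\tilde T}=L_{\tilde T-1}=:\ell$. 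Writing out the two defining inequalities, $(\ell+1)/\tilde T>1/N\ge(\ell+1)/(\tilde T+1)$, which pins down $\tilde T=N(\ell+1)-1$. By Proposition~\ref{prop:wealth_log_optimal} (the closed form depends only on the number of losses), $W_{\tilde T}^{\pchoice}=W_{N(\ell+1)-1}^{1/N}(\ell)$, so it remains to show $W_{N(\ell+1)-1}^{1/N}(\ell)\ge1/\alpha$ for every integer $\ell\ge0$.

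\textit{Step 2: the peak wealth is large enough.} For $\ell=0$ this is $W_{N-1}^{1/N}(0)=N(1-1/N)^{N-1}\ge N/e\ge1/\alpha$, using $(1-1/N)^{N-1}\ge e^{-1}$ (equivalently $(1-x)\ln(1-x)+x\ge0$ on $[0,1)$). For $\ell\ge1$ put $n:=N(\ell+1)-1$ and compare with the oracle binomial parameter $\pmax=\ell/n$; rearranging the closed form of Proposition~\ref{prop:wealth_log_optimal} gives $W_n^{1/N}(\ell)=W_n^{\pmax}(\ell)\exp(-n\,\mathrm{KL}(\mathrm{Ber}(\ell/n)\,\|\,\mathrm{Ber}(1/N)))$. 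The elementary bound $\mathrm{KL}(\mathrm{Ber}(a)\|\mathrm{Ber}(b))\le(a-b)^2/(b(1-b))$ with $a=\ell/n,b=1/N$ gives $n\,\mathrm{KL}\le(N-1)/n\le(N-1)/(2N-1)<\tfrac12$, so $W_n^{1/N}(\ell)\ge e^{-1/2}W_n^{\pmax}(\ell)$; Proposition~\ref{prop:max_wealth_p} then lower-bounds $W_n^{\pmax}(\ell)\ge n\sqrt{(n+1)/(\ell(n-\ell)\,2\pi e^{1/(6\ell)})}$. Substituting $n+1=N(\ell+1)$ and $n-\ell=(N-1)(\ell+1)$, using $1/\alpha\le N/\sqrt{2\pi e^{1/6}}$, and squaring, the required bound becomes the purely arithmetic statement
\[(N(\ell+1)-1)^2\ \ge\ N\ell(N-1)\,e^{5/6+1/(6\ell)}.\]
As $e^{5/6+1/(6\ell)}\le e<4$, this holds for $\ell\in\{1,2\}$ by inspection ($(2N-1)^2>4N(N-1)$ and $(3N-1)^2>6N(N-1)$), and for $\ell\ge3$ it follows from $(N(\ell+1)-1)^2\ge(N\ell)^2$ together with $N\ell\ge3N>e(N-1)\ge e^{5/6+1/(6\ell)}(N-1)$.

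\textit{The main obstacle.} Everything hinges on the constants. The peak binomial wealth at $\ell$ losses is only of order $N\sqrt{\ell+1}$, while the target $1/\alpha$ is of order $N$, so the margin is thinnest precisely at $\ell=1$ and $N=3$, where the displayed inequality reads $25\ge6e\approx16.3$. It closes only because we (i) use the sharp Stirling bound of Proposition~\ref{prop:max_wealth_p} instead of the cruder $\Theta(\sqrt{n})$ bound, (ii) correct the oracle comparison via a $\mathrm{KL}$ term rather than a crude $e^{-1}$ factor, and (iii) take the constant in the definition of $N$ to be exactly $\sqrt{2\pi e^{1/6}}$, which guarantees $\sqrt{2\pi e^{1/6}}>e$. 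Getting these three to line up simultaneously is the only real work; the rest is bookkeeping.
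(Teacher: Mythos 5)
Your proof is correct, and it reaches the conclusion by a genuinely different route from the paper's at the one place where a real idea is needed: the choice of stopping time. The paper's Lemma~\ref{lemma:strong_law} locates a time $\tilde T$ at which the empirical loss fraction $\ell_{\tilde T}/\tilde T$ equals $\pchoice$ \emph{exactly} (so $\tilde T = N\ell_{\tilde T}$); there the binomial wealth coincides with the oracle wealth $W^{\pmax}$, and the Stirling bound of Proposition~\ref{prop:max_wealth_p} immediately gives $W_{\tilde T}^{\pchoice}\ge 1/(\pchoice\sqrt{2\pi e^{1/6}})\ge 1/\alpha$, with the zero-loss case handled separately via $N(1-1/N)^{N-1}\ge N/e$. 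You instead stop at the first time the running permutation p-value drops to $\pchoice$, which you correctly pin down as $\tilde T=N(\ell+1)-1$ (your monotonicity observation that losses never decrease $\pperm_t$, forcing $I_{\tilde T}=0$, is right). The price is that $\ell/\tilde T\neq\pchoice$, so you must compare to the oracle via the exact identity $W_n^{\pchoice}(\ell)=W_n^{\pmax}(\ell)e^{-n\,\mathrm{KL}}$ and the chi-square bound $\mathrm{KL}(\mathrm{Ber}(a)\,\Vert\,\mathrm{Ber}(b))\le (a-b)^2/(b(1-b))$, which with $n=N(\ell+1)-1$ gives $n\,\mathrm{KL}\le (N-1)/(2N-1)<1/2$ — all of which checks out, as does the final arithmetic $(N(\ell+1)-1)^2\ge N\ell(N-1)e^{5/6+1/(6\ell)}$ and its case split on $\ell\in\{1,2\}$ versus $\ell\ge 3$. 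What each approach buys: the paper's exact-hitting-time lemma makes the wealth bound a one-liner but the lemma itself is slightly opaque; your stopping time is completely explicit and arguably more natural (it is literally the first time $\pperm_t\le\pchoice$), but costs you the $e^{-1/2}$ slack and the closing case analysis. Both proofs rest on the same two pillars — Proposition~\ref{prop:max_wealth_p} and the inequality $\sqrt{2\pi e^{1/6}}>e$ — and your $\ell=0$ case is identical to the paper's.
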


Before we prove Proposition~\ref{prop:always_reject}, we first we need the following lemma.

\begin{lemma}\label{lemma:strong_law}
    Let $\pchoice=1/N$ for some $N\in \mathbb{N}$, and let $\ell_N$ denote the realized number of losses after $N$ permutations. If $\pmax\leq \pchoice$ for some $T\geq N$ and $\ell_N\geq 1$, then there exists  $\tilde{T}\leq T$ such that $\pchoice=\ell_{\tilde{T}}/\tilde{T}$.
\end{lemma}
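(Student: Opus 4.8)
The plan is to show that the (discrete) trajectory of the running ratio $L_t/t$ is forced to pass through the value $\pchoice = 1/N$ at some integer time $\tilde T \le T$, by a discrete intermediate-value argument. The key structural fact is that $L_t$ increases by exactly $0$ or $1$ at each step, so the ratio $L_t/t$ can only change by controlled amounts, and in particular $L_{\tilde T}/\tilde T = 1/N$ is equivalent to $\tilde T = N \cdot L_{\tilde T}$, i.e.\ the time is an integer multiple of the number of losses, with multiplier exactly $N$.

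First I would set up the boundary values of the comparison. At time $t = N$, since $\ell_N \ge 1$, we have $L_N/N = \ell_N/N \ge 1/N = \pchoice$. On the other hand, the hypothesis $\pmax = \ell_T/T \le \pchoice$ (using the notation $\pmax = \ell/T$ from Proposition~\ref{prop:max_wealth_p} applied at the final time $T$) says $L_T/T \le \pchoice$ at time $T \ge N$. So the ratio $L_t/t$ is $\ge \pchoice$ at $t = N$ and $\le \pchoice$ at $t = T$; I want to extract an intermediate time where it equals $\pchoice$ exactly.

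The heart of the argument is a discrete intermediate value statement: if $a_t := N L_t - t$ satisfies $a_N \ge 0$ and $a_T \le 0$, and $a_{t+1} - a_t \in \{N \cdot 0 - 1, N \cdot 1 - 1\} = \{-1, N-1\}$, then there is some $\tilde T \in \{N, \dots, T\}$ with $a_{\tilde T} = 0$. This is not automatic for arbitrary integer sequences whose increments are $-1$ or $N-1$ — e.g.\ a sequence could jump from $+k$ to $+k + (N-1)$ and skip $0$ — so I would instead track the first time $t^\ast$ at which $a_{t^\ast} \le 0$. Since $a_N \ge 0$, either $a_N = 0$ (done, take $\tilde T = N$) or $t^\ast > N$ exists and is $\le T$. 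By minimality, $a_{t^\ast - 1} \ge 1$ (strictly positive, in fact $\ge 1$ since integer-valued and not $\le 0$). Now the step from $t^\ast - 1$ to $t^\ast$ is either a win (increment $-1$) or a loss (increment $N - 1$). If it is a win, $a_{t^\ast} = a_{t^\ast - 1} - 1 \ge 0$, combined with $a_{t^\ast} \le 0$ this forces $a_{t^\ast} = 0$ and we are done with $\tilde T = t^\ast$. The potential obstacle is the loss case: if the step is a loss, $a_{t^\ast} = a_{t^\ast - 1} + (N-1) \ge N > 0$, contradicting $a_{t^\ast} \le 0$ — so the loss case is in fact impossible, and only the win case can occur. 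Hence $\tilde T = t^\ast$ works, and $L_{\tilde T}/\tilde T = \pchoice$.

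I expect the only genuinely delicate point is making the "first crossing time" argument airtight — in particular noting that at a loss step the jump is upward by $N-1 \ge 0$, so a loss can never be the step that brings $a$ from positive to nonpositive, which is exactly why the crossing must be clean (land exactly on $0$) rather than overshooting. The rest (translating $L_t/t = \pchoice$ into $\tilde T = N \ell_{\tilde T}$, and reading off the boundary inequalities from $\ell_N \ge 1$ and $\pmax \le \pchoice$) is routine bookkeeping. One subtlety worth a sentence: I should confirm $t^\ast \ge N$ so that $\tilde T \ge N$, which is immediate since $a_N \ge 0$ means the first time $a$ is $\le 0$ without being $=0$ is strictly after $N$, and if $a_N=0$ we already took $\tilde T = N$.
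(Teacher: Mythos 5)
Your proof is correct, and it reaches the conclusion by a different mechanism than the paper's. Both arguments hinge on essentially the same first-passage time: your $t^\ast$, the first $t\ge N$ with $NL_t-t\le 0$, coincides with the paper's $\tilde T=\min\{t\in\mathbb{N}:\ell_t/t\le\pchoice,\ \ell_t\ge 1\}$ (for $t\ge N$ one automatically has $\ell_t\ge\ell_N\ge 1$, and $\ell_t/t\le 1/N$ with $\ell_t\ge 1$ forces $t\ge N\ell_t\ge N$). Where you differ is in showing that the ratio lands exactly on $\pchoice$ at that time. The paper argues globally by minimality: $N\ell_{\tilde T}$ is exhibited as another index in the defining set satisfying $N\ell_{\tilde T}\le\tilde T$, so it must equal $\tilde T$, giving $\ell_{\tilde T}/\tilde T=1/N$. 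You argue locally at the crossing step: the increment of $a_t=NL_t-t$ is $-1$ on a win and $N-1\ge 0$ on a loss, so the step that takes $a$ from positive to nonpositive must be a win of size $-1$ and therefore lands exactly on $0$. Your version makes the ``no overshoot'' phenomenon explicit and is arguably the more transparent discrete intermediate-value argument; the paper's version avoids any step-by-step analysis at the cost of a somewhat more opaque self-referential minimality trick. The boundary bookkeeping in your write-up ($a_N\ge 0$ from $\ell_N\ge 1$, $a_T\le 0$ from $\pmax\le\pchoice$, $t^\ast\le T$, and the restriction of the search to $t\ge N$ so that minimality gives $a_{t^\ast-1}\ge 1$) is all handled correctly, including the edge case $N=1$, where a loss step has increment $0$ and still cannot produce the downward crossing.
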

\begin{proof}[Proof of Lemma~\ref{lemma:strong_law}]
     Let $\tilde{T}=\min\{t\in \mathbb{N}: \ell_t/t\leq \pchoice, \ell_t\geq 1\}$.  Note that $\ell_{\tilde{T}}/(N\ell_{\tilde{T}})=\pchoice$. Therefore, $N\ell_{\tilde{T}}\leq \tilde{T}$ (otherwise $\ell_{\tilde{T}}/\tilde{T}>\pchoice$) and thus $\ell_{N\ell_{\tilde{T}}}\leq \ell_{\tilde{T}}$. Since $\ell_{N\ell_{\tilde{T}}}/(N\ell_{\tilde{T}})\leq \ell_{\tilde{T}}/(N\ell_{\tilde{T}}) = \pchoice$ and $\ell_{N\ell_{\tilde{T}}}\geq \ell_N \geq 1$, we have $\tilde{T}=N\ell_{\tilde{T}}=\ell_{\tilde{T}}/\pchoice$.
\end{proof}

\begin{proof}[Proof of Proposition~\ref{prop:always_reject}]
     First, assume that $\ell_N\geq 1$. Since $\ell_T/T\leq \pperm_T$ for all $T\in \mathbb{N}$, Lemma~\ref{lemma:strong_law} shows that there exists  $\tilde{T}\leq T$ such that $\pchoice=\ell_{\tilde{T}}/\tilde{T}=1/N$. Using the first bound of Proposition~\ref{prop:max_wealth_p}, we obtain \begin{align*}
        W_{\tilde{T}}^{\pchoice}(\ell_{\tilde{T}})&\geq \tilde{T} \sqrt{(\tilde{T}+1)/(\ell_{\tilde{T}}(\tilde{T}-\ell_{\tilde{T}})2\pi e^{1/(6\ell_{\tilde{T}})})} \\
        &= \sqrt{(\ell_{\tilde{T}}+\pchoice)/(\pchoice^2(1-\pchoice)2\pi e^{1/(6\ell_{\tilde{T}})})} \\
        &\geq \frac{1}{\pchoice\sqrt{2\pi e^{1/6}}} \quad \geq 1/\alpha.
    \end{align*}
    Now, consider the case $\ell_N=0$. The minimal $T\in \mathbb{N}$ with $\pperm_T\leq \pchoice$ is $T=N-1$.  With Proposition $\ref{prop:wealth_log_optimal}$ and $(1-1/N)^{N-1}\geq 1/e$, it follows 
    \begin{align*}
        W_{N-1}^{\pchoice}(0)=N\left(1-\frac{1}{N}\right)^{N-1}\geq \frac{N}{e} \geq \frac{\sqrt{2 \pi e^{\frac{1}{6}}}}{e\alpha} \geq \frac{1}{\alpha},
    \end{align*}
    since $\sqrt{2 \pi e^{1/6}}\geq e$. \end{proof}
    % From Proposition~\ref{prop:wealth_log_optimal}, we have $W_N^{r_N=0}=(N+1)(1-q)^N=W_N^{r_N=1} (1-q)$, since $q=1/N$. With the calculation above, we obtain 
    % \begin{align*}
    %     W_N^{r_N=0}&\geq (1-q)\sqrt{(1+q)/(q^2(1-q)2\pi e^{1/6})} \\
    %     &= \frac{1}{q\sqrt{2\pi e^{1/6}}} \sqrt{(1+q)(1-q)} \\
    %     &\geq  \sqrt{1-q^2}/\alpha.
        
    %\end{align*}

Proposition~\ref{prop:always_reject} shows that for the defined $\pchoice$ the only possibility such that $\pperm_T$ rejects the hypothesis and the binomial strategy does not reject earlier is that $\alpha/\sqrt{2\pi e^{1/6}}<\pperm_T\leq \alpha$. Therefore, the factor $1/\sqrt{2\pi e^{1/6}}$ can be interpreted as loss we need to incorporate due to the sequential nature of our strategy. Note that some kind of loss is needed, as $\pperm_T$ is not valid for random stopping times; without the loss factor, Proposition~\ref{prop:always_reject} would imply that our strategy always rejects earlier than the non-valid $\pperm_T$. Since $\pperm_t|\{\plim=\ptrue\} \stackrel{a.s.}{\to} \ptrue $ for $t\to \infty$, Proposition \ref{prop:always_reject} immediately implies the following result.
% \begin{corol}\label{rem:rr-binom}
% If $\pchoice=1/\lceil{\sqrt{2 \pi e^{1/6}}/\alpha}\rceil$, then the binomial strategy has zero resampling risk ($RR_{\ptrue} (\phi)=0$) and it stops almost surely after a finite number of permutations for all $\ptrue \in [0,p)$. 
% \end{corol}

\begin{corol}\label{rem:rr-binom}
   Let $\pchoice=1/\lceil{\sqrt{2 \pi e^{1/6}}/\alpha}\rceil$ and $\tau:=\inf\{t\geq 1: W_t^{\pchoice} \geq 1/\alpha\}$. If $\plim \in [0,\pchoice)$, then $\tau$ is almost surely finite and the test $\phi=\mathbbm{1}\{W_{\tau}^{\pchoice} \geq 1/\alpha\}$ has zero resampling risk ($RR_{\ptrue} (\phi)=0$). 
\end{corol}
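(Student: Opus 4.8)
The plan is to read the statement off Proposition~\ref{prop:always_reject} once we know that, conditional on $\plim=\ptrue$, the trigger event ``$\pperm_T\le\pchoice$ for some $T\in\NN$'' has probability one. First I would fix $\ptrue\in[0,\pchoice)$ and record the cheap fact that $\pchoice<\alpha$: since $\sqrt{2\pi e^{1/6}}>1$ we have $N=\lceil\sqrt{2\pi e^{1/6}}/\alpha\rceil>1/\alpha$, hence $\pchoice=1/N<\alpha$ and therefore $\ptrue<\pchoice<\alpha$. In particular we are in the branch $\ptrue\le\alpha$ of the definition of $\mathrm{RR}_{\ptrue}$, so it suffices to prove that $\mathbb{P}(\phi=0\mid\plim=\ptrue)=0$, i.e.\ that $\phi=1$ almost surely on $\{\plim=\ptrue\}$, which also yields the finiteness of $\tau$ as a by-product.

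Next I would use the de Finetti structure from Section~\ref{sec:resampling_risk}: conditional on $\plim=\ptrue$ the indicators $I_1,I_2,\dots$ are i.i.d.\ $\mathrm{Ber}(\ptrue)$, so by the strong law of large numbers $L_t/t\to\ptrue$ almost surely, and hence $\pperm_t=(1+L_t)/(t+1)\to\ptrue$ almost surely. Since $\ptrue<\pchoice$, on almost every sample path there is a (path-dependent) $T\in\NN$ with $\pperm_T\le\pchoice$. (The degenerate case $\ptrue=0$, where $L_t=0$ for all $t$ a.s.\ so that $\pperm_t=1/(t+1)$ and $T=N-1$ already works, is subsumed here and matches the $\ell_N=0$ case treated inside Proposition~\ref{prop:always_reject}.) On each such path, Proposition~\ref{prop:always_reject} produces a $\tilde T\le T$ with $W^{\pchoice}_{\tilde T}\ge 1/\alpha$; hence the set $\{t\ge1:W^{\pchoice}_t\ge1/\alpha\}$ is nonempty, so $\tau<\infty$, and because this set consists of positive integers its infimum is attained, giving $W^{\pchoice}_\tau\ge1/\alpha$. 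Thus $\phi=\mathbbm{1}\{W^{\pchoice}_\tau\ge1/\alpha\}=1$ on almost every path of $\{\plim=\ptrue\}$, i.e.\ $\mathbb{P}(\phi=0\mid\plim=\ptrue)=0$. Since $\ptrue\in[0,\pchoice)$ was arbitrary, the test has zero resampling risk on $[0,\pchoice)$.

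The main obstacle is really just a bookkeeping one. Proposition~\ref{prop:always_reject} is a deterministic, pathwise implication about a fixed sequence of losses, so the only point needing care is to make explicit that the almost-sure convergence $\pperm_t\to\ptrue<\pchoice$ supplies, almost surely, the hypothesis ``$\pperm_T\le\pchoice$ for some $T$'' that activates it, and that $\tau$ attains the threshold $1/\alpha$ because it ranges over integers. Everything else --- the strong law of large numbers, the comparison $\pchoice<\alpha$, and the $\ptrue=0$ boundary case --- is routine.
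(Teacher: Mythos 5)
Your proposal is correct and follows exactly the route the paper intends: the paper derives this corollary as an immediate consequence of Proposition~\ref{prop:always_reject} together with the almost-sure convergence $\pperm_t\mid\{\plim=\ptrue\}\to\ptrue$, which is precisely your argument. The extra bookkeeping you supply (verifying $\pchoice<\alpha$, the $\ptrue=0$ case, and attainment of the threshold at the integer-valued $\tau$) is a faithful elaboration of the same proof rather than a different approach.
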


 In Section \ref{sec:average_wealth} we propose a strategy that leads to zero resampling risk for all $\ptrue \in [0,c)$, where $c$ can be any predefined $c\in (0,\alpha)$. 
    
    % To the best of our knowledge, this is the first test with a guaranteed resampling risk of exactly $0$ for a nontrivial set of $\ptrue$. This is based on one advantage of our method compared to previous works --- we can always decide to continue testing, while most previous approaches only work for finite stopping rules \citep{fay2002designing}.

\subsection{Stopping when the wealth is low}
Currently, we assumed that as long as we do not reject the null hypothesis, we will keep generating additional permutations. However, we can also save a lot of permutations when we stop because it is very unlikely that we will reject the null hypothesis if we keep drawing more permutations. More explicitly, we suggest to stop when the current wealth drops below $\alpha$. In this case we would have a test martingale with value less than $\alpha$ and hence the conditional probability of then going on to reject the null hypothesis (crossing wealth $1/\alpha$) would be less than $\alpha^2$ under the null. Of course, the conditional probability can be much larger under the alternative, but in this case it is also much more unlikely that we end up with a wealth of $<\alpha$, so it is unlikely that this ``stopping out of futility'' will hurt power much --- we confirm this in simulations. Also note that we can slightly improve our procedure due to this stop. At each step $t$ we know our current wealth $W_t$. Therefore, we know whether or not we would stop if we lose in the next round --- and when this happens, we might as well bet everything on a win. For the binomial strategy this means that if $W_{t-1}\pchoice(t+1)/(L_{t-1}+1)<\alpha $, we choose $\pchoice=0$ for step $t$. Our final algorithm with the parameter $\pchoice$ as in Proposition~\ref{prop:always_reject} is described in Algorithm~\ref{alg:1}.

\begin{algorithm}
\caption{The binomial strategy, along with stopping for futility} \label{alg:1}
 \hspace*{\algorithmicindent} \textbf{Input:} Significance level $\alpha\in (0,1)$ and sequence of test statistics $Y_0,Y_1, Y_2, \ldots$ .\\
 \hspace*{\algorithmicindent} \textbf{Output:} Stopping time $\tau$ and wealth $W_1^\pchoice,\ldots,W_{\tau}^\pchoice$. 
\begin{algorithmic}[1]
\State $W_0^\pchoice = 1$
\State $L_0 = 0$ 
\State $\pchoice = 1/\lceil{\sqrt{2 \pi e^{1/6}}/\alpha}\rceil$
\For{$t=1,2,...$}
\If{$W_{t-1}^\pchoice\pchoice(t+1)/(L_{t-1}+1)<\alpha$} 
    \State $p_t=0$
\Else 
    \State $p_t=\pchoice$
\EndIf
\If{$Y_t \geq Y_0$} 
    \State $W_t^\pchoice = W_{t-1}^\pchoice p_t (t+1) /(L_{t-1}+1)$
    \State $L_{t}= L_{t-1}+1$
\Else 
    \State $W_t^\pchoice = W_{t-1}^\pchoice (1-p_t) (t+1) /(t-L_{t-1})$
    \State $L_{t}= L_{t-1}$
\EndIf
\If{$W_t^\pchoice<\alpha \textbf{ or } W_t^\pchoice\geq 1/\alpha$} 
\State $\tau = t$
\State \Return $\tau, W_1^\pchoice, \ldots, W_{\tau}^\pchoice$
\EndIf
\EndFor
\end{algorithmic}
\end{algorithm}

\section{The binomial mixture strategy\label{sec:average_wealth}}
As described in Section~\ref{sec:high_level}, the wealth process obtained by a betting strategy is a test martingale. The evidence contained in multiple dependent test martingales can be combined by averaging them (since averaging preserves the test martingale property). In our case, this means we can average the wealth obtained by various binomial strategies with different parameters $\pchoice$. In this section, we show how this can be used to find closed forms for the wealth of the mimicked log-optimal strategy for more complex working priors than $\plim=\pchoice$. 

\subsection{Connection to log-optimality}

Let $f$ be a density with respect to some probability measure $\mu$ on $([0,1],\mathcal{B})$, where $\mathcal{B}$ is the Borel $\sigma$-algebra. The binomial mixture strategy is defined by the wealth 
$$\bar{W}_T^f=\int_0^1 W_T^\pchoice f(\pchoice) \mu(d\pchoice),$$
where $W_T^\pchoice$ is the wealth of the binomial strategy after $T$ permutations. By Tonelli's Theorem, it can be easily verified that
% we have $\mathbb{E}[\bar{W}_t^f|\mathcal{R}_{t-1}]= \int_0^1  \mathbb{E}[W_{t}^p|\mathcal{R}_{t-1}] f(p) d\mu(p) =  \int_0^1  W_{t-1}^p f(p)d\mu(p) =\bar{W}_{t-1}^f$ and $\mathbb{E}[\bar{W}_0^f]= \int_0^1 W_{0}^p f(p) d\mu(p)=1$, hence 
$\bar{W}_t^f$ indeed defines a test martingale with respect to the filtration $(\mathcal{I}_t)_{t\geq 1}$ (informally, in words: averages of martingales are also martingales). It should be noted that while the individual betting strategies can (and will) depend on the past, the choice of strategies over which we average, in our case given by the density $f$, needs to remain the same over the entire testing process in order to ensure the test martingale property. 

\begin{theorem}\label{theo:log_opt_mixture}
The binomial mixture strategy with density $f$ is equivalent to the log-optimal strategy where $\pt=\mathbb{E}[\plim|L_{t-1}]$ is replaced by $p_t=\mathbb{E}_f[\plim|L_{t-1}]$. Hence, if $f$ is the true density of $\plim$, then the binomial mixture strategy is log-optimal.
\end{theorem}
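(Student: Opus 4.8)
The plan is to compare, step by step, the bet placed by the binomial mixture strategy at round $t$ with the bet $B_t^*$ from the mimicked log-optimal strategy using $p_t = \mathbb{E}_f[\plim \mid L_{t-1}]$, and show the two coincide whenever the realized loss count is $L_{t-1}$. Since a betting strategy is determined by the multiplicative factor $B_t(I_t) = W_t/W_{t-1}$ it applies in each round (conditional on the observed history), it suffices to show that the ratio $\bar W_t^f / \bar W_{t-1}^f$, evaluated on the event $\{L_{t-1} = \ell\} \cap \{I_t = r\}$, equals $B_t^*(r)$ from \eqref{eq:log_opt_strategy} with $p_t = \mathbb{E}_f[\plim \mid L_{t-1} = \ell]$. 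Both quantities are functions of $(\ell, r, t)$ only, by Proposition~\ref{prop:wealth_log_optimal} on the mixture side and by inspection of \eqref{eq:log_opt_strategy} on the log-optimal side, so this is a clean finite computation.

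First I would use the closed form from Proposition~\ref{prop:wealth_log_optimal}: on $\{L_{t-1} = \ell\}$, $W_{t-1}^\pchoice = t\,\pchoice^\ell (1-\pchoice)^{t-1-\ell}\binom{t-1}{\ell}$, so that
\[
\bar W_{t-1}^f = t \binom{t-1}{\ell}\int_0^1 \pchoice^\ell (1-\pchoice)^{t-1-\ell} f(\pchoice)\,\mu(d\pchoice).
\]
Next, in round $t$ the loss indicator is revealed: if $I_t = 1$ (a loss), then $L_t = \ell+1$ and $W_t^\pchoice = (t+1)\pchoice^{\ell+1}(1-\pchoice)^{t-1-\ell}\binom{t}{\ell+1}$; if $I_t = 0$, then $L_t = \ell$ and $W_t^\pchoice = (t+1)\pchoice^{\ell}(1-\pchoice)^{t-\ell}\binom{t}{\ell}$. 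Integrating against $f$ and forming the ratio $\bar W_t^f / \bar W_{t-1}^f$, the binomial coefficients collapse via $\binom{t}{\ell+1}/\binom{t-1}{\ell} = t/(\ell+1)$ and $\binom{t}{\ell}/\binom{t-1}{\ell} = t/(t-\ell)$, and one is left with
\[
B_t(1) = \frac{t+1}{\ell+1}\cdot\frac{\int \pchoice^{\ell+1}(1-\pchoice)^{t-1-\ell} f\,d\mu}{\int \pchoice^{\ell}(1-\pchoice)^{t-1-\ell} f\,d\mu},
\qquad
B_t(0) = \frac{t+1}{t-\ell}\cdot\frac{\int \pchoice^{\ell}(1-\pchoice)^{t-\ell} f\,d\mu}{\int \pchoice^{\ell}(1-\pchoice)^{t-1-\ell} f\,d\mu}.
\]
The recognizable point is that the ratio of integrals in the $B_t(1)$ expression is exactly the posterior mean $\mathbb{E}_f[\plim \mid L_{t-1} = \ell]$ (the posterior density of $\plim$ given $\ell$ losses in $t-1$ trials is proportional to $\pchoice^\ell (1-\pchoice)^{t-1-\ell} f(\pchoice)$, and the extra factor of $\pchoice$ in the numerator integrand produces its mean), so $B_t(1) = p_t (t+1)/(\ell+1) = B_t^*(1)$; similarly the $B_t(0)$ ratio equals $1 - \mathbb{E}_f[\plim \mid L_{t-1}=\ell]$, giving $B_t(0) = (1-p_t)(t+1)/(t-\ell) = B_t^*(0)$. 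This matches \eqref{eq:log_opt_strategy} exactly.

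The only genuine subtlety — and the step I would be most careful with — is the edge behavior: I must check that the denominators never vanish where they matter (i.e., $\int \pchoice^\ell(1-\pchoice)^{t-1-\ell} f\,d\mu > 0$ for every $(\ell,t)$ that is actually reachable with positive probability), so that the conditional wealth ratio is well-defined and the identification of the posterior mean is legitimate; degenerate priors $f$ concentrated at $0$ or $1$ need a brief separate comment, matching the already-permitted extreme cases $\pchoice \in \{0,1\}$ in the binomial strategy. Once that is dispatched, the final sentence of the theorem is immediate: if $f$ is the true density of $\plim$, then $p_t = \mathbb{E}_f[\plim\mid L_{t-1}] = \mathbb{E}[\plim \mid L_{t-1}] = \pt$, so by Proposition~\ref{theo:log_optimal} the mixture strategy is the (unique) log-optimal strategy.
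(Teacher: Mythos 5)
Your proposal is correct and is essentially the paper's own argument read in the opposite direction: the paper proves $W_T^{*,f}=\bar W_T^f$ by induction on $T$, where the induction step is exactly your one-step computation identifying the ratio of mixture integrals $\int \pchoice^{\ell+1}(1-\pchoice)^{t-1-\ell}f\,d\mu\big/\int \pchoice^{\ell}(1-\pchoice)^{t-1-\ell}f\,d\mu$ with the posterior mean $\mathbb{E}_f[\plim\mid L_{t-1}=\ell]$ via Bayes' rule. Your framing as "the per-round multiplicative factors coincide, hence the products coincide" is just the same induction telescoped, and your attention to the non-vanishing of the denominators matches the paper's caveat $\bar W_{T-1}^f(\ell-1)\neq 0$.
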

\begin{proof}
  Let $W_T^{*,f}(\ell)$ be the wealth of the log-optimal strategy where $\pt$ is replaced by $p_t=\mathbb{E}_f[\plim|L_{t-1}]$ after $T$ permutations and $\ell$ losses and $L_{T}$ denote the random number of losses after $T$ permutations. We show that $W_T^{*,f}=\bar{W}_T^{f}$ by induction in $T$. For the initial step ($T=1$), simply note that $W_1^{*,f}(1)=2 \mathbb{E}_f[\plim]$ and $W_1^{*,f}(0)=2 (1-\mathbb{E}_f[\plim])$, while $\bar{W}_1^f(1)=\int_0^1 2 \pchoice f(p) \mu(d\pchoice) = 2 \mathbb{E}_f[\plim]$ and $\bar{W}_1^f(0)=\int_0^1 2 (1-\pchoice) f(\pchoice) \mu(d\pchoice) = 2 (1-\mathbb{E}_f[\plim])$. Next, assume the induction hypothesis (IH) that for some fixed but arbitrary $T\in \mathbb{N}$, it holds that $W_{T-1}^{*,f}(\ell)=\bar{W}_{T-1}^f(\ell)$ for all $\ell\in \{0,\ldots, T-1\} $. We will now prove the induction step ($T-1\to T$). First, we consider the case $I_T=1$. For $\ell\in \{1,\ldots,T\}$ and $W_{T-1}^{*,f}(\ell-1)= \bar{W}_{T-1}^f(\ell-1)\neq 0$, it holds
\begin{align*}
    W_T^{*,f}(\ell)&= W_{T-1}^{*,f}(\ell-1) \frac{T+1}{\ell}\mathbb{E}_f[\plim|L_{T-1}=\ell-1] \\
    &= W_{T-1}^{*,f}(\ell-1)\frac{T+1}{\ell} \int_0^1 \pchoice f_{\plim|L_{T-1}}(\pchoice|\ell-1) \mu(d\pchoice) \\
    &= W_{T-1}^{*,f}(\ell-1) \frac{T+1}{\ell}\frac{\int_0^1 \pchoice f_{L_{T-1}|\plim}(\ell-1|\pchoice) f(\pchoice) \mu(d\pchoice)}{\int_0^1 f_{L_{T-1}|\plim}(\ell-1|\pchoice) f(\pchoice) \mu(d\pchoice)} \\
    &=  W_{T-1}^{*,f}(\ell-1) \frac{T+1}{\ell} \frac{\frac{\ell}{T(T+1)} \int_0^1 W_T^\pchoice(\ell) f(\pchoice) \mu(d\pchoice) }{\frac{1}{T}\int_0^1 W_{T-1}^\pchoice(\ell-1) f(\pchoice) \mu(d\pchoice)} \\ &\stackrel{\text{IH}}{=}  W_{T-1}^{*,f}(\ell-1) \frac{ \bar{W}_T^f(\ell)}{W_{T-1}^{*,f}(\ell-1)} \\
    &= \bar{W}_T^f(\ell).
\end{align*}
 A similar calculation can be done for the case $I_T=0$ and $\ell\in \{0,\ldots,T-1\}$. This completes the proof of the induction step, and thus the overall proof.
\end{proof}
Theorem~\ref{theo:log_opt_mixture} shows that the mimicked log-optimal strategy for a given working prior can be obtained by a weighted average over simple binomial strategies. Note that the latter is often easier to compute, since we already derived a simple closed form for the wealth of binomial strategies in Proposition~\ref{prop:wealth_log_optimal}. Furthermore, it shows that properties of binomial strategies can be transferred to log-optimal strategies, e.g. that the wealth only depends on the number of losses and not the order the losses occur. On the other hand, Theorem~\ref{theo:log_opt_mixture} also shows that every weighted average of binomial strategies can be obtained by a single betting strategy in which the bets are defined as in \eqref{eq:log_opt_strategy} but  $\pt$ is replaced with $\mathbb{E}_f[\plim|L_{t-1}]$. Since the binomial mixture strategy with density $f$ is the mimicked log-optimal strategy based on working prior $f$ for $\plim$, we also refer to the density $f$ for the binomial mixture strategy as a ``working prior'' in the following.

\subsection{Uniform working prior: closed-form wealth\label{sec:closed_form_binomial_mixture}}

In practice, we typically do not know the true underlying distribution and thus it is difficult to determine the log-optimal working prior $f$. However, we might know for which values of $\plim$ we want our wealth to be large. Therefore, it makes sense to choose our working prior $f$ for $\plim$ accordingly, meaning to put most probability mass to these values. For example, if we follow the objective of minimizing the resampling risk (see Section~\ref{sec:resampling_risk}), then we want a large wealth if $\plim \leq \alpha$. Thus, a reasonable approach in this case is to pick $f$ to be uniform on $[0,\alpha]$. In the following theorem, we derive a simple closed form for the wealth obtained by the working prior 
\[
u_c(\pchoice)=\mathbbm{1}\{\pchoice\leq c\}/c,
\] 
where $0<c<1$. In this case, $\mu$ is the Lebesgue measure.

\begin{proposition}
     The wealth of the binomial mixture strategy with density $u_c$ after $T$ permutations and $\ell$ losses is given by $\bar{W}_T^{u_c}(\ell)= (1-\mathrm{Bin}(\ell;T+1,c))/c$, where $\mathrm{Bin}(\cdot ; T+1,c)$ denotes the cumulative distribution function of a binomial distribution with $T+1$ trials of probability $c$.
    % $\bar{W}_B^c(k)= c^B+  \sum_{i=1}^{B-k} c^{k+i-1} (1-c)^{B-k-i+1} {B+1\choose k+i}$.
    \label{theo:wealth_ap}
\end{proposition}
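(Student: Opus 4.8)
The plan is to reduce the claim to the classical identity linking the incomplete Beta function with the binomial CDF. First I would unfold the definition of the binomial mixture wealth: since $u_c$ is supported on $[0,c]$ with constant density $1/c$ there and $\mu$ is Lebesgue measure,
\[
\bar{W}_T^{u_c}(\ell) = \int_0^1 W_T^\pchoice(\ell)\, u_c(\pchoice)\, \mu(d\pchoice) = \frac{1}{c}\int_0^c W_T^\pchoice(\ell)\, d\pchoice.
\]
Plugging in the closed form $W_T^\pchoice(\ell) = (T+1)\binom{T}{\ell}\pchoice^\ell(1-\pchoice)^{T-\ell}$ from Proposition~\ref{prop:wealth_log_optimal}, it suffices to show that $\int_0^c (T+1)\binom{T}{\ell}\pchoice^\ell(1-\pchoice)^{T-\ell}\, d\pchoice = 1 - \mathrm{Bin}(\ell;T+1,c)$.

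Next I would observe that the integrand is exactly the density of a $\mathrm{Beta}(\ell+1,\,T-\ell+1)$ distribution: its normalizing constant $\Gamma(T+2)/(\Gamma(\ell+1)\Gamma(T-\ell+1)) = (T+1)!/(\ell!(T-\ell)!) = (T+1)\binom{T}{\ell}$ matches the prefactor precisely. Hence the integral equals $I_c(\ell+1,\,T-\ell+1)$, the regularized incomplete Beta function, i.e.\ the CDF of $\mathrm{Beta}(\ell+1,\,T-\ell+1)$ evaluated at $c$.

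Finally I would invoke the standard Beta--Binomial identity $I_c(\ell+1,\,T-\ell+1) = \mathbb{P}(\mathrm{Bin}(T+1,c)\ge \ell+1) = 1 - \mathrm{Bin}(\ell;T+1,c)$, and divide by $c$ to conclude. To keep the argument self-contained one can reprove this identity by repeated integration by parts on $\int_0^c \pchoice^\ell(1-\pchoice)^{T-\ell}\,d\pchoice$, integrating the factor $\pchoice^\ell$ and differentiating $(1-\pchoice)^{T-\ell}$; the boundary contributions at $0$ vanish, each step produces one term of the form $\binom{T+1}{j}c^{j}(1-c)^{T+1-j}$, and the process terminates in the telescoping sum $\sum_{j=\ell+1}^{T+1}\binom{T+1}{j}c^{j}(1-c)^{T+1-j} = 1-\mathrm{Bin}(\ell;T+1,c)$. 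There is essentially no real obstacle here beyond the bookkeeping of the integration by parts (or simply citing the identity); the one point worth double-checking is the index arithmetic, so that the Beta parameters $(\ell+1,\,T-\ell+1)$ indeed correspond to $\mathrm{Bin}(T+1,c)$ and not $\mathrm{Bin}(T,c)$ — this is where the extra ``$+1$'' (coming from the $T+1$ prefactor in $W_T^\pchoice$) enters.
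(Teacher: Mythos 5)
Your proposal is correct and is essentially the same argument as the paper's: the paper establishes the identity $(T+1)\binom{T}{\ell}\int_0^c \pchoice^\ell(1-\pchoice)^{T-\ell}\,d\pchoice = 1-\mathrm{Bin}(\ell;T+1,c)$ by backward induction in $\ell$ via integration by parts, which is exactly the telescoping integration-by-parts computation you describe as the self-contained route to the Beta--Binomial identity. Your framing via the $\mathrm{Beta}(\ell+1,T-\ell+1)$ density and the regularized incomplete Beta function is a slightly more conceptual packaging of the same calculation, and your index check (that the relevant binomial has $T+1$ trials, not $T$) is the right thing to verify.
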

\begin{proof}
    We show that 
    \begin{align}
        (T+1) {T \choose \ell} \int_0^c \pchoice^\ell (1-\pchoice)^{T-\ell} d\pchoice=   \mathrm{Bin}(T+1;T+1,c)-\mathrm{Bin}(\ell;T+1,c)\label{eq:ind_hyp}
    \end{align}
    for all $T\in \mathbb{N}$ and $\ell\in \{0,1,\ldots,T\}$ by (backward) induction in $\ell$. 

    For the initial step ($\ell=T$), simply note that $(T+1)\int_0^c \pchoice^T dp = c^{T+1}=\mathrm{Bin}(T+1;T+1,c)-\mathrm{Bin}(T;T+1,c)$.
    Next, assume the induction hypothesis (IH) that  \eqref{eq:ind_hyp} holds for some fixed but arbitrary $\ell\in \{1,\ldots,T\}$. We will now prove the induction step ($\ell\to \ell-1$). Using integration by parts, it follows that
    \begin{align*}
        &(T+1) {T\choose \ell-1} \int_0^c \pchoice^{\ell-1} (1-\pchoice)^{T-\ell+1} d\pchoice \\ &= (T+1) {T\choose \ell-1}\left( \frac{c^\ell (1-c)^{T-\ell+1}}{\ell}+ \frac{T-\ell+1}{\ell}\int_0^c \pchoice^{\ell} (1-\pchoice)^{T-\ell} d\pchoice \right) \\
        & \stackrel{\text{IH}}{=} {T+1\choose \ell} c^\ell (1-c)^{T-\ell+1} + \sum_{i=\ell+1}^{T+1} c^i (1-c)^{T-i+1} {T+1 \choose i} \\
        &= \sum_{i=\ell}^{T+1} c^i (1-c)^{T-i+1} {T+1 \choose i} .
    \end{align*}
    This completes the proof of the induction step, and thus the overall proof.
\end{proof}

With this, we can apply the binomial mixture strategy with working prior $u_c$ by just using the closed-form wealth at each step (see Algorithm~\ref{alg:bin_mix}). In the next subsection, we argue why we usually choose $c$ smaller than, but close to $\alpha$. One might think that a beta distribution with decreasing density is a better working prior for $\plim$ than $u_c$ and therefore should yield a higher power. In Appendix~\ref{sec:beta_dist}, we demonstrate via simulations that this is not the case.

\begin{algorithm}
\caption{The binomial mixture strategy with working prior $u_c$, along with stopping for futility} \label{alg:bin_mix}
 \hspace*{\algorithmicindent} \textbf{Input:} Significance level $\alpha\in (0,1)$, parameter $c<\alpha$ and sequence of test statistics $Y_0,Y_1, Y_2, \ldots$ .\\
 \hspace*{\algorithmicindent} \textbf{Output:} Stopping time $\tau$ and wealth $\bar{W}_1^{u_c},\ldots,\bar{W}_{\tau}^{u_c}$. 
\begin{algorithmic}[1]
\State $\bar{W}_0^{u_c} = 1$
\State $L_0 = 0$ 
\For{$t=1,2,...$}
\If{$Y_t \geq Y_0$} 
    \State $L_{t}= L_{t-1}+1$
\Else 
    \State $L_{t}= L_{t-1}$
\EndIf
\State $\bar{W}_t^{u_c}= (1-\mathrm{Bin}(L_{t};t+1,c))/c$
\If{$\bar{W}_t^{u_c}<\alpha \textbf{ or } \bar{W}_t^{u_c}\geq 1/\alpha$} 
\State $\tau = t$
\State \Return $\tau, \bar{W}_1^{u_c}, \ldots, \bar{W}_{\tau}^{u_c}$
\EndIf
\EndFor
\end{algorithmic}
\end{algorithm}

% Theorem~\ref{theo:wealth_ap} can be used to calculate the wealth using the $\bar{p}$-strategy at each step, therefore, it is not necessary to define betting vectors $B_t$. However, in some cases the representation as a betting strategy may be preferred, e.g. when a stop for futility is to be included. Let $k=\rank(Y_0,Y_0^{t-1})-1$ be the number of losses before step $t$. The wealth of the $\bar{p}$-strategy can be obtained by the betting vector  \begin{align*} B_t^c(r)=\begin{cases}
%             \bar{W}_{t}^c(k+1)/\bar{W}_{t-1}^c(k), & r\leq \rank(Y_0,Y_0^{t-1}) \\
%             \bar{W}_{t}^c(k)/\bar{W}_{t-1}^c(k), & \text{otherwise}.
%         \end{cases} \end{align*}
% Note that $$\mathbb{E}[B_t^c|\mathcal{R}_{t-1}]=\mathbb{E}[\bar{W}_{t}^c/\bar{W}_{t-1}^c|\mathcal{R}_{t-1}]=\mathbb{E}[\bar{W}_{t}^c|\mathcal{R}_{t-1}]/\bar{W}_{t-1}^c=1,$$
% which shows that this indeed defines a valid betting strategy.

\subsection{Uniform working prior: zero resampling risk}

The wealth of the above binomial mixture strategy has many desirable properties. For example, in contrast to the simple binomial strategy, the wealth is decreasing for an increasing number of losses. In Section~\ref{sec:log-optimal}, we noted that this is actually the case for all working priors yielding a distribution stochastically smaller than a uniform on $[0,1]$. This monotonicity in the number of losses is shared with the classical permutation p-value and thus seems intuitive. In the following, we show that the binomial mixture strategy with working prior $u_c$ also shares some asymptotic behavior with the permutation p-value, which has implications for its resampling risk. 

\begin{theorem}
    The wealth of the binomial mixture strategy with uniform mixture density $u_c$ satisfies\begin{align*}\bar{W}_T^{u_c}|\{\plim=\ptrue\} \stackrel{a.s.}{\to}\begin{cases}
        1/c, &\text{ if } \ptrue \in [0,c) \\
        0, &\text{ if } \ptrue \in (c,1]
    \end{cases} \end{align*}   for $T\to \infty$.\label{theo:asymptotic_ap}
\end{theorem}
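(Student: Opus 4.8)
The plan is to read off the wealth from its closed form and then apply the strong law of large numbers together with a concentration bound. By Proposition~\ref{theo:wealth_ap}, $\bar W_T^{u_c} = (1 - \mathrm{Bin}(L_T; T+1, c))/c$, where $L_T$ is the number of losses after $T$ permutations. Hence the claim is equivalent to showing that, conditional on $\plim = \ptrue$, $\mathrm{Bin}(L_T; T+1, c) \to 0$ almost surely when $\ptrue \in [0,c)$ and $\mathrm{Bin}(L_T; T+1, c) \to 1$ almost surely when $\ptrue \in (c,1]$. As recalled in Section~\ref{sec:resampling_risk}, conditional on $\plim = \ptrue$ the indicators $I_1, I_2, \ldots$ are i.i.d.\ $\mathrm{Ber}(\ptrue)$, so the strong law of large numbers gives $L_T/T \to \ptrue$ almost surely.

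First I would treat the case $\ptrue \in [0,c)$. Fix any $\delta \in (0, c - \ptrue)$. Then almost surely there is a (random) index $T_0$ with $L_T \le (c-\delta)(T+1)$ for all $T \ge T_0$. Since $k \mapsto \mathrm{Bin}(k; T+1, c)$ is nondecreasing, this gives $\mathrm{Bin}(L_T; T+1, c) \le \mathrm{Bin}\big((c-\delta)(T+1); T+1, c\big)$ for $T \ge T_0$, and because $(c-\delta)(T+1)$ lies $\delta(T+1)$ below the mean $c(T+1)$ of a $\mathrm{Bin}(T+1,c)$ variable, Hoeffding's inequality yields $\mathrm{Bin}\big((c-\delta)(T+1); T+1, c\big) \le e^{-2\delta^2(T+1)} \to 0$. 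Hence $\mathrm{Bin}(L_T; T+1, c) \to 0$ almost surely, i.e.\ $\bar W_T^{u_c} \to 1/c$. The case $\ptrue \in (c,1]$ is symmetric: fix $\delta \in (0, \ptrue - c)$, observe that almost surely $L_T \ge (c+\delta)(T+1)$ for all large $T$, use monotonicity of $k \mapsto \mathrm{Bin}(k; T+1, c)$ to get $\mathrm{Bin}(L_T; T+1, c) \ge \mathrm{Bin}\big((c+\delta)(T+1); T+1, c\big)$, and apply the upper-tail Hoeffding bound to conclude $\mathrm{Bin}\big((c+\delta)(T+1); T+1, c\big) \ge 1 - e^{-2\delta^2(T+1)} \to 1$, so $\bar W_T^{u_c} \to 0$.

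I expect no serious obstacle here: once Proposition~\ref{theo:wealth_ap} is available the argument is essentially routine. The only point that needs a little care is that the argument $L_T$ of the binomial CDF is itself random; this is handled by first passing to the deterministic bound $L_T \le (c-\delta)(T+1)$ (respectively $L_T \ge (c+\delta)(T+1)$) which holds for all sufficiently large $T$ almost surely, and then exploiting monotonicity of the CDF in its argument so that a single ($L_T$-free) concentration inequality suffices. No behaviour at $\ptrue = c$ needs to be analysed, since the theorem makes no claim there.
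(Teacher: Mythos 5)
Your proposal is correct, and its skeleton matches the paper's proof exactly: invoke the closed form $\bar{W}_T^{u_c}=(1-\mathrm{Bin}(L_T;T+1,c))/c$ from Proposition~\ref{theo:wealth_ap}, use the strong law (via the conditional i.i.d.\ structure of the $I_t$ given $\plim=\ptrue$) to trap $L_T$ strictly below (resp.\ above) $c(T+1)$ by a linear margin, and then show the binomial CDF evaluated there tends to $0$ (resp.\ $1$). The only substantive difference is the concentration tool for that last step: the paper appeals to the De Moivre--Laplace theorem, asserting a bound $\mathrm{Bin}(\ell;T+1,c)\leq \Phi((\ell-Tc)/\sqrt{Tc(1-c)})+\epsilon$ uniformly over $\ell\in\{0,\dots,T\}$ for large $T$ (which really requires a uniform, Berry--Esseen-type statement rather than the pointwise local limit theorem), whereas you use Hoeffding's inequality, $\mathrm{Bin}((c-\delta)(T+1);T+1,c)\leq e^{-2\delta^2(T+1)}$, together with monotonicity of the CDF in its argument. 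Your route is arguably tidier: it sidesteps the uniformity-in-$\ell$ subtlety entirely, yields an explicit exponential rate, and handles the randomness of the argument $L_T$ cleanly by first passing to the deterministic threshold. Both arguments are valid; no gap in either direction.
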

\begin{proof}
%     We first consider the case $\ptrue=0$ and thus $\rank(Y_0,Y_0^t)=1$ almost surely for all $t\in \mathbb{N}$. Hence,  
% $$\bar{W}_B^c=\frac{1}{c} \int_0^c W_B^p dp =\frac{1}{c} (B+1) \int_0^c  (1-\pchoice)^{B} dp= \frac{1}{c} (1-(1-c)^{B+1}) \to \frac{1}{c}$$
% for $B\to \infty$. Now consider the case $\ptrue \in (0,c)$. Theorem~\ref{theo:wealth_ap} implies that 
% \begin{align*}
%     \int_0^1 W_B^p(k) dp =1
% \end{align*}
% for every $k\in \{0,1,\ldots,B\}$.
% First, note that $c^{B+1}$ and $1-(1-c)^{B+1}$ go to $0$ for $B\to \infty$.
Let $\ptrue \in [0,c)$. By the De Moivre-Laplace Theorem the binomial distribution approaches a normal distribution, meaning for every $\epsilon >0$ there exists $T_1 \in \mathbb{N}$ such that $\mathrm{Bin}(\ell;T+1,c)\leq \Phi((\ell-Tc)/(\sqrt{Tc(1-c)}))+\epsilon $ for all $T\geq T_1$ and $\ell\in \{0,\ldots, T\}$, where $\Phi$ is the cumulative distribution function of a standard normal distribution. In addition, since $\sfrac{L_T}{T}|\{\plim=\ptrue\} \stackrel{a.s.}{\to} \ptrue$ for $T\to \infty$ and $\ptrue < c$, there exists a $T_2\in \mathbb{N}$ such that $L_T|\{\plim=\ptrue\}\leq T (\ptrue + \delta)$ a.s. for $T\geq T_2$ and some $\delta>0$ with $\delta+\ptrue <c$. Hence, \begin{align*} \mathrm{Bin}(L_T;T+1,c)|\{\plim=\ptrue\} \leq \Phi\left(\frac{\sqrt{T}(\ptrue+\delta -c)}{\sqrt{c(1-c)}}\right)+\epsilon   \end{align*}
for all $T\geq \max(T_1,T_2)$, which goes to $\epsilon$ for $T \to \infty$. Analogously, it follows that $\mathrm{Bin}(L_T;T+1,c)|\{\plim=\ptrue\} \stackrel{a.s.}{\to} 1$ if $\ptrue \in (c,1]$.
% Thus, it suffices to show 
% $$ \int_c^1 W_B^p dp \to 0$$
% for $B \to \infty.$ Sketch: By the De Moivre-Laplace Theorem, the probability mass function of a binomial distribution converges to a normal distribution with the same expectation and variance. Thus, we have for all $p\in [c,1]$
% \begin{align*}
%     W_B^p(k) &\approx (B+1)  1/(\sqrt{2\pi}) \sqrt{1/B(1-p)p} \exp\left(-\frac{(k - Bp)^2}{2Bp(1-p)} \right) \\
%     & \approx (B+1)  1/(\sqrt{2\pi}) \sqrt{1/B(1-p)p} \exp\left(-\frac{B^2(\ptrue - p)^2}{2Bp(1-p)} \right) \\ 
%     &\approx 1/(\sqrt{2\pi}) \sqrt{B/(1-p)p} \exp\left(-\frac{B\epsilon^2}{2p(1-p)} \right) \to 0,
% \end{align*}  
% where $\epsilon >0$ is some constant.
\end{proof}

Theorem~\ref{theo:asymptotic_ap} particularly shows that for $c=\alpha$, the wealth $\bar{W}_T^{u_\alpha}|\{\plim=\ptrue\} \stackrel{\text{a.s.}}{\to} 1/\alpha$ if $\ptrue<\alpha$ and  $\bar{W}_T^{u_\alpha}|\{\plim=\ptrue\}\stackrel{\text{a.s.}}{\to} 0$ if $\ptrue >\alpha$ for $T\to \infty$. Hence, if the event that the limiting permutation p-value equals $\alpha$ is a null set, meaning $\mathbb{P}(\plim=\alpha)=0$, then $\bar{W}_T^{u_\alpha}$ converges to the all-or-nothing bet \citep{shafer2021testing} on $\plim$, $\mathbbm{1}\{\plim \leq \alpha\}/\alpha$, almost surely. 
% Therefore, if we could draw an infinite number of permutations, the sequential binomial mixture strategy p-process (inverse of the running maximum of the wealth test martingale) matches the permutation p-value. 
However, note that with the choice $c=\alpha$, Proposition~\ref{theo:wealth_ap} implies that $\bar{W}_T^{u_\alpha}<1/\alpha$ for all $T\in \mathbb{N}$ (meaning that if it converges to $1/\alpha$, this convergence occurs from below), and so we could never stop after a finite number of permutations. For this reason, when aiming for a level-$\alpha$ test in practice, one should choose $c$ slightly smaller than $\alpha$, which ensures a wealth of $\bar{W}_T^{u_c}>1/\alpha$ after a finite number of permutations if $\ptrue < c$. This is captured in the following corollary, which should be compared to Corollary~\ref{rem:rr-binom}.

% \begin{corol}
%     If we choose $c<\alpha$, then the binomial mixture strategy with uniform prior density $u_c$ has zero resampling risk ($\mathrm{RR}_{\ptrue}(\phi)=0$) and it stops almost surely after a finite number of permutations for all $\ptrue \in [0,c)$.
% \end{corol}

\begin{corol}\label{corol:rr_mixture}
   Let $c<\alpha$ and  $\tau:=\inf\{t\geq 1: \bar{W}_t^{u_c} \geq 1/\alpha\}$. If $\plim \in [0,c)$, then $\tau$ is almost surely finite  and the test $\phi=\mathbbm{1}\{\bar{W}_{\tau}^{u_c} \geq 1/\alpha\}$ has zero resampling risk ($RR_{\ptrue} (\phi)=0$). 
\end{corol}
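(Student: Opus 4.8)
The plan is to derive Corollary~\ref{corol:rr_mixture} as an almost immediate consequence of Theorem~\ref{theo:asymptotic_ap} together with the general validity of the e-process (which gives one direction of the resampling-risk bound for free) and the closed-form wealth from Proposition~\ref{theo:wealth_ap} (which controls the behaviour strictly below $1/\alpha$). The key observation is that the corollary only claims zero resampling risk on the event $\{\plim = \ptrue\}$ with $\ptrue \in [0,c)$, which by the definition of $\mathrm{RR}_{\ptrue}$ and the fact that $\ptrue \le c < \alpha$ means I only need to show $\mathbb{P}(\phi = 0 \mid \plim = \ptrue) = 0$, i.e. that $\tau$ is almost surely finite on this event.

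First I would fix $\ptrue \in [0,c)$ and condition on $\{\plim = \ptrue\}$. By Theorem~\ref{theo:asymptotic_ap}, $\bar W_T^{u_c} \to 1/c$ almost surely as $T \to \infty$ on this event. Since $c < \alpha$, we have $1/c > 1/\alpha$, so there exists (almost surely) a finite $T$ with $\bar W_T^{u_c} > 1/\alpha$; in particular $\bar W_T^{u_c} \ge 1/\alpha$ for all large enough $T$, which forces $\tau = \inf\{t \ge 1 : \bar W_t^{u_c} \ge 1/\alpha\}$ to be finite almost surely on $\{\plim = \ptrue\}$. Then on this event $\phi = \mathbbm{1}\{\bar W_\tau^{u_c} \ge 1/\alpha\} = 1$ by the very definition of $\tau$ (the wealth at the stopping time is at least $1/\alpha$), so $\mathbb{P}(\phi = 0 \mid \plim = \ptrue) = 0$, which is exactly $\mathrm{RR}_{\ptrue}(\phi) = 0$ since $\ptrue \le c < \alpha$ places us in the first case of the resampling-risk definition.

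One subtlety worth spelling out is why we may invoke Theorem~\ref{theo:asymptotic_ap}'s conclusion "on the event $\{\plim = \ptrue\}$" rather than merely in distribution: the theorem is stated as almost-sure convergence of $\bar W_T^{u_c} \mid \{\plim = \ptrue\}$, which is precisely the conditional statement we need, and the resampling risk is defined using exactly this conditioning (only the randomness of the resampling mechanism matters once $\plim = \ptrue$ is fixed). A second point is that $\tau$ is a genuine stopping time with respect to $(\mathcal I_t)_{t \ge 1}$ because $\bar W_t^{u_c}$ is $\mathcal I_t$-measurable (it is a deterministic function of $L_t$ and $t$ by Proposition~\ref{theo:wealth_ap}); this is not needed for the resampling-risk claim itself but justifies the phrasing and is consistent with the rest of the paper.

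I do not anticipate a serious obstacle here — the work has all been done in Theorem~\ref{theo:asymptotic_ap}. The only thing requiring a moment's care is making sure the strict inequality $c < \alpha$ is used in the right place: it is what separates the limiting wealth value $1/c$ from the rejection threshold $1/\alpha$, guaranteeing the threshold is crossed in finite time rather than only approached in the limit (contrast the degenerate case $c = \alpha$ discussed just before the corollary, where $\bar W_T^{u_\alpha} < 1/\alpha$ for every finite $T$). So the proof is essentially: "Fix $\ptrue < c$; by Theorem~\ref{theo:asymptotic_ap}, $\bar W_T^{u_c} \to 1/c > 1/\alpha$ a.s.\ on $\{\plim = \ptrue\}$, hence $\tau < \infty$ a.s.\ and $\bar W_\tau^{u_c} \ge 1/\alpha$, so $\phi = 1$ a.s.\ and $\mathrm{RR}_{\ptrue}(\phi) = 0$."
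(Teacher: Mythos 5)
Your proof is correct and follows exactly the route the paper intends: the corollary is stated as an immediate consequence of Theorem~\ref{theo:asymptotic_ap}, using precisely the observation that $1/c > 1/\alpha$ forces the wealth to cross the rejection threshold in finite time almost surely on $\{\plim = \ptrue\}$ for $\ptrue \in [0,c)$. Your additional remarks on the conditioning and on why strictness of $c < \alpha$ matters are consistent with the paper's discussion preceding the corollary.
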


The parameter $c<\alpha$ can also be interpreted as the aggressiveness with which we bet on the alternative. If we choose $c$ very close to $\alpha$, we have the guarantee to always reject if $\ptrue \leq \alpha-\epsilon$ for an arbitrary small $\epsilon >0$. However, if the difference between $\alpha$ and $c$ is larger, we might be able to reject much earlier, which can be used to save computational cost. Thus, the binomial mixture strategy with density $u_c$ offers a simple trade-off between power and computation time.

The closed form of the wealth in Proposition~\ref{theo:wealth_ap} can easily be used to calculate or bound the required number of permutations for concrete parameter constellations. For example, if we observe no losses, we can reject the hypothesis as soon as $(1-(1-c)^{T+1})/c>1/\alpha$. Hence, if $\alpha=0.05$ and $c=0.04$, we can reject the null hypothesis after $39$ permutations and if $c=0.049$, we can reject after $77$ permutations. As another example, one can calculate that $\bar{W}_{200}^{u_{0.04}}(5)> 1/\alpha$. Thus, if we choose $c=0.04$, we can reject the null hypothesis at level $\alpha=0.05$ after $\leq 200$ permutations if we observe $\leq 5$ losses during these permutations.  

To summarize, the binomial mixture strategy with working prior $u_c$ allows to specify a parameter $c<\alpha$ arbitrarily close to $\alpha$ such that $H_0$ will be  rejected almost surely at some finite time if $\plim<c$. Note that this guarantee is not possible with the permutation p-value $\pperm$ nor Besag-Clifford p-value $\pbc$. No matter how large we choose the maximum number of permutations $T$ and the maximum number of losses $h$, there will always be a positive probability for accepting $H_0$. Furthermore, if there is sufficient evidence against $H_0$ early in the sampling process, the binomial mixture allows to stop early for rejection, while the permutation p-value and Besag-Clifford method never stop early for rejection. 

\subsection{Stochastic rounding of promising wealth\label{sec:stochastic_rounding}}
Suppose we stopped the testing process at some time $\tau$ where the current wealth looks promising ($W_{\tau}$ close to $1/\alpha$) but did not lead to a rejection yet ($\max_{s=1,\ldots,\tau} W_{s}<1/\alpha$). For example, this might be when a prespecified maximum number of permutations is hit or the wealth changed extremely slowly during the last permutations. As described in Section \ref{sec:high_level}, we could continue sampling or interpret the wealth on its own, since a large wealth is still good evidence against the null hypothesis. If both these options fall out, as drawing more permutations is not possible or wanted and we are only interested in rejections, we could also use a technique called stochastic rounding to possibly achieve a rejection~\citep{xu2023more}.

This is based on a randomized improvement of Ville's inequality by \citet{ramdas2023randomized}. They showed that for any stopping time $\tau$ with respect to the filtration $(\mathcal{I}_t)_{t\geq 1}$, it holds that
$$\mathbb{P}(W_{\tau} \geq U/\alpha)\leq \alpha,$$
where $\alpha\in (0,1)$ and $U\sim U[0,1]$ is independent of the filtration $(\mathcal{I}_t)_{t\geq 1}$. Thus, if we stopped at time $\tau$ before reaching the level $1/\alpha$, we can draw a sample $u$ from a uniformly distributed random variable $U\sim U[0,1]$ and reject the null hypothesis if $W_\tau\geq u/\alpha$. However, in order to maintain validity, it is important that $u$ is sampled after stopping the process, such that $\tau$ and $W_\tau$ are independent of $U$, and that it is only sampled once (and not multiple times to pick the best one). Therefore, this technique should only be used in an automated code which prevents this randomized improvement from being exploited for cheating. The term stochastic rounding stems from this test being equivalent to rejecting if $W_{\tau}^{\sim}\geq 1/\alpha$, where $W_{\tau}^{\sim}$ is an e-value defined by $W_{\tau}^{\sim}=W_{\tau}$ if $W_{\tau}\geq 1/\alpha$; and $W_{\tau}^{\sim}=1/\alpha$ with probability $W_\tau \alpha$ and $W_{\tau}^{\sim}=0$ with probability $(1-W_\tau) \alpha$ if $W_{\tau}< 1/\alpha$.

For example, stochastic rounding could be used to set the resampling risk for all $\ptrue<\alpha$ to an arbitrary small $\epsilon\in (0,1)$ based on the binomial mixture strategy with working prior $u^{\alpha}$, which follows immediately by Theorem \ref{theo:asymptotic_ap}. 

% For this, define the stopping time 
% $$\tau_{\epsilon}:=\inf\{t\geq 1: \bar{W}_B^{u_{\alpha}}\geq (1-\epsilon)/\alpha \}$$
% and reject the null hypothesis if $\tau_{\epsilon}<\infty$ and $U\leq (1-\epsilon)$. Since $\bar{W}_B^{u_{\alpha}}|\{\plim=\ptrue\}\stackrel{a.s.}{\to} 1/\alpha$ if $\ptrue <\alpha$ according to Theorem~\ref{theo:asymptotic_ap}, $\tau_{\epsilon}$ is almost surely finite if $\ptrue < \alpha$. Hence, the probability of not rejecting in this case is equal to $\epsilon$.

\begin{corol}\label{corol:rr-random_mixture}
    Let $\epsilon \in (0,1)$ and $\tau=\inf\{t\geq 1: \bar{W}_t^{u_{\alpha}}\geq (1-\epsilon)/\alpha \}$. If $\plim \in [0,\alpha)$, then $\tau$ is almost surely finite  and the randomized test $\phi=\mathbbm{1}\{\bar{W}_{\tau}^{u_{\alpha}}\geq (1-\epsilon)/\alpha, U\leq (1-\epsilon) \}$ has resampling risk $\epsilon$ ($\mathrm{RR}_{\ptrue}(\phi)=\epsilon$). 
\end{corol}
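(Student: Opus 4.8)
The statement bundles two claims: that the first-hitting time $\tau=\inf\{t\geq 1: \bar{W}_t^{u_{\alpha}}\geq (1-\epsilon)/\alpha\}$ is almost surely finite when $\plim=\ptrue\in[0,\alpha)$, and that the randomized test $\phi$ then has resampling risk exactly $\epsilon$. The plan is to get the finiteness directly from the limiting behaviour already established, and then to observe that on the (almost sure) event $\{\tau<\infty\}$ the wealth part of the rejection rule is automatically satisfied, so that $\phi$ collapses to the independent coin flip $\mathbbm{1}\{U\leq 1-\epsilon\}$.

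First I would condition on $\plim=\ptrue$ with $\ptrue\in[0,\alpha)$ and apply Theorem~\ref{theo:asymptotic_ap} with $c=\alpha$: it gives $\bar{W}_t^{u_{\alpha}}\to 1/\alpha$ almost surely. Since $(1-\epsilon)/\alpha$ is strictly below $1/\alpha$, the wealth eventually exceeds the threshold $(1-\epsilon)/\alpha$, so the set defining $\tau$ is nonempty almost surely, i.e.\ $\tau<\infty$ a.s. Because $\tau$ is the minimum of a nonempty set of positive integers, the inequality $\bar{W}_\tau^{u_{\alpha}}\geq (1-\epsilon)/\alpha$ holds by definition whenever $\tau<\infty$; hence, conditionally on $\plim=\ptrue$, the event $\{\bar{W}_\tau^{u_{\alpha}}\geq (1-\epsilon)/\alpha\}$ has probability one and $\{\phi=0\}$ agrees with $\{U>1-\epsilon\}$ up to a null set.

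It then remains to evaluate $\mathbb{P}(\phi=0\mid\plim=\ptrue)$. Here I would use that $U\sim U[0,1]$ is independent of the filtration $(\mathcal{I}_t)_{t\geq 1}$, whereas $\plim$ (an almost-sure limit of $L_t/t$) and the stopped wealth $\bar{W}_\tau^{u_{\alpha}}$ are both measurable with respect to $\sigma\bigl(\bigcup_t\mathcal{I}_t\bigr)$; consequently $U$ stays uniform on $[0,1]$ after conditioning on $\plim=\ptrue$, and $\mathbb{P}(\phi=0\mid\plim=\ptrue)=\mathbb{P}(U>1-\epsilon)=\epsilon$. Since $\ptrue<\alpha$, this conditional probability is exactly $\mathrm{RR}_{\ptrue}(\phi)$, which finishes the argument.

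The calculation is short; the one step that needs care is the bookkeeping in the last paragraph — one must confirm that conditioning on $\plim=\ptrue$, which fixes precisely the resampling randomness driving the permutation p-value, leaves the external uniform $U$ used for stochastic rounding undisturbed. Everything else is immediate from Theorem~\ref{theo:asymptotic_ap} and the definition of $\tau$ as a first-hitting time.
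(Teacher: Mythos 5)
Your proposal is correct and follows exactly the route the paper intends: the paper gives no explicit proof, stating only that the corollary ``follows immediately by Theorem~\ref{theo:asymptotic_ap},'' and your argument is precisely that deduction spelled out — finiteness of $\tau$ from the a.s.\ convergence $\bar{W}_t^{u_\alpha}\to 1/\alpha > (1-\epsilon)/\alpha$, the wealth condition holding automatically at the hitting time, and the test reducing to the independent event $\{U\le 1-\epsilon\}$ with conditional probability $1-\epsilon$.
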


\section{Every permutation test can be recovered by betting\label{sec:betting_is_general}}

A common concern regarding sequential testing and the use of e-values is that power is lost compared to classical p-value  methods like the permutation p-value $\pperm$. In this section, we show that every permutation test that is a function of the number of losses $L_t$ after a fixed number of permutations $t$ can be obtained by our strategy. In particular, we demonstrate that the permutation p-value and the Besag-Clifford method arise naturally as concrete instances of our general algorithm (Algorithm \ref{alg:general}) by prespecifying the number of permutations or the number of losses, respectively. 

\subsection{Recursive betting strategy for the construction of any permutation e-value}
Consider an e-value $E_t(L_t)$ that is defined by a non-random function $E_t$ of the number of losses $L_t$ for some prespecified $t\geq 1$. Similarly to $B_t$, $E_t$ can be interpreted as a betting function where $E_t(\ell)$ is our realized wealth if  $L_t=\ell$. Instead of betting sequentially at each step, the e-value $E_t(L_t)$ is just obtained by betting once on the number of losses after $t$ steps. Under $H_0$, the number of losses $L_{t}$ are distributed uniformly on $0,1,\dots, t$. Hence, in order to define an (admissible) e-value the betting function $E_t$ needs to satisfy
\begin{align}\sum_{\ell=0}^{t} E_t(\ell)=t+1.\label{eq:cond_bet_e-value}\end{align}
We first show that every $E_t$ can also be obtained by our sequential betting strategy --- this implies that we do not lose something due to the sequential betting. For this, note that after betting with $E_{t-1}$ on $L_{t-1}$, we could also draw a further permuted test statistic $Y_t$, bet on the indicator $I_t=\mathbbm{1}\{Y_t\geq Y_0\}$ with $B_t$ and define a new e-value as $E_{t-1}(L_{t-1}) \cdot B_t(I_t)$. 

We claim that every e-value $E_{t}(L_t)$ can be obtained by such a product.
Indeed, given an arbitrary betting vector $E_{t}$, and letting $B_{t|\ell}(1)$ (or $B_{t|\ell}(0)$) denote our bet on a loss (or win) at step $t$ if $L_{t-1}=\ell$, consider the following system of equations:
\begin{align}
     E_{t-1}(\ell) \cdot B_{t|\ell}(1)&=E_{t}(\ell+1) \quad  & \text{ for all } \ell=0,\ldots,t-1\label{eq:1}\\
    E_{t-1}(\ell) \cdot B_{t|\ell}(0)&=E_{t}(\ell) \quad  & \text{ for all } \ell=0,\ldots,t-1 \label{eq:2}\\
    B_{t|\ell}(0)\frac{t-\ell}{t+1}+B_{t|\ell}(1)\frac{\ell+1}{t+1} &=1 \quad  & \text{ for all } \ell=0,\ldots,t-1 \label{eq:3}\\ 
    E_{t-1}(0)+ \ldots + E_{t-1}(t-1) &=t.\label{eq:4}
\end{align} 
% Meaning, after observing $\rank(Y_0;Y_0^{t-1})=r$, our bet on $R_t=s$ is given by $B_t(s)=B_{t|r}[L]$ if $s\leq r$ and $B_t(s)=B_{t|r}[W]$ otherwise. 
% It makes sense that we focus on such betting strategies as the e-value $E_{t}$ solely depends on $\rank(Y_0;Y_0^t)$. 
The equations \eqref{eq:1} and \eqref{eq:2} ensure that $E_{t-1} \cdot B_t$ leads to the same e-value as $E_{t}$ and the equations \eqref{eq:3} and \eqref{eq:4} ensure that $B_t$ and $E_{t-1}$ are valid betting functions according to \eqref{eq:cond_bet} and \eqref{eq:cond_bet_e-value}, respectively.

\begin{theorem}\label{theo:inc_gen_eval}
    Every e-value $E_{t}(L_t)$ can be obtained by a sequential betting strategy $E_{t}(L_t)=\prod_{r=1}^{t} B_{r|L_{r-1}}(I_r)$, where $B_{r|\ell}(0)$ and $B_{r|\ell}(1)$ are defined recursively via 
\begin{align*}
    E_{t-1}(\ell)&=\frac{\ell+1}{t+1} E_{t}(\ell+1)+ \frac{t-\ell}{t+1} E_{t}(\ell) \\
    B_{t|\ell}(0)&=\frac{E_{t}(\ell)}{E_{t-1}(\ell)} \\
    B_{t|\ell}(1)&=\frac{E_{t}(\ell+1)}{E_{t-1}(\ell)},
\end{align*}
which are the unique solutions to the equations \eqref{eq:1}-\eqref{eq:4}. We use the convention $0/0=0$.
\end{theorem}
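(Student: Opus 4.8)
The plan is to verify that the recursively defined quantities solve the system \eqref{eq:1}--\eqref{eq:4}, prove uniqueness, and then assemble the telescoping product. I would proceed by (downward/backward) induction on the number of steps, exactly mirroring the structure already used in the recursion: we are given the target betting function $E_t$ on $\{0,\ldots,t\}$, and we define $E_{t-1},E_{t-2},\ldots,E_0$ successively via the first displayed recursion in the theorem, where $E_0(0)$ should come out to equal $1$ (the initial wealth). So the first step is to check that each $E_{r-1}$ so defined is itself a valid betting function for $r-1$ steps, i.e.\ satisfies $\sum_{\ell=0}^{r-1} E_{r-1}(\ell)=r$. This is a one-line computation: summing $E_{r-1}(\ell)=\frac{\ell+1}{r+1}E_r(\ell+1)+\frac{r-\ell}{r+1}E_r(\ell)$ over $\ell=0,\ldots,r-1$, each value $E_r(j)$ for $j=0,\ldots,r$ picks up coefficient $\frac{j}{r+1}$ (from the $\ell=j-1$ term, when $j\ge 1$) plus $\frac{r-j}{r+1}$ (from the $\ell=j$ term, when $j\le r-1$), which sums to $\frac{r}{r+1}$ for each $j$; multiplying by the $r+1$ available values of $j$ gives $r$. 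In particular \eqref{eq:4} holds, and iterating down to $r=1$ gives $E_0(0)=1$.

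Next I would check the stepwise betting constraint \eqref{eq:3}. Plugging the definitions $B_{t|\ell}(0)=E_t(\ell)/E_{t-1}(\ell)$ and $B_{t|\ell}(1)=E_t(\ell+1)/E_{t-1}(\ell)$ into the left side of \eqref{eq:3} gives $\frac{1}{E_{t-1}(\ell)}\bigl(\frac{t-\ell}{t+1}E_t(\ell)+\frac{\ell+1}{t+1}E_t(\ell+1)\bigr)$, and the bracket is precisely $E_{t-1}(\ell)$ by the defining recursion, so the ratio is $1$. Here is where I must be careful about the $0/0=0$ convention: if $E_{t-1}(\ell)=0$ then, since $E_t$ is nonnegative and the two coefficients in the recursion are strictly positive for $\ell\in\{0,\ldots,t-1\}$, we must have $E_t(\ell)=E_t(\ell+1)=0$, so both $B_{t|\ell}(0)$ and $B_{t|\ell}(1)$ are $0/0=0$; then \eqref{eq:3} reads $0=1$, which would be a problem. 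The resolution is that such an $\ell$ is never reached: one shows by downward induction that $E_{t-1}(\ell)=0$ forces $E_0(L_0)=0$ along any path passing through $L_{t-1}=\ell$, contradicting $E_0(0)=1$ unless that path has probability zero — more cleanly, one restricts attention to the support, noting that whenever $E_{r-1}(\ell)=0$ the event $\{L_{r-1}=\ell\}$ is reached only through states with zero e-value, so the product formula $E_t(L_t)=\prod_{r=1}^t B_{r|L_{r-1}}(I_r)$ still holds on the event $\{E_t(L_t)>0\}$ and trivially (as $0=0$) on its complement, because some factor is $0/0=0$. I would phrase the statement so that the constraints \eqref{eq:1}--\eqref{eq:4} are required to hold on the support, which is all that is needed for $E_{t-1}\cdot B_t$ to be a bona fide e-value.

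With the constraints verified, equations \eqref{eq:1} and \eqref{eq:2} are immediate from the definitions: $E_{t-1}(\ell)B_{t|\ell}(1)=E_t(\ell+1)$ and $E_{t-1}(\ell)B_{t|\ell}(0)=E_t(\ell)$ by construction. The product representation then follows by telescoping: along a path with indicators $I_1,\ldots,I_t$ and running loss counts $L_0=0,L_1,\ldots,L_t$, each factor $B_{r|L_{r-1}}(I_r)$ equals $E_r(L_r)/E_{r-1}(L_{r-1})$ by \eqref{eq:1} when $I_r=1$ (so $L_r=L_{r-1}+1$) and by \eqref{eq:2} when $I_r=0$ (so $L_r=L_{r-1}$); the product over $r=1,\ldots,t$ telescopes to $E_t(L_t)/E_0(L_0)=E_t(L_t)/1=E_t(L_t)$. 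Finally, for uniqueness: given $E_t$, equations \eqref{eq:1} and \eqref{eq:2} determine $B_{t|\ell}(1)$ and $B_{t|\ell}(0)$ in terms of $E_{t-1}(\ell)$ (on the support, where $E_{t-1}(\ell)\neq 0$), and substituting into \eqref{eq:3} forces $E_{t-1}(\ell)=\frac{\ell+1}{t+1}E_t(\ell+1)+\frac{t-\ell}{t+1}E_t(\ell)$, i.e.\ the stated recursion; hence $E_{t-1}$, and then $B_{t|\ell}(0),B_{t|\ell}(1)$, are uniquely pinned down, and \eqref{eq:4} is then automatically satisfied by the summation identity above. The main obstacle I anticipate is not any hard computation but the careful handling of the $0/0=0$ convention and the precise formulation of where the constraints must hold; I would devote a short paragraph or remark to arguing that zero-wealth states are unreachable (or reachable only with probability zero), so that the product formula and the martingale property are unaffected.
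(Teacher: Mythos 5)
Your proof is correct and follows essentially the same route as the paper's: solve \eqref{eq:1}--\eqref{eq:3} for the recursion, observe that \eqref{eq:4} follows automatically from the same coefficient-summation identity, and telescope the product by backward induction. One small slip in your aside on the $0/0$ convention: states with $E_{r}(\ell)=0$ \emph{are} reachable with positive probability (e.g.\ the aggressive strategy reaches zero wealth after one loss, since $E_{r-1}(\ell)>0$ does not force $E_r(\ell+1)>0$), but your fallback resolution is the right one --- once the wealth is zero the product stays zero and matches $E_t(L_t)=0$, and the failure of \eqref{eq:3} at such states is harmless because $0\cdot B=0$ preserves the martingale identity regardless of the bet.
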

\begin{proof}
    We solve for equations \eqref{eq:1}-\eqref{eq:3} and then show that \eqref{eq:4} is automatically fulfilled as well. Equations \eqref{eq:2} and \eqref{eq:3} immediately give $B_{t|\ell}(0)=\frac{E_{t}(\ell)}{E_{t-1}(\ell)}$ and $ B_{t|\ell}(1)=\frac{t+1-(t-\ell)B_{t|\ell}(0)}{\ell+1}$. Inserting this into \eqref{eq:1} yields
    \begin{align*}
        &  E_{t-1}(\ell) \left( \frac{t+1-(t-\ell)\frac{E_{t}(\ell)}{E_{t-1}(\ell)}}{\ell+1}\right)=E_{t}(\ell+1) \\
         \Leftrightarrow \quad & E_{t-1}(\ell)=\frac{\ell+1}{t+1} E_{t}(\ell+1)+ \frac{t-\ell}{t+1} E_{t}(\ell). 
    \end{align*}
    Furthermore, note that 
    \begin{align*} \sum_{\ell=0}^{t-1} E_{t-1}(\ell)&= \sum_{\ell=0}^{t-1} \frac{\ell+1}{t+1} E_{t}(\ell+1) + \sum_{\ell=0}^{t-1} \frac{t-\ell}{t+1} E_{t}(\ell) \\
    &= \frac{t}{t+1} E_{t}(t)+ \frac{t}{t+1} E_{t}(0)+ \sum_{\ell=1}^{t-1} \frac{\ell}{t+1} E_{t}(\ell) + \sum_{\ell=1}^{t-1} \frac{t-\ell}{t+1} E_{t}(\ell) \\
    &= \frac{t}{t+1} E_{t}(t)+ \frac{t}{t+1} E_{t}(0)+ \frac{t}{t+1} \sum_{\ell=1}^{t-1}  E_{t}(\ell) \\
    &= \frac{t}{t+1} \sum_{\ell=0}^{t}  E_{t}(\ell)  = t,\end{align*}
    which proves that~\eqref{eq:4} is fulfilled. The second claim follows by backward induction in $t$.
\end{proof}
This theorem shows that every e-value that is a function of $L_t$ for a fixed $t$ can also be obtained by our sequential betting strategy. Therefore, also every level-$\alpha$ test $\phi_{\alpha}(L_t)$, where $\phi_{\alpha}(L_t)$ takes values in $\{0,1\}$ and $\mathbb{P}_{H_0}(\phi_{\alpha}(L_t)=1)\leq \alpha$, can be obtained (or uniformly improved) by our sequential betting strategy, since $E_t^{\alpha}(L_t)=\phi_{\alpha}(L_t)/\alpha$ is a valid e-value. This is captured in the following corollary.

\begin{corol}
    For every level-$\alpha$ test $\phi_{\alpha}(L_t)$ that only depends on the number of losses $L_t$ for some $t\geq 1$, there is a betting strategy $B_1,\ldots,B_t$ such that $\mathbbm{1}\{W_t\geq 1/\alpha\}\geq \phi_{\alpha}(L_t)$ almost surely, where $W_t=\prod_{r=1}^t B_r(I_s)$. 
\end{corol}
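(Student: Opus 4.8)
The plan is to reduce the claim to Theorem~\ref{theo:inc_gen_eval} by turning the test $\phi_\alpha$ into an admissible e-value of the number of losses. First I would set $E_t^\alpha(\ell):=\phi_\alpha(\ell)/\alpha$ for $\ell\in\{0,\dots,t\}$. Under $H_0$ the number of losses $L_t$ is uniform on $\{0,\dots,t\}$, so the level-$\alpha$ property $\mathbb P_{H_0}(\phi_\alpha(L_t)=1)\le\alpha$ reads $\frac{1}{t+1}\sum_{\ell=0}^{t}\phi_\alpha(\ell)\le\alpha$, i.e.\ $\sum_{\ell=0}^{t}E_t^\alpha(\ell)\le t+1$. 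Thus $E_t^\alpha$ obeys the e-value inequality but not necessarily the equality~\eqref{eq:cond_bet_e-value} that Theorem~\ref{theo:inc_gen_eval} requires.

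The second step is to inflate $E_t^\alpha$ to an admissible betting vector without shrinking the rejection region. Let $c:=\big(t+1-\sum_{\ell=0}^{t}E_t^\alpha(\ell)\big)/(t+1)\ge 0$ and define $\tilde E_t(\ell):=E_t^\alpha(\ell)+c$. Then $\tilde E_t\ge 0$ and $\sum_{\ell=0}^{t}\tilde E_t(\ell)=t+1$, so $\tilde E_t$ satisfies~\eqref{eq:cond_bet_e-value}; moreover $\tilde E_t(\ell)\ge E_t^\alpha(\ell)=1/\alpha$ for every $\ell$ with $\phi_\alpha(\ell)=1$, hence $\mathbbm{1}\{\tilde E_t(\ell)\ge 1/\alpha\}\ge\phi_\alpha(\ell)$ for all $\ell$. (This is the same as mixing $E_t^\alpha$ with the passive strategy~\eqref{eq:pass}.)

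Finally I would apply Theorem~\ref{theo:inc_gen_eval} to the betting vector $\tilde E_t$, obtaining conditional betting functions $B_{r|\ell}(0),B_{r|\ell}(1)$, $r=1,\dots,t$, that satisfy~\eqref{eq:cond_bet} and for which $\prod_{r=1}^{t}B_{r|L_{r-1}}(I_r)=\tilde E_t(L_t)$. Since $L_{r-1}$ is $\mathcal I_{r-1}$-measurable, the rule $B_r:=B_{r|L_{r-1}}$ is a legitimate predictable betting strategy in the sense of Algorithm~\ref{alg:general}, and its wealth is $W_t=\tilde E_t(L_t)$. Combining with the previous step gives $\mathbbm{1}\{W_t\ge 1/\alpha\}=\mathbbm{1}\{\tilde E_t(L_t)\ge 1/\alpha\}\ge\phi_\alpha(L_t)$ almost surely, as claimed.

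There is no genuinely hard step: the corollary is a bookkeeping consequence of Theorem~\ref{theo:inc_gen_eval}. The only points needing care are (i) correctly translating the level-$\alpha$ property into the bound $\sum_\ell\phi_\alpha(\ell)\le\alpha(t+1)$ using that $L_t$ is discrete uniform under $H_0$, and (ii) the inflation trick that restores the equality constraint~\eqref{eq:cond_bet_e-value} while keeping $\tilde E_t$ above $1/\alpha$ on the set $\{\phi_\alpha=1\}$; everything else is immediate.
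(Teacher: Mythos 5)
Your proposal is correct and follows essentially the same route as the paper, which derives the corollary by noting that $E_t^{\alpha}(L_t)=\phi_{\alpha}(L_t)/\alpha$ is a valid e-value and invoking Theorem~\ref{theo:inc_gen_eval}. Your additive inflation of $E_t^{\alpha}$ to restore the exact normalization $\sum_{\ell}\tilde E_t(\ell)=t+1$ required by \eqref{eq:cond_bet_e-value} is a detail the paper glosses over (this is precisely the sense in which the test is ``obtained or uniformly improved''), and you handle it correctly.
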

Note that every p-value $\pval$ gives rise to a family of level-$\alpha$ tests $\boldsymbol{\phi}^{\pval}=(\phi_{\alpha}^{\pval})_{\alpha\in [0,1]}$ by $\phi_{\alpha}^{\pval}=\mathbbm{1}\{\pval\leq \alpha\}$. Thus, if each of these $\phi_{\alpha}^{\pval}$ depends only on $L_t$ for some $t\geq 1$, we could derive sequential versions of these level-$\alpha$ tests and then calibrate them into an anytime-valid p-value. In the following subsections, we use this to obtain anytime-valid generalizations of the permutation p-value \eqref{eq:perm-pval} and the Besag-Clifford p-value \eqref{eq:bc_pval}.

\subsection{Generalizing the permutation p-value by betting}

Let $T\in \mathbb{N}$ and $\alpha \in [0,1]$ be fixed and consider the e-value $E_T^{\alpha} (L_T)$ with the betting function $E_T^{\alpha}$ given by
\begin{align}
E_T^{\alpha}(\ell)=
\begin{cases}
            1/\alpha, &  \ell\leq \lfloor (T+1)\alpha \rfloor-1 \\
            a, & \ell=\lfloor (T+1)\alpha \rfloor \\ 
            0, & \text{otherwise}
        \end{cases} \quad (\ell\in \{0,1,\ldots,T\}),
\label{eq:e-value_pperm}
\end{align}
 where $a$ is a constant $0\leq a <1/\alpha$ such that the sum of $E_T^{\alpha}(\ell)$ equals $T+1$. If $(T+1)\alpha$ is a whole number, $a$ equals $0$. Theorem \ref{theo:inc_gen_eval} shows that this e-value can be obtained by our sequential betting strategy $E_T^{\alpha}(L_T)=\prod_{t=1}^T B_t^{\alpha,T}(I_t)$ for some betting functions $B_1^{\alpha,T},\ldots,B_T^{\alpha,T}$.  Define 
\begin{align}
\pval_{t}^{\mathrm{av},T}:=\inf\left\{\alpha \in [0,1]\Bigg\vert\exists r=1,\ldots,t: \prod_{s=1}^{\min(r,T)} B_s^{\alpha,T}(I_s) \geq 1/\alpha\right\}. \label{eq:general_pperm}
\end{align}
In the following proposition, we show that $(\pval_{t}^{\mathrm{av},T})_{t\in \mathbb{N}}$ is an anytime-valid generalization of the permutation p-value \eqref{eq:perm-pval}.

\begin{proposition}\label{prop:general_perm_pval}
    $(\pval_{t}^{\mathrm{av},T})_{t\in \mathbb{N}}$ is an anytime-valid p-value with $$\pval_{\tau}^{\mathrm{av},T}=(L_{\tau}+1+T-\tau)/(T+1)$$ for all stopping times $\tau\leq T$ and $\pval_{\tau}^{\mathrm{av},T}=\pval_{T}^{\mathrm{av},T}$ otherwise. In particular, $\pval_{T}^{\mathrm{av},T}=\pperm_T$, implying that $(\pval_{t}^{\mathrm{av},T})_{t\in \mathbb{N}}$ is an anytime-valid generalization of the permutation p-value \eqref{eq:perm-pval}.
\end{proposition}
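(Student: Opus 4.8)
The plan is to reduce everything to the intermediate betting functions $E_r^{\alpha}$ produced by the backward recursion of Theorem~\ref{theo:inc_gen_eval} applied to the e-value $E_T^{\alpha}$ of \eqref{eq:e-value_pperm}. I would first record the telescoping identity behind that theorem: since the recursion gives $B_s^{\alpha,T}(I_s)=E_s^{\alpha}(L_s)/E_{s-1}^{\alpha}(L_{s-1})$ regardless of whether $I_s=0$ or $I_s=1$, and $E_0^{\alpha}(0)=1$ by \eqref{eq:cond_bet_e-value}, the products telescope to $\prod_{s=1}^{\min(r,T)}B_s^{\alpha,T}(I_s)=E_{\min(r,T)}^{\alpha}(L_{\min(r,T)})$, which moreover stays constant once $r\ge T$. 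Substituting into \eqref{eq:general_pperm} rewrites $\pval_t^{\mathrm{av},T}$ as $\inf\{\alpha\in[0,1]:\exists\,r\le t,\ E_{\min(r,T)}^{\alpha}(L_{\min(r,T)})\ge 1/\alpha\}$, so the whole statement reduces to characterizing when $E_r^{\alpha}(\ell)\ge 1/\alpha$.

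The core is the following claim, proved by backward induction on $r=T,T-1,\dots,0$: (a) every entry of $E_r^{\alpha}$ lies in $[0,1/\alpha]$; and (b) $E_r^{\alpha}(\ell)=1/\alpha$ iff $\ell+1+T-r\le (T+1)\alpha$. The base case $r=T$ is read off \eqref{eq:e-value_pperm} using $0\le a<1/\alpha$ and the integrality of $\ell+1$. For the step, the recursion $E_{r-1}^{\alpha}(\ell)=\frac{\ell+1}{r+1}E_r^{\alpha}(\ell+1)+\frac{r-\ell}{r+1}E_r^{\alpha}(\ell)$ is a convex combination with both weights strictly positive for $\ell\in\{0,\dots,r-1\}$; hence $E_{r-1}^{\alpha}(\ell)\in[0,1/\alpha]$, and it equals $1/\alpha$ exactly when $E_r^{\alpha}(\ell)=E_r^{\alpha}(\ell+1)=1/\alpha$. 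Since the inductive condition for $E_r^{\alpha}(\ell+1)$ implies the one for $E_r^{\alpha}(\ell)$, this simplifies to $\ell+1+T-(r-1)\le (T+1)\alpha$, which is exactly (b) at level $r-1$.

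With the claim ((a) and (b) together), $E_{\min(r,T)}^{\alpha}(L_{\min(r,T)})\ge 1/\alpha$ is equivalent to $\alpha\ge g(\min(r,T))$, where $g(j):=(L_j+1+T-j)/(T+1)$; therefore $\pval_t^{\mathrm{av},T}=\min_{1\le j\le\min(t,T)}g(j)$. A one-line computation shows $g(j)-g(j+1)=(1-I_{j+1})/(T+1)\ge 0$, so $g$ is non-increasing and the minimum is attained at $j=\min(t,T)$, giving $\pval_t^{\mathrm{av},T}=(L_{\min(t,T)}+1+T-\min(t,T))/(T+1)$. Evaluating at an integer-valued stopping time $\tau$ yields $\pval_\tau^{\mathrm{av},T}=(L_\tau+1+T-\tau)/(T+1)$ on $\{\tau\le T\}$, while on $\{\tau>T\}$ it equals $\pval_T^{\mathrm{av},T}=(L_T+1)/(T+1)=\pperm_T$, which also gives the ``in particular'' claim. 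For anytime validity, fix $\alpha$: the process $r\mapsto E_{\min(r,T)}^{\alpha}(L_{\min(r,T)})$ is nonnegative, equals $1$ at $r=0$, and is a martingale under $H_0$, since $\mathbb{E}_{H_0}[E_r^{\alpha}(L_r)\mid L_{r-1}=\ell]=\frac{\ell+1}{r+1}E_r^{\alpha}(\ell+1)+\frac{r-\ell}{r+1}E_r^{\alpha}(\ell)=E_{r-1}^{\alpha}(\ell)$ for $r\le T$ and it is constant for $r>T$; hence it is a test martingale. Since $\{\exists\,r\le\tau:E_{\min(r,T)}^{\alpha}(L_{\min(r,T)})\ge 1/\alpha\}\subseteq\{\exists\,r\ge 1:E_{\min(r,T)}^{\alpha}(L_{\min(r,T)})\ge 1/\alpha\}$, Ville's inequality \eqref{eq:ville} bounds its probability by $\alpha$, and by the explicit formula this event is precisely $\{\pval_\tau^{\mathrm{av},T}\le\alpha\}$, so $\mathbb{P}_{H_0}(\pval_\tau^{\mathrm{av},T}\le\alpha)\le\alpha$.

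The main obstacle is the induction of the second paragraph. It is tempting to guess that every $E_r^{\alpha}$ keeps the clean ``$1/\alpha$, then one intermediate value, then $0$'' shape of $E_T^{\alpha}$, but the backward averaging erodes the zero region and the sub-maximal entries genuinely become complicated. What rescues the argument is that we only ever need the location of the value $1/\alpha$, and part (a) of the claim is exactly what forces a strict convex combination of numbers in $[0,1/\alpha]$ to equal $1/\alpha$ only when both arguments already do.
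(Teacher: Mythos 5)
Your proof is correct and follows essentially the same route as the paper's: telescope the product of the recursively defined bets to $E_{\min(r,T)}^{\alpha}(L_{\min(r,T)})$, then run a backward induction showing that $E_r^{\alpha}(\ell)=1/\alpha$ exactly when $\ell+1+T-r\le(T+1)\alpha$ (the paper states the resulting shape of $E_\tau^{\alpha,T}$ without spelling out the induction). You are somewhat more careful than the paper in two places it glosses over --- verifying that $j\mapsto(L_j+1+T-j)/(T+1)$ is non-increasing so the running supremum in \eqref{eq:general_pperm} is attained at $j=\min(t,T)$, and writing out the Ville argument for each fixed $\alpha$ --- though your closing remark that the backward averaging ``erodes the zero region'' is not quite right (the zero region is preserved; it is the sub-maximal middle band that grows), which is harmless since only the location of the value $1/\alpha$ is used.
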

\begin{proof}
   Ville's inequality implies that $(\pval_{t}^{\mathrm{av},T})_{t\in \mathbb{N}}$ is an anytime-valid p-value. Furthermore, note that for all $\tau\leq T$ we have $\prod_{s=1}^\tau B_s^{\alpha,T}(I_s)=E_{\tau}^{\alpha,T}(L_{\tau})$, where $E_{\tau}^{\alpha,T}$ can be calculated recursively by $E_{t-1}^{\alpha,T}(\ell)=\frac{\ell+1}{t+1} E_{t}^{\alpha,T}(\ell+1)+ \frac{t-\ell}{t+1} E_{t}^{\alpha,T}(\ell)$ with the starting point $E_{T}^{\alpha,T}=E_T^{\alpha}$ (see Theorem \ref{theo:inc_gen_eval}). Since $E_T^{\alpha}(\ell)=1/\alpha $ for all $\ell\leq \lfloor (T+1)\alpha \rfloor-1$ and $E_T^{\alpha}(\ell)<1/\alpha $ otherwise, this shows that $E_{\tau}^{\alpha,T}$ is given by 
$$
E_{\tau}^{\alpha,T}(\ell)=
\begin{cases}
            1/\alpha, &  \ell\leq \lfloor (T+1)\alpha \rfloor-1-(T-\tau) \\
            a_\ell, & \lfloor(T+1)\alpha \rfloor-(T-\tau)\leq \ell \leq \lfloor (T+1)\alpha \rfloor \\ 
            0, & \text{otherwise}
        \end{cases} \quad (\ell\in \{0,1,\ldots,T\}), 
$$
where the $a_\ell$ are constants $0\leq a_\ell <1/\alpha$. Hence, $E_{\tau}^{\alpha,T}(L_{\tau}) \geq 1/\alpha$ iff $L_{\tau}+1 \leq (T+1)\alpha - (T-\tau)$, which shows that $\pval_{\tau}^{\mathrm{av},T}=(L_{\tau}+1 + T-\tau)/(T+1)$ for $\tau\leq T$.
\end{proof}

Note that $\pval_{\tau}^{\mathrm{av},T}$ is the permutation p-value $\pperm_T$ we would obtain if all permutations after step $\tau$ were losses.

\subsection{Generalizing the Besag-Clifford p-value by betting}

The Besag-Clifford p-value $\pbc$ \eqref{eq:bc_pval} can be generalized in the same way as the permutation p-value. For this, consider again the betting function $E_{T}^{\alpha}$ defined in \eqref{eq:e-value_pperm} but let $T=\min\left(T_{\max},\lceil h /\alpha\rceil-1\right)$, where $h$ and $T_{\max}$ are the predefined number of losses and number of permutations for the Besag-Clifford method, and define  
\begin{align}
\pval_{t}^{\mathrm{av},T_{\max},h}:=\inf\left\{\alpha \in [0,1]\Bigg\vert\exists r=1,\ldots,t: \prod_{s=1}^{\min\left(r,T_{\max},\lceil h /\alpha\rceil-1\right)} B_s^{\alpha,\min\left(T_{\max},\lceil h /\alpha\rceil-1\right)}(I_s) \geq 1/\alpha\right\}.  \label{eq:general_BC}
\end{align}

\begin{proposition}
    $(\pval_{t}^{\mathrm{av},T_{\max},h})_{t\in \mathbb{N}}$ is an anytime-valid p-value with $$\pval_{\tau}^{\mathrm{av},T_{\max},h}=\min\left(\frac{h}{\tau+h-L_{\tau}}, \frac{L_{\tau}+1+T_{\max}-\tau}{T_{\max}+1}\right)$$ for all stopping times $\tau\leq \gamma(h,T_{\max})$ and $\pval_{\tau}^{\mathrm{av},T_{\max},h}=\pval_{\gamma(h,T_{\max})}^{\mathrm{av},T_{\max},h}$ otherwise. In particular, $\pval_{\gamma(h,T_{\max})}^{\mathrm{av},T_{\max},h}=\pbc_{\gamma(h,T_{\max})}$, implying that $(\pval_{t}^{\mathrm{av},T})_{t\in \mathbb{N}}$ is an anytime-valid generalization of the Besag-Clifford p-value \eqref{eq:bc_pval}.
\end{proposition}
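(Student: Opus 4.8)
The plan is to mirror the proof of Proposition~\ref{prop:general_perm_pval} almost verbatim, the only new ingredient being the bookkeeping needed to track the stopping time $\gamma(h,T_{\max})$ of the Besag--Clifford method. Anytime-validity of $(\pval_{t}^{\mathrm{av},T_{\max},h})_{t\in\mathbb{N}}$ follows immediately from Ville's inequality, exactly as before: for each fixed $\alpha$, the process $\prod_{s=1}^{\min(r,T)}B_s^{\alpha,T}(I_s)$ with $T=\min(T_{\max},\lceil h/\alpha\rceil-1)$ is a test martingale (it stays constant after step $T$), so $\mathbb{P}_{H_0}(\pval_\tau^{\mathrm{av},T_{\max},h}\leq\alpha)\leq\alpha$.

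The heart of the argument is computing the recursively defined e-values $E_{\tau}^{\alpha,T}$ obtained from the terminal betting function $E_T^{\alpha}$ of~\eqref{eq:e-value_pperm} with $T=\min(T_{\max},\lceil h/\alpha\rceil-1)$. First I would reuse the closed form for $E_{\tau}^{\alpha,T}$ derived inside the proof of Proposition~\ref{prop:general_perm_pval}: $E_{\tau}^{\alpha,T}(\ell)\geq 1/\alpha$ iff $L_\tau+1\leq (T+1)\alpha-(T-\tau)$, i.e. iff $\pval_\tau^{\mathrm{av},T}=(L_\tau+1+T-\tau)/(T+1)\leq\alpha$. Then I would explicitly resolve $T=\min(T_{\max},\lceil h/\alpha\rceil-1)$ into two cases. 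When $\lceil h/\alpha\rceil-1\leq T_{\max}$, one has $T=\lceil h/\alpha\rceil-1$, and the condition $(L_\tau+1+T-\tau)/(T+1)\leq\alpha$ simplifies (using $(T+1)\alpha\geq h$ and a careful ceiling estimate) to $h/(\tau+h-L_\tau)\leq\alpha$; otherwise $T=T_{\max}$ and the condition reads $(L_\tau+1+T_{\max}-\tau)/(T_{\max}+1)\leq\alpha$. Taking the infimum over $\alpha$ of the $\alpha$'s for which \emph{some} $r\leq t$ achieves wealth $\geq1/\alpha$ (equivalently, over which $\tau\wedge T$ achieves it, since the martingale is monotone in the relevant sense exactly as in the proof of Proposition~\ref{prop:general_perm_pval}) yields $\pval_\tau^{\mathrm{av},T_{\max},h}=\min\!\big(h/(\tau+h-L_\tau),\,(L_\tau+1+T_{\max}-\tau)/(T_{\max}+1)\big)$ for $\tau\leq\gamma(h,T_{\max})$.

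Next I would verify that this expression is constant after $\gamma(h,T_{\max})$, i.e. $\pval_\tau^{\mathrm{av},T_{\max},h}=\pval_{\gamma(h,T_{\max})}^{\mathrm{av},T_{\max},h}$ for $\tau\geq\gamma(h,T_{\max})$; this follows because all the betting functions $B_s^{\alpha,\cdot}$ stop updating the wealth once $s$ exceeds the relevant horizon, and once $h$ losses have occurred the first term saturates while the second is monotone and the two horizons coincide with $\gamma$. Finally I would evaluate at $\tau=\gamma(h,T_{\max})$ and match to~\eqref{eq:bc_pval}: if $L_{\gamma(h,T_{\max})}=h$ then $\tau+h-L_\tau=\tau=\gamma(h,T_{\max})$ and the first term equals $h/\gamma(h,T_{\max})$, which is $\leq$ the second (since we stopped because of the loss count, not the permutation cap), recovering the top case of~\eqref{eq:bc_pval}; if instead $\gamma(h,T_{\max})=T_{\max}$ with $L_{\gamma}<h$, the second term equals $(L_\gamma+1)/(T_{\max}+1)$ and one checks it is the smaller one, recovering the bottom case. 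The main obstacle I anticipate is the ceiling-function algebra in the case $T=\lceil h/\alpha\rceil-1$: one must show that $(L_\tau+1+T-\tau)/(T+1)\leq\alpha$ is \emph{equivalent} to $h/(\tau+h-L_\tau)\leq\alpha$ despite the rounding, and then that the infimum over $\alpha$ is attained/approached correctly at the boundary where the two sides of the $\min$ switch — handling the possibility that the Besag--Clifford horizon $\lceil h/\alpha\rceil-1$ itself varies with $\alpha$ is the delicate point, and I would isolate it as a short lemma about the quantity $h/(\tau+h-L_\tau)$ before assembling the final formula.
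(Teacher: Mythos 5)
Your proposal follows essentially the same route as the paper's proof: Ville's inequality for anytime-validity, reuse of the closed form for $E_{\tau}^{\alpha,T}$ from the proof of Proposition~\ref{prop:general_perm_pval}, and a case split on whether $T_{\max}$ or $\lceil h/\alpha\rceil-1$ is the binding horizon, yielding the two branches of the minimum. The one step the paper makes explicit that your sketch defers to an auxiliary lemma is the large-$\alpha$ regime $\lceil h/\alpha\rceil-1<\tau$, where the product is evaluated at the terminal bet $E^{\alpha}_{\lceil h/\alpha\rceil-1}$; there one uses $\tau\le\gamma(h,T_{\max})$ to conclude $L_{\lceil h/\alpha\rceil-1}\le h-1\le\lfloor\lceil h/\alpha\rceil\alpha\rfloor-1$, so the wealth equals exactly $1/\alpha$ --- which is precisely what makes the branch $h/(\tau+h-L_{\tau})$ attainable when $L_{\tau}=h$.
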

\begin{proof}
Ville's inequality implies that $(\pval_{t}^{\mathrm{av},T_{\max},h})_{t\in \mathbb{N}}$ is an anytime-valid p-value. Furthermore, let $\tau\leq \gamma(h,T_{\max})$ be any stopping time. First, consider $\alpha\in [0,1]$ such that $\lceil h/\alpha \rceil -1<\tau$, implying that $L_{\lceil h/\alpha \rceil -1}\leq h-1$ and $T_{\max}>\lceil h /\alpha\rceil-1$. Then, we have $E_{\min\left(T_{\max},\lceil h /\alpha\rceil-1\right)}^{\alpha}(L_{\lceil h/\alpha \rceil -1})=1/\alpha$. Now consider $\alpha \in [0,1]$ such that $\tau \leq \min(T_{\max},\lceil h/\alpha \rceil -1)$. With the same reasoning as in the proof of Proposition~\ref{prop:general_perm_pval}, it follows that $E_{\tau}^{\alpha,\min(T_{\max},\lceil h/\alpha \rceil -1)}(L_{\tau})\geq 1/\alpha$ iff $$L_{\tau}+1\leq (\min(T_{\max},\lceil h/\alpha \rceil -1) +1) \alpha - (\min(T_{\max},\lceil h/\alpha \rceil -1)-\tau).$$ If $T_{\max}< \lceil h/\alpha \rceil -1$, this is equivalent to $(L_{\tau}+1+T_{\max}-\tau)/(T_{\max}+1)\leq \alpha$ and if $T_{\max}\geq \lceil h/\alpha \rceil -1$, it is equivalent to $h/(\tau+h-L_{\tau}))\leq \alpha$. Altogether, this implies the assertion. 
\end{proof}
Again, $\pval_{\tau}^{\mathrm{av},T_{\max},h}$ is the same as the Besag-Clifford $p$-value $\pbc_{\gamma(h,T_{\max})}$ if we would only observe losses after step $\tau$; and if we would continue sampling after $\gamma(h,T_{\max})$, the p-value $\pval_{\tau}^{\mathrm{av},T_{\max},h}$ would not change.
This generalizes our results from Section \ref{sec:besag_clifford}, where we showed that the aggressive strategy recovers the Besag-Clifford method for $h=1$, to general $h$. Indeed, $\pval_{t}^{\mathrm{av},\infty,1}=\pagg_t$ for all $t\in \mathbb{N}$.

It is interesting how the anytime-valid versions of the permutation p-value $\pperm$ and the Besag-Clifford p-value $\pbc$ arise from our approach. Indeed, the betting function $E^T_{\alpha}$ in \eqref{eq:e-value_pperm} is the most powerful bet we could make for fixed $\alpha$ and number of permutations $T$. Therefore, the resulting sequential betting strategy $B_1^{\alpha,T},\ldots, B_T^{\alpha,T}$ is optimal if $\alpha$ and $T$ are fixed. However, we claim that fixing the number of permutations is not the best way to perform permutation testing, since it is more reasonable to draw a larger number of permutations if the decision is tight and fewer permutations if the decision is unambiguous. This is provided by our mimicked log-optimal strategies.

% To see this, just note that $E^{\alpha}_{\lceil 1/\alpha \rceil -1}(r)=0$ for all $r>1$ and therefore $B_{t|1}^{\alpha,\lceil 1/\alpha \rceil -1} [L]=0$ and $B_{t|1}^{\alpha,\lceil 1/\alpha \rceil -1} [W]=(t+1)/t$ for all $\alpha \in [0,1]$ and $t=1,\ldots,\lceil 1/\alpha\rceil -1$. 

% just note that the betting vector
% $E_{\min\left(B,\lceil h /\alpha\rceil-1\right)}^{\alpha}$ as  defined in \eqref{eq:e-value_pperm}
%  leads to the same decisions at level $\alpha$ as $\pbc_{\gamma(h,B)}$. Hence, \begin{align}
% \pval_{t}^{\mathrm{av},B,h}=\inf\left\{\alpha \in [0,1]\Bigg\vert\exists r=1,\ldots,t: \prod_{s=1}^{\min\left(r,B,\lceil h /\alpha\rceil-1\right)} B_s^{\alpha,\min\left(B,\lceil h /\alpha\rceil-1\right)}(R_s) \geq 1/\alpha\right\}  \label{eq:general_BC}
% \end{align}
% is an anytime-valid generalization of $\pbc_{\gamma(h,B)}$, meaning $\pval_{\gamma(h,B)}^{\mathrm{av},B,h}=\pbc_{\gamma(h,B)}$. This generalizes our results from Section \ref{sec:besag_clifford}, where we showed that the aggressive strategy recovers the Besag-Clifford method for $h=1$, to general $h$. Indeed, $\pval_{t}^{\mathrm{av},\infty,1}=\pagg_t$ for all $t\in \mathbb{N}$. To see this, just note that $E^{\alpha}_{\lceil 1/\alpha \rceil -1}(r)=0$ for all $r>1$ and therefore $B_{t|1}^{\alpha,\lceil 1/\alpha \rceil -1} [L]=0$ and $B_{t|1}^{\alpha,\lceil 1/\alpha \rceil -1} [W]=(t+1)/t$ for all $\alpha \in [0,1]$ and $t=1,\ldots,\lceil 1/\alpha\rceil -1$. 

\section{Experiments on simulated and real data\label{sec:simulations}}

We begin with three subsections on simulations, and end with two subsections on real data. In the first three subsections, we simulated $m=2000$ independent treatment vs.\ control trials (see Example \ref{example:Fisher}). For each trial, we generated $n=1000$ observations, where the probability that an observation was treated is $0.5$. Responses of control observations were generated from $N(0,1)$ and responses of treated observations from $N(\mu,1)$. We test the null hypothesis of independence between the treatment and response. For  $Y_t$, $t\geq 0$, we chose the difference in mean between the treated and untreated observations and $T=1000$ permutations were drawn randomly with replacement.

\subsection{Behavior of the anytime-valid p-values\label{sec:sim_p-value}}
In Figure~\ref{fig:sim_results} we compare the (log-transformed) p-values in ascending order obtained by the classical permutation p-value $\pperm_T$ with the sequential p-values obtained from the binomial mixture strategy, the binomial strategy and the aggressive strategy. The p-values of the sequential strategies are defined by $1/\max_{b=1,\ldots,T} W_b$. Due to Ville's inequality, these are valid p-values. The binomial mixture strategy was applied with density $u_c$, $c=0.9 \alpha$, and the parameter of the binomial strategy was chosen as in Algorithm~\ref{alg:1}.

\begin{figure}[h!]
\centering
\includegraphics[width=15cm]{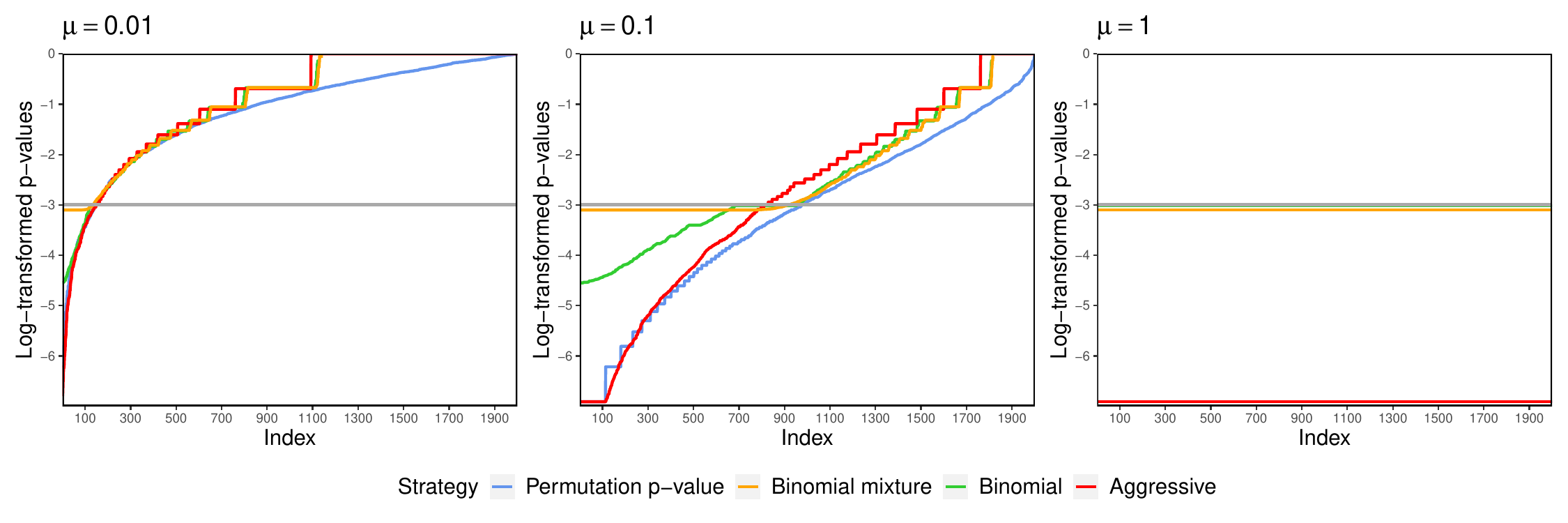}\\
\includegraphics[width=15cm]{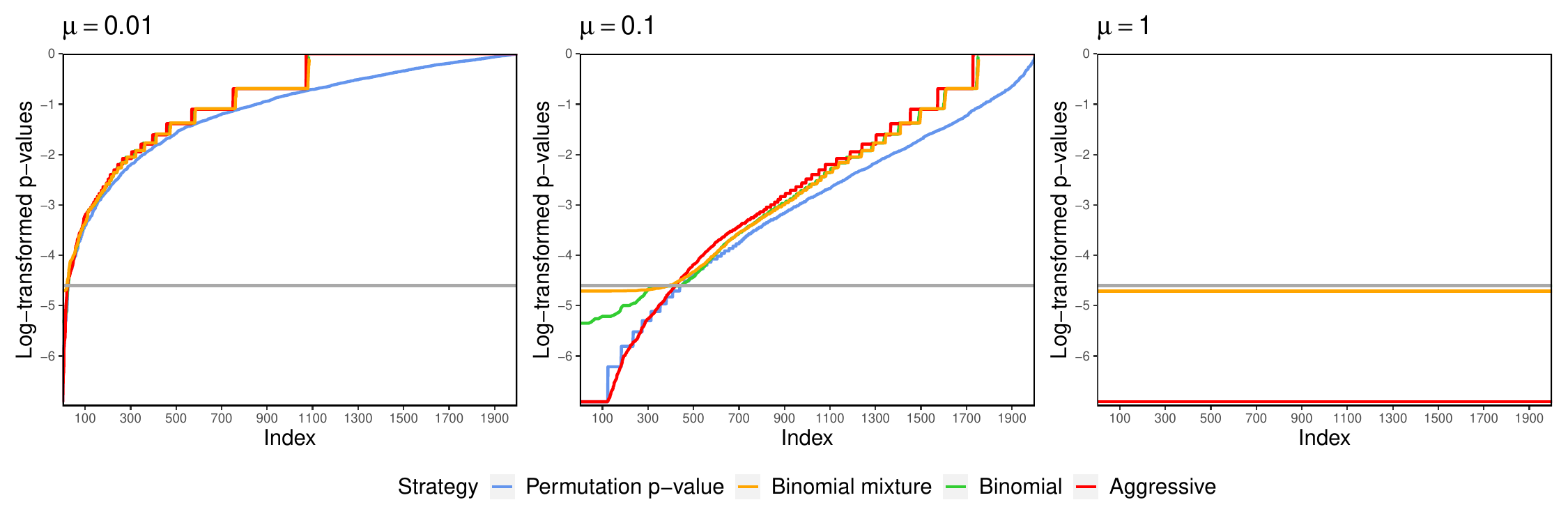}%
\caption{All plots display log-transformed p-values in ascending order for $2000$ simulations --- following the experimental protocol in Subsection~\ref{sec:sim_p-value}. The grey horizontal line equals $\log(\alpha)$, where $\alpha$ equals 0.05 (top row) or 0.01 (bottom row). Lower is better, so the permutation p-value performs best, but when focusing on the number of p-values less than $\alpha$ (the power at level $\alpha$), it is similar to our betting strategies. In the right plots, the permutation p-value and the aggressive strategy p-value are the same. \label{fig:sim_results} }\end{figure}

The results show that the permutation p-value performs best in all cases, except for $\mu=1$, where the aggressive strategy and the permutation p-value are the same. Nevertheless, the number of log-p-values below the level $\log(\alpha)$, which is illustrated by the grey horizontal line, is nearly the same for the binomial mixture strategy, binomial strategy and the permutation p-value. Therefore, if we test at level $\alpha$, the power should be quite similar, while the number of permutations is lower using the sequential strategies (we will examine the power in detail in the subsequent sections). Further, note that $T$ is fixed for the permutation p-value in advance, while we could decide to carry on testing using the sequential strategies in cases where the current wealth looks promising.

% \begin{figure}[h!]
% \centering
% \includegraphics[width=15cm]{}%
% \caption{Log-transformed $p$-values in ascending order for $2000$ simulations. The grey horizontal line equals $\log(\alpha)$, where $\alpha=0.01$. Permutation $p$-value performs best, however, number of $p$-values less than $\alpha$ is similar as with $\pchoice$-strategy. In the right plot, the permutation $p$-value and the aggressive strategy are the same. \label{fig:sim_results_2} }\end{figure}

\subsection{Power and sample size comparison with the Besag-Clifford method\label{sec:sim_power}}

In Figure~\ref{fig:sim_results_power} and~\ref{fig:sim_results_power_2}, we explore the power and number of permutations needed to obtain the decision for $\alpha=0.05$ and $\alpha=0.01$. The power is displayed in the upper left plots, the average number of permutations in the upper right plots and the average number of permutations when the testing process was stopped for rejection and futility in the lower left and right plots, respectively. The sequential strategies are stopped for futility if the wealth drops below $\alpha$ and for rejection if the wealth exceeds $1/\alpha$. We replaced the classical permutation p-value $\pperm_T$ with the Besag-Clifford sequential strategy \citep{besag1991sequential} with parameter $h=\alpha T$. In this case, the power of $\pbc_{\gamma(h,T)}$ is the same as the power of $\pperm_{T-1}$, even though the Besag-Clifford approach may stop before the maximum number of permutations $T$ are sampled \citep{silva2009power}. Our sequential strategies are applied with the same parameters as described in Section~\ref{sec:sim_p-value}.

\begin{figure}[h!]
\centering
\includegraphics[width=15cm]{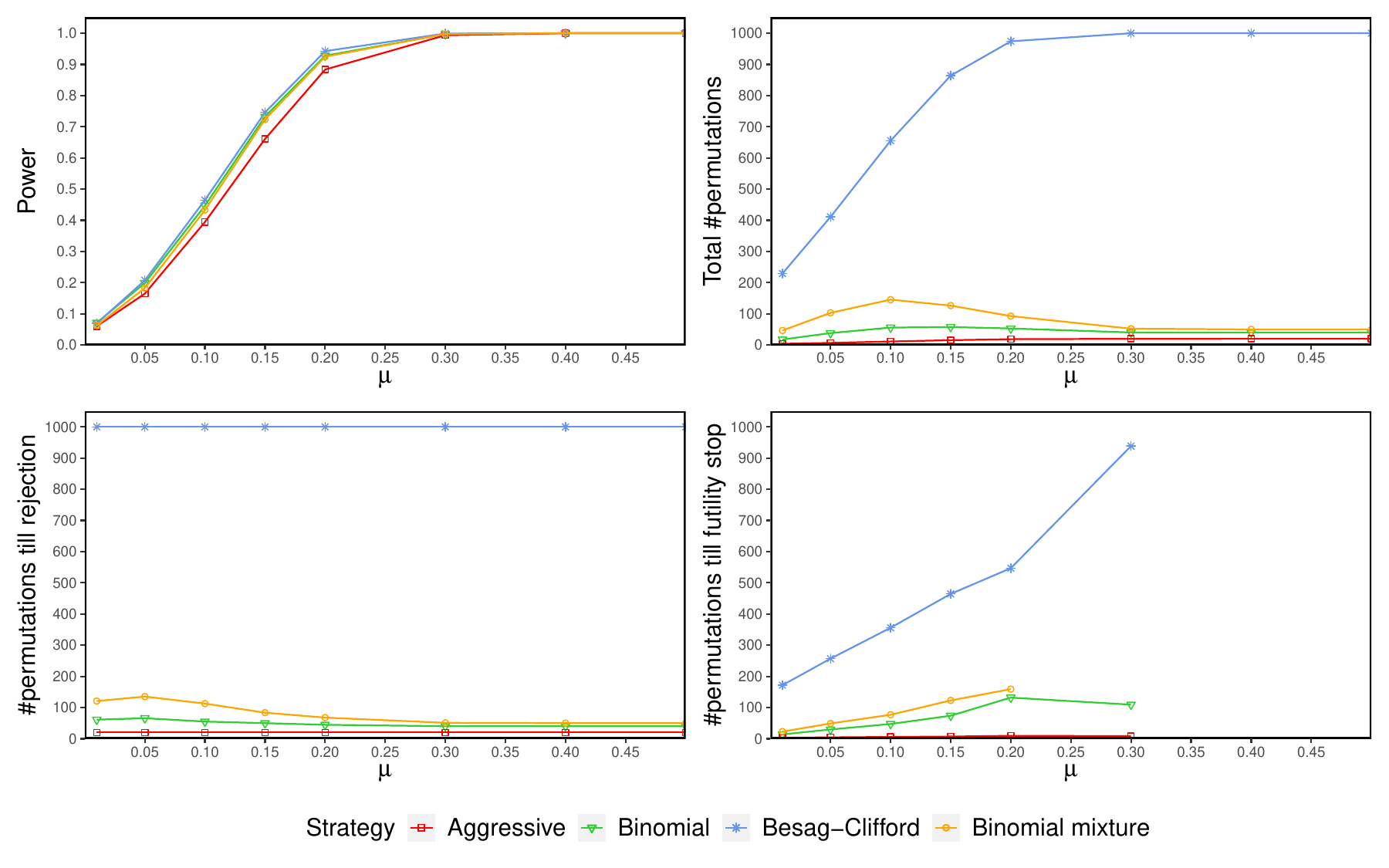}%
\caption{Power and average number of permutations until the decision was obtained for  $\alpha=0.05$ and different strengths of the alternative $\mu$ --- following the experimental protocol in Subsection~\ref{sec:sim_power}. Upper left plot: Empirical power; Upper right plot: Average number of permutations $\bar{\tau}$; Lower left plot: Average number of permutations until it was stopped for rejection $\bar{\tau}_1$; Lower right plot: Average number of permutations until it was stopped for futility $\bar{\tau}_0$. Relationship between the plots is given by $\bar{\tau}=(\bar{\tau_1}m_1+\bar{\tau}_0m_0+T(m-m_1-m_0))/m$, where $m_1$ and $m_0$ are the number of simulation runs where it was stopped for rejection and futility, respectively. The power of binomial strategy and binomial mixture strategy is similar as with Besag-Clifford and number of permutations is reduced substantially, particularly when the alternative is strong. \label{fig:sim_results_power} }\end{figure}

In line with Figure~\ref{fig:sim_results}, the power obtained by the binomial strategy and binomial mixture strategy is only slightly worse than the one obtained by Besag-Clifford. However, the number of permutations until a decision is obtained can be reduced a lot by our proposals. In particular, note that the Besag-Clifford strategy only stops earlier for futility, but never in case of a rejection (lower left plots). Therefore, it works well under under the null hypothesis or under very weak alternatives, but under strong alternatives it does not save any computational time compared to the classical approach. In case of $\alpha=0.01$ (Figure \ref{fig:sim_results_power_2}), the computational gain of our strategies is lower and the power loss of the binomial mixture strategy larger. In the next section, we show that this power loss can be avoided using the stochastic rounding technique described in Section \ref{sec:stochastic_rounding}. 

% However, in this case one may also chooses a larger $B$ than $1000$ for the classical permutation p-value. 

\begin{figure}
\centering
\includegraphics[width=15cm]{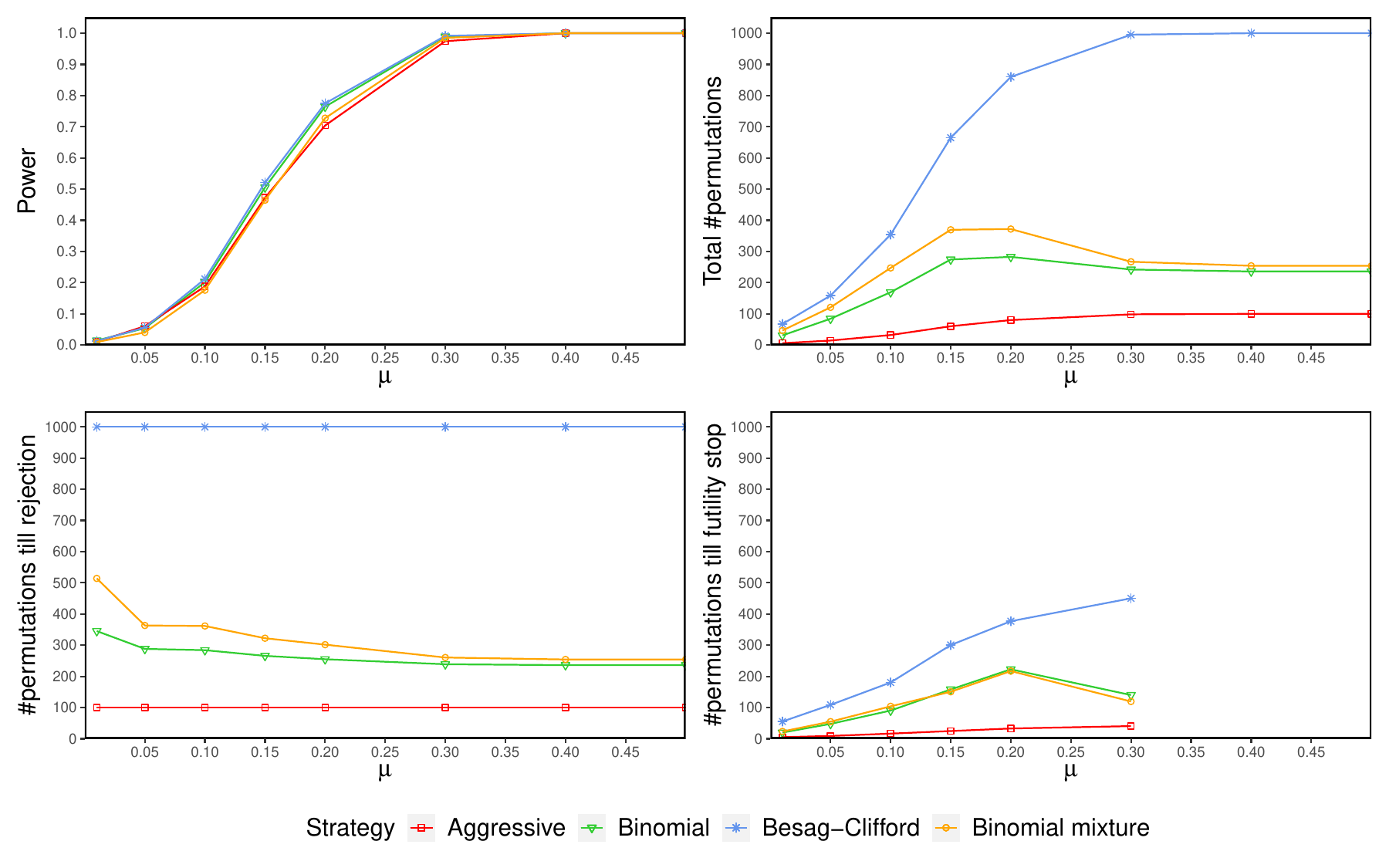}%
\caption{Identical caption as Figure~\ref{fig:sim_results_power}, except with $\alpha=0.01$. \label{fig:sim_results_power_2} }\end{figure}

\subsection{Improving power by stochastic rounding\label{sec:sim_stochastic_rounding}}
% Comparing Figures \ref{fig:sim_results_power} and \ref{fig:sim_results_power_2}, it seems like the power of the binomial mixture strategy decreases more than the power of the other strategies when $\alpha$ drops from $0.05$ to $0.01$. However, the asymptotic guarantee for the binomial mixture strategy is similar in both cases: it rejects almost surely after a finite number of permutations if $\plim < 0.9\alpha$ (see Corollary \ref{corol:rr_mixture}). The reason could therefore be that in the case of $\alpha=0.01$, the binomial mixture strategy has a promising wealth after $1000$ permutations (close to $100$), but it is not sufficient for rejection yet ($<100$). Since we know that if $\plim\neq \alpha$ the wealth either converges to a constant larger than $1/\alpha$ or to $0$ (see Theorem \ref{theo:asymptotic_ap}), it is likely that the binomial mixture strategy will reject if we would continue sampling such that the power loss compared to Besag-Clifford becomes smaller. However, if we do not want to 

In this section, we quantify the gain in power obtained by stochastic rounding described in Section \ref{sec:stochastic_rounding}. Note that applying stochastic rounding to the aggressive strategy does not lead to an improvement, since it either rejects or has zero wealth when stopped, which is why we do not include it in this section. The simulation setup is the same as in Section \ref{sec:sim_power} and therefore the stopping times are the same as in Figures \ref{fig:sim_results_power} and \ref{fig:sim_results_power_2}. 

 Figure \ref{fig:sim_results_randomized_improvement} shows that stochastic rounding leads to a significant improvement in power, particularly for the binomial mixture strategy in case of $\alpha=0.01$. This also implies that the binomial mixture strategy was often stopped while having a promising wealth. Therefore, drawing $1000$ permutations might be not enough when the $\alpha$ is low and the decision is difficult ($\plim$ close to $\alpha$). Hence, the binomial mixture strategy is likely to achieve further rejections if we would continue sampling. In Figure \ref{fig:sim_results_randomized_improvement_2} we compare the power of these randomized strategies with the Besag-Clifford method. There is hardly any power difference visible. In particular, the Besag-Clifford method and randomized binomial mixture strategy seem to overlap completely. Note that our methods obtained this power while reducing the number of permutations considerably (Figure \ref{fig:sim_results_power} and Figure \ref{fig:sim_results_power_2}) and offering the option to stop at any point in time. 

 In Appendix~\ref{appn:additional_sim_results}, we provide additional simulations results comparing the randomized binomial mixture strategy with the Besag-Clifford method in terms of power and number of permutations for different parameter configurations. All results show the same behavior, the two methods provide a similar power, while the binomial mixture strategy reduces the number of permutations substantially.

% The greatest gain in power from the randomized strategies is achieved in the cases where the loss in power of the non-randomized strategies was largest compared to the Besag-Clifford method (compare Figure \ref{fig:sim_results_randomized_improvement} to Figure \ref{fig:sim_results_power} and \ref{fig:sim_results_power_2}). Therefore, 

\begin{figure}
\centering
\includegraphics[width=15cm]{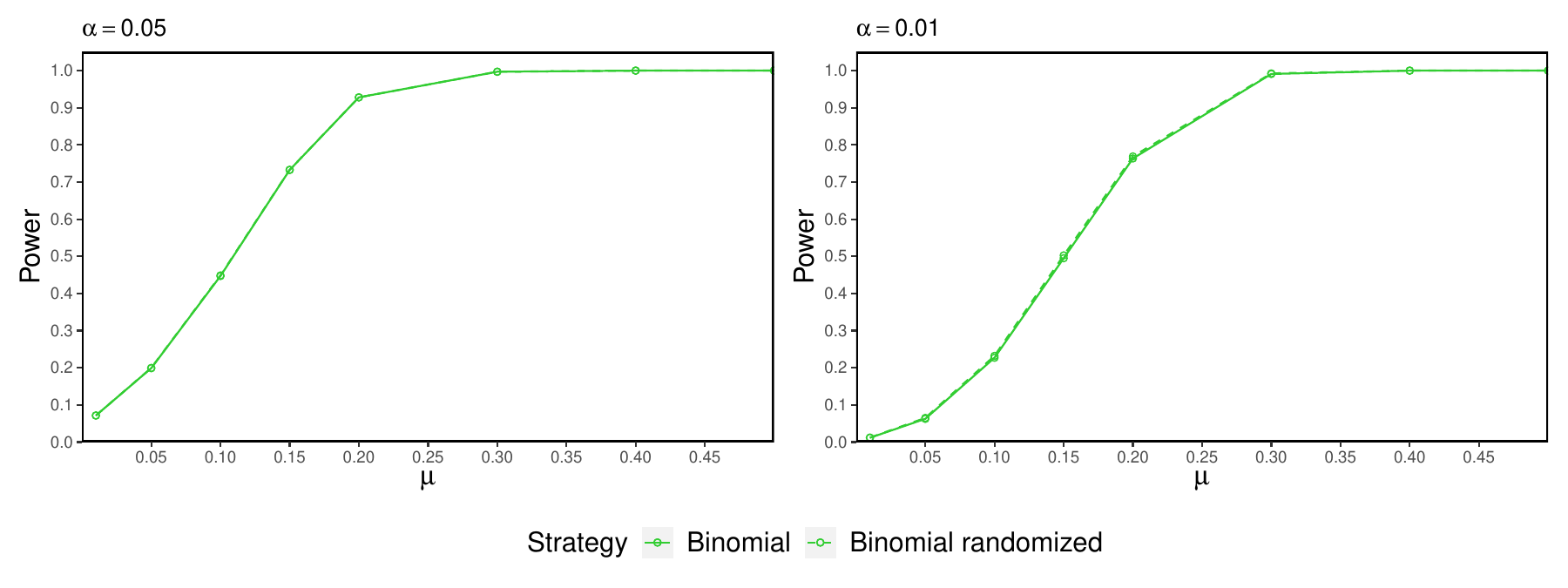}\\
\includegraphics[width=15cm]{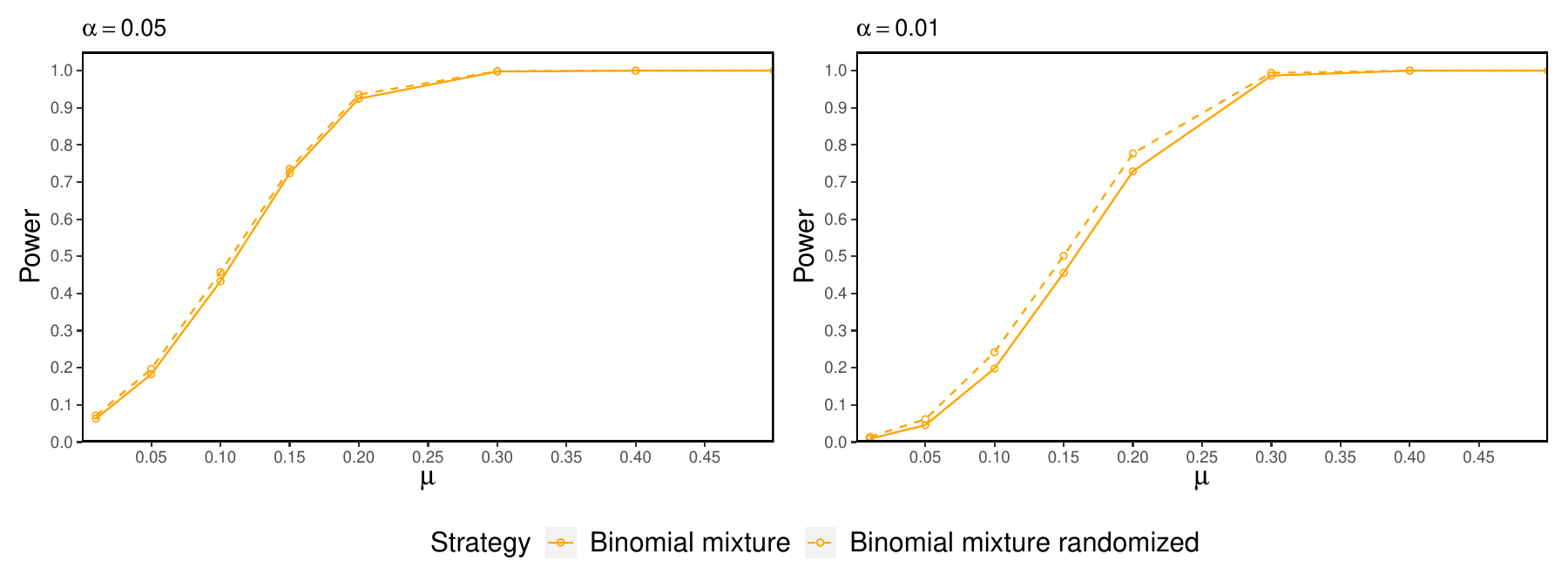}%
\caption{Power comparison of randomized and non-randomized strategies for different significance levels $\alpha$ and strengths of the alternative $\mu$ --- following the experimental protocol in Subsection \ref{sec:sim_stochastic_rounding}. The most significant power gain by the randomized strategies is obtained in the cases where the power loss compared to Besag-Clifford was largest (compare to Figure \ref{fig:sim_results_power} and \ref{fig:sim_results_power_2}).
\label{fig:sim_results_randomized_improvement} }
\end{figure}

\begin{figure}
\centering

\includegraphics[width=15cm]{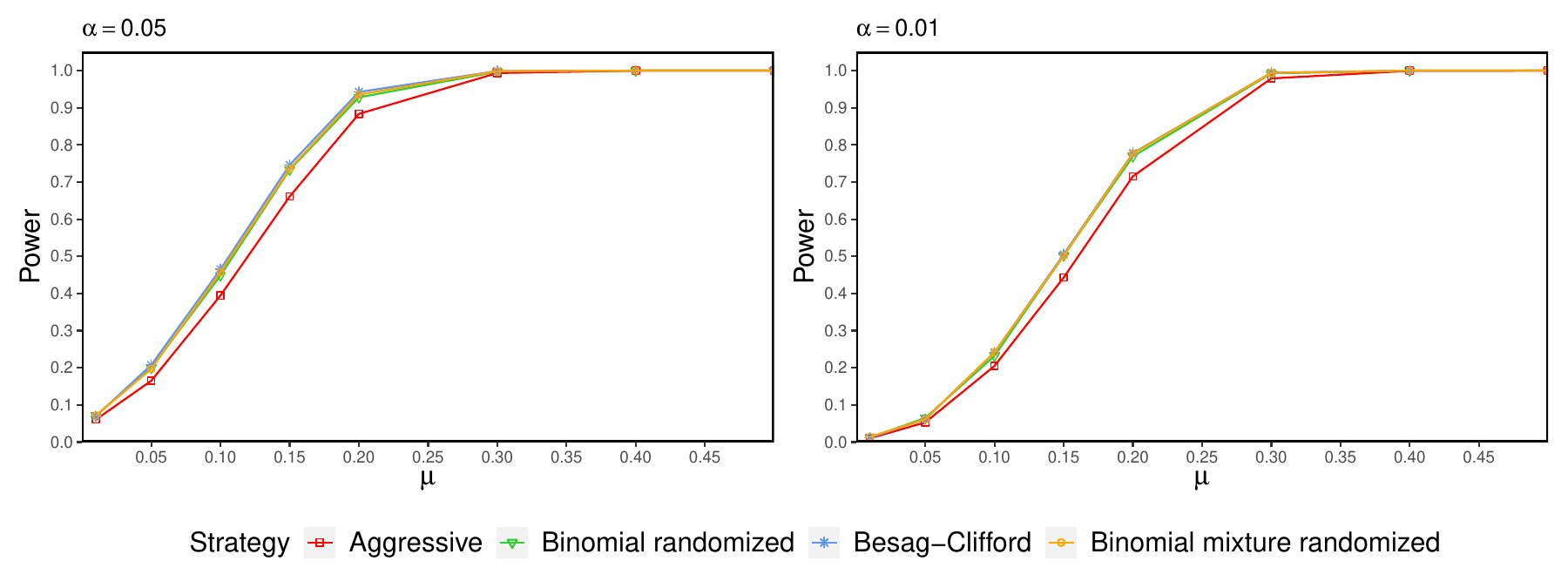}%
\caption{Power of randomized strategies and Besag-Clifford for different significance levels $\alpha$ and strengths of the alternative $\mu$ --- following the experimental protocol in Subsection \ref{sec:sim_stochastic_rounding}. The power of the randomized improvements of the binomial and binomial mixture strategy is very similar to the power obtained by Besag-Clifford, while the number of permutations is reduced significantly (see Figure \ref{fig:sim_results_power} and \ref{fig:sim_results_power_2}).
\label{fig:sim_results_randomized_improvement_2} }
\end{figure}

\subsection{Real data: testing Fisher's sharp null\label{sec:real_data}}

We consider a treatment vs.\ control trial with binary outcomes that was already used by \citet{ding2017paradox} and \citet{rosenbaum2002overt}. There were 18 successes among the 32 treated observations and 5 successes among the 21 control observations. We are interested in testing Fisher's sharp null hypothesis of no individual treatment effect $H_0^{\mathrm{IT}}$ at level $\alpha=0.05$. Under the Stable Unit Treatment Value Assumption (SUTVA; \cite{rubin1980randomization}), which states that there is only one version of the treatment and no interference between the individuals, this null hypothesis can be tested using the permutation test described in Example \ref{example:Fisher}.

\begin{figure}[h!]
\centering
\includegraphics[width=12cm]{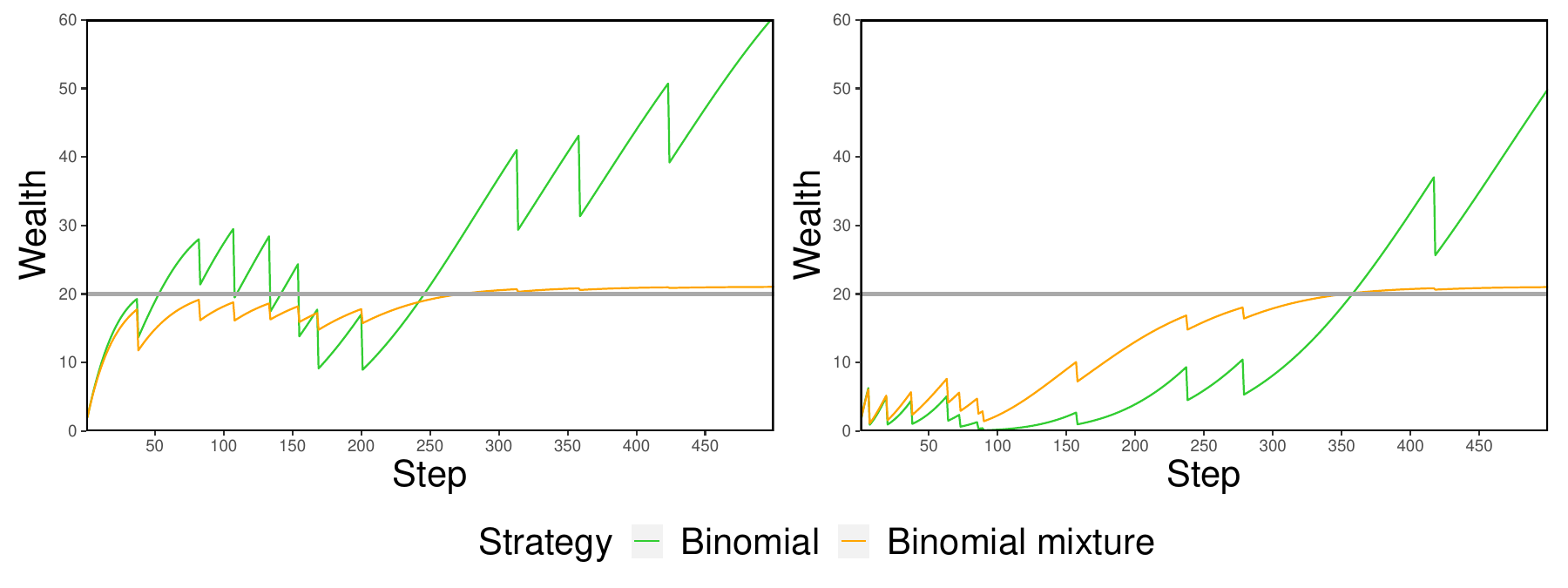}%
\\
\includegraphics[width=12cm]{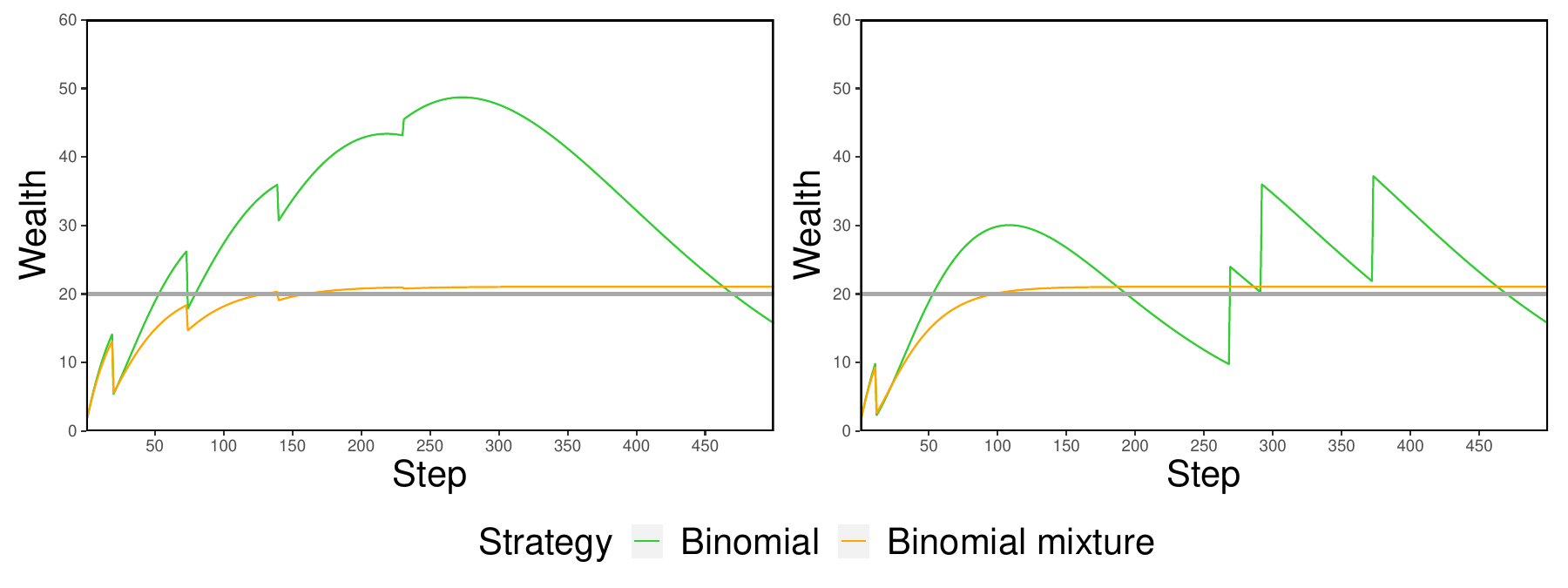}%
\caption{Wealth processes of the binomial strategy and binomial mixture strategy when applied on a real treatment vs.\ control trial for different runs --- following the experimental protocol in Subsection~\ref{sec:real_data}. The grey horizontal line equals $1/\alpha$.  At the beginning the increase in wealth is similar for both strategies. As the wealth of the binomial mixture strategy approaches the level $1/\alpha$, it begins to bet less aggressively (as expected).\label{fig:real_data_wealth_process} }\end{figure}

Fisher's exact test offers a p-value of $0.026$ and thus the null hypothesis could be rejected by this test. We applied our binomial strategy (Algorithm~\ref{alg:1}) and binomial mixture strategy with density $u_{c}$, $c=0.95\alpha$, $1000$ times  with a maximum number of $5000$ permutations. In this case, we omitted the stop for futility to not affect the power. The null hypothesis was rejected by both strategies in all $1000$ runs. The binomial strategy needed a mean number of $85$ permutations and a median number of $53$ permutations, while the binomial mixture strategy required a mean number of $147$ permutations and median number of $97$ permutations to obtain the decision. As a comparison, we also applied the Besag-Clifford method with $h=10$ and $T=h/\alpha$, leading to a mean (and median) number of $200$ permutations. Note that in this case the mean (and median) number of permutations is greater than with our sequential approaches. Still, the Besag-Clifford method failed to reject in $3$ of the $1000$ cases. If $h$ is reduced to $8$ (leading to a mean and median number of $160$ permutations), it even failed to reject in $16$ cases. The reason for this gain in efficiency by our methods is that they adapt the number of permutations to the difficulty of the decision, as also described in the previous subsection. Also note that the binomial mixture strategy even guarantees to reject at some finite time in this case, as the true p-value is below $c=0.0475$. 

We illustrated the wealth processes of our sequential strategies for different runs and the first $500$ steps in Figure~\ref{fig:real_data_wealth_process}. Since the binomial mixture strategy converges to $1/c$, we expect it to bet more safely when it approaches $1/\alpha$. For this reason, the binomial strategy allows to reject the hypothesis earlier in most cases. However, when there are a lot of losses, e.g. as in the beginning of the upper right plot, the binomial mixture strategy leads to a larger wealth and thus seeming more promising when the evidence is not very strong. Also note that when the number of wins is much larger than the number of losses, as in the lower plots, the binomial strategy begins to lose wealth when further wins occur --- after the hypothesis is already rejected. Due to \eqref{eq:log_opt_strategy}, this is the case if $(L_{t-1}+1)/(t+1)<\pchoice$, where $\pchoice=1/\lceil\sqrt{2\pi e^{1/6}}/\alpha\rceil$. 

It might seem redundant to reduce the number of permutations in this simple example, since it is not computationally expensive anyway. However, computational cost can be a serious issue in causal inference, particularly when rerandomization is needed \citep{morgan2012rerandomization}. To balance previously observed covariates between treatment groups, patients are often randomized multiple times until some prespecified balance criterion is met. In order to obtain an unbiased test for $H_0^{\mathrm{IT}}$, the same criterion need to apply for each of the permuted datasets. 
Thus, each permutation is checked for its acceptability and unacceptable permutations are discarded. This intermediate check can be very costly and lead to an enormous increase of the required number of permutations, especially when the number of covariates is large and/or the balance criterion is strict. Since the covariate balance is better with stricter criteria, reducing the number of permutations helps with balancing covariates between treatment groups \citep{morgan2015rerandomization}.

\subsection{Real data: testing conditional independence under model-X\label{sec:real_CRT}}
% Here, we consider testing the independence between two variables $W$ and $X_0$ conditional on some covariates $Z$, denoted by $$H_0^{\mathrm{CI}}: W \independent X_0 \ |\ Z.$$
% \citet{candes2018panning} proposed to test $H_0^{\mathrm{CI}}$ using a conditional randomization test (CRT) that requires the distribution of $X_0|Z$ to be known, which is also known as the model-X assumption. For example, this is a reasonable assumption when a lot of unlabeled realizations of $(X_0,Z)$ (without the corresponding $W$) are available that can be used as training data to estimate the distribution of $X_0|Z$.
% Let $X_1,\ldots, X_B$ be sampled independently from the distribution of $X_0|Z$  and $Y_i:=T(W,X_i,Z)$, $i\in \{0,1,\ldots,B\}$, for an arbitrary test statistic $T$. The test statistics $Y_1,\ldots,Y_B$ are always exchangeable conditional on $Y_0$, while $Y_0,Y_1,\ldots,Y_B$ are exchangeable under $H_0^{\mathrm{CI}}$, matching the setting of our paper. The CRT essentially calculates the classical permutation p-value $\pperm$ on $Y_0,\ldots,Y_B$. However, generating samples from $X_0|Z$ can be computationally challenging or recalculating the test statistic on these samples can be very costly, so a lot of computational effort can potentially be saved by using our sequential permutation tests. 

Here, we consider testing the independence between two variables $W_0$ and $X$ conditional on some covariates $Z$, as described in Example \ref{example:CRT}. We follow the analysis by \citet{grunwald2023anytime} and \citet{berrett2020conditional} and apply our methods on the Capital bikeshare dataset, which is available at \url{https://ride.capitalbikeshare.com/system-data}. In this example, $W_0$ denotes the logarithm of the ride duration, $X$ the binary variable membership (casual users vs.\ long-term membership) and $Z$ is the three dimensional vector of starting point, destination and starting time. For the generation of the test statistics $Y_0,Y_1,\ldots$, we used the code of \citet{grunwald2023anytime}, which is available in the supplementary material of their paper. The distribution $W_0|Z$ is modeled as Gaussian $N(\mu_Z,\sigma_Z^2)$, where $\mu_Z$ and $\sigma_Z^2$ are estimated by a kernel regression of $W_0$ with the time of day as covariate, separately for each combination of starting point and destination. The test statistics are then given by $Y_t=|\mathrm{cor}(X,W_t-\mathbb{E}[W_0|Z])|$. 
 
 We applied our binomial strategy and binomial mixture strategy $1000$ times with the same parameters as in Section~\ref{sec:real_data} (in particular, the type I error level was set to $0.05$). In the original code of \citet{grunwald2023anytime}, the training dataset consists of $158,741$ observations and the test data of $7,173$ observations. In this case, all $Y_i$, $i\geq 1$, were smaller than $Y_0$ such that the binomial strategy needed $44$ permutations and the binomial mixture strategy $61$ permutations to reject the null hypothesis. We also applied our methods when only the first $2000$ observations of the test data were included, thus increasing the probability of $Y_i$ being greater or equal than $Y_0$. This led to a mean permutation p-value of $0.008$ and the null hypothesis was still rejected by both sequential strategies in all $1000$ runs. The binomial strategy needed a mean number of $49$ permutations and a median number of $44$ permutations, while the binomial mixture strategy required a mean number of $83$ permutations and median number of $61$ permutations to obtain the decision. Since the evidence against $H_0$ is very large in this data, the classical permutation p-value and the Besag-Clifford method most likely also reject with a small number of maximum permutations $T$. However, note that we usually do not know how large the evidence in the data really is and therefore it would be a risky strategy to prespecify a small $T$. In contrast, our methods provide strong guarantees regarding the resampling risk while rejecting fast due to large evidence against the null.

\section{Discussion}
We have introduced a new framework for the construction of anytime-valid Monte-Carlo tests based on a simple testing by betting approach. Our general method encompasses all permutation tests based on the number of losses and particularly provides anytime-valid generalizations of the classical permutation p-value and the Besag-Clifford method. We derived the (oracle) log-optimal betting strategy under the considered alternative, given the true distribution of the limiting permutation p-value. We showed that for any given working prior on the limiting permutation p-value $\plim$, the mimicked log-optimal strategy can be written as a mixture of simple binomial strategies. Based on this, we proposed a strategy with zero resampling risk after a finite number of permutations for all $\ptrue \in [0,c)$, where $c<\alpha$ can be chosen arbitrarily close to $\alpha$. Furthermore, we performed a simulation study and a real data analysis which illustrates that our anytime-valid p-values have a similar power as the classical permutation p-value and the Besag-Clifford method while the number of permutations until the decision is obtained can be substantially reduced.

In Table \ref{tab:my_label} we list the pros and cons of the proposed betting strategies. To summarize, the aggressive strategy should be used when one is mainly concerned with reducing the number of permutations and a (small) loss of power can be accepted in return. However, in most cases the binomial and binomial mixture strategies are more appropriate, since they have a (slightly) larger power while still keeping the number of permutations quite low. If stochastic rounding is accepted, the binomial mixture strategy should be used, since it is the theoretically most grounded strategy and led to the best performance after the randomization. If stochastic rounding should be avoided, both the binomial and binomial mixture strategies are reasonable. The binomial strategy tends to stop earlier, while the binomial mixture strategy offers a better asymptotic behavior and can be used to bound the resampling risk. 
% It should be noted that these strategies are mainly constructed to achieve fast and powerful decisions. Thus, if ones goal is just to accumulate evidence in form of an e-value, other priors for $\plim$ that do not adapt to a prespecified level $\alpha$ can be more appropriate. For example, this might be the case if it is planned to aggregate the e-value with e-values from other experiments. In Appendix \ref{sec:beta_dist} we derive the closed form for the binomial mixture strategy with the density of a beta distribution.  and compare it to our approaches.    

In future work, we hope to explore applications of our approach to multiple testing.

\begin{table}[h!]
    \centering
    \begin{tabular}{|l|l|l|}
    \hline
         & \textbf{Pros} & \textbf{Cons} \\ \hline
       \textbf{Aggressive strategy}  &
           \makecell[l]{\tabitem Requires least possible permutations \\
           \tabitem Parameter-free}
        & \makecell[l]{\tabitem Weak result regarding \\ 
        \hphantom{\tabitem} resampling risk \\ \tabitem Slightly less power than the \\ \hphantom{\tabitem} other strategies} 
        \\ \hline
       \textbf{Binomial strategy} &  \makecell[l]{\tabitem Reduces permutations substantially \\ \tabitem Similar power as classical permutation \\ \hphantom{\tabitem}  p-value in all simulation scenarios \\ \tabitem Always rejects earlier than classical \\
       \hphantom{\tabitem} permutation p-value, if \\ \hphantom{\tabitem} $\pperm < 1/\lceil \sqrt{2 \pi e^{1/6}}/\alpha\rceil \approx \alpha/e$ \\ 
       \hphantom{\tabitem} (Proposition \ref{prop:always_reject}) \\ \tabitem Parameter-free (see Algorithm \ref{alg:1})} & \makecell[l]{\tabitem Cannot ensure arbitrarily \\ \hphantom{\tabitem} small resampling risk}  \\ \hline
       \makecell[l]{\textbf{Binomial mixture} \\ \textbf{strategy}} & \makecell[l]{\tabitem Reduces permutations substantially   \\
       \tabitem Nearly identical power as classical \\ \hphantom{\tabitem}
       permutation p-value in simulations \\ \hphantom{\tabitem} after stochastic rounding\\
       \tabitem Can keep resampling risk arbitrarily \\ 
       \hphantom{\tabitem} small (see Corollary \ref{corol:rr_mixture} and \ref{corol:rr-random_mixture}) \\ 
       \makecell[l]{ \tabitem Parameter $c\in [0,1]$ has simple \\ \hphantom{\tabitem} interpretation with respect to\\\hphantom{\tabitem} 
        asymptotic behavior (Theorem \ref{theo:asymptotic_ap})}
       \\ 
       \makecell[l]{\tabitem Wealth is decreasing in number \\ \hphantom{\tabitem}  of losses} \\
       \tabitem Theoretically most grounded strategy} & \makecell[l]{\tabitem Slightly lower power   \\
       \hphantom{\tabitem} than the binomial strategy  \\
       \hphantom{\tabitem} in the  simulations without  \\
       \hphantom{\tabitem}  stochastic rounding} \\ \hline
    \end{tabular}
    \caption{Pros and cons of the proposed betting strategies.}
    \label{tab:my_label}
\end{table}

\subsection*{Acknowledgments}
AR thanks Leila Wehbe for posing the question of whether permutation tests can be stopped early under null and alternative, and for stimulating practical discussions. LF thanks Timothy Barry for practical feedback on the first version of the paper. LF and AR thank Peter Grünwald for helpful comments. LF and AR thank two anonymous reviewers for useful suggestions that significantly improved the paper. LF acknowledges funding by the Deutsche Forschungsgemeinschaft (DFG, German Research Foundation) – Project number 281474342/GRK2224/2. AR was funded by NSF grant DMS-2310718.

\bibliography{main}
\bibliographystyle{plainnat}

\begin{appendix}

\section{More details on related work}
\label{appsec:related-work}

For the particular problem of testing our null hypothesis that the observations are exchangeable, there exists no nontrivial test martingale for the canonical data filtration (generated by the observations) \citep{ramdas2022testing}. There are two different approaches to circumvent this issue. The first idea is to replace the data filtration by a coarser one. This means the information to be released at each step is restricted. This was used by Vovk and colleagues in a conformal prediction approach~\citep{vovk2003testing, vovk2005algorithmic,vovk2021testing}, and this is also the approach taken in the current paper (modulo some differences around randomization of ranks). The second approach is to construct an e-process which is not a test martingale, based on universal inference \citep{wasserman2020universal,ramdas2022testing}. The idea is to partition the composite null hypothesis, define a test martingale for each subset and take the infimum over these test martingales to yield an e-process. In terms of testing by betting, it could be interpreted as playing multiple games, where the gambler's wealth is the minimum wealth across all these games. In this case, the wealth process is adapted to the canonical data filtration, has expectation of at most one under the null at any stopping time in the original filtration of the data.

Very recently, other sequential tests of exchangeability were proposed, based on the former approach. \citet{saha2023testing} introduced a test martingale for Markovian alternatives that is based on \enquote{pairwise betting}, which means that two observations are observed and bet on simultaneously. 
% This reduces the filtration only slightly but provably leads to a power of one almost surely. 
Independently, \citet{koning2023online} introduced a sequential test for general group invariance (including exchangeability). In particular, he derives likelihood ratios for group invariance and defines a test martingale as running product of these. Similarly, \citet{lardy2024anytime} construct general tests of group invariance based on conformal prediction. Our test martingale is also based on coarsening the data filtration by not revealing the $Y_t$ itself, but the loss indicator $I_t$, and is thus somewhat related to this line of work. The main difference of our approach to the previously mentioned approaches is the alternative we test against.  \citet{vovk2021testing} focuses on change-point alternatives, \citet{ramdas2022testing} and \citet{saha2023testing} on Markovian alternatives, while \citet{koning2023online} considers general group invariance versus normal distributions with a location shift. In contrast, our paper focuses on the usual setting of permutation tests (where permutations rather than data are sampled sequentially), where under the alternative, only the original test statistic is special while all the permuted statistics are exchangeable by design. It should be noted that the above works give the mentioned applications as examples of their more general frameworks. The general frameworks may include (a modified version of) our approach as a special case, as described in Remark~\ref{remark:vovk} for Vovk's conformal prediction approach. However, none of these works make the connection to sequential permutation tests (where not the data but the permutations are drawn sequentially). We demonstrated in this paper that this special case is of great practical interest, simplified these general frameworks for our particular alternative to make the method accessible to a broader audience and constructed powerful betting strategies yielding test martingales with easy calculable closed forms.

\section{Calibrating $\pperm_T$ into an e-value\label{sec:calibrated_p}} 
An e-value $E$ is a random variable with expectation less than one under the null hypothesis. For example, our test martingale $W_{\tau}$ is an e-value for each stopping time $\tau$. A (non-random) function $h$ such that $h(E)$ is a p-value, where $E$ is an e-value, is called e-to-p calibrator.  We have already argued in Section~\ref{sec:aggressive_strat} that $h(E)=1/E$ is an e-to-p calibrator.  It can even even be shown that $h$ is the only admissible e-to-p calibrator, which means that there exists no other e-to-p calibrator $g$ with $g\geq h$ and $g\neq h$ \citep{vovk2021values}. Note that we restrict to non-random calibrators $h$. Otherwise, $h(E)=1/E$ could be improved by $\tilde{h}(E)=U/E$, where $U$ is standard uniformly distributed and independent of $E$. The fact that $\tilde{h}(E)$ defines a valid p-value follows from a randomized improvement of Markov's inequality \citep{ramdas2023randomized}.

Analogously, we can define p-to-e calibrators (in the following, we just call these calibrators). However, in this case there are many more possible choices of calibrators. Below, we show that classical calibrators are inadmissible in our setting and introduce a new class of admissible calibrators for such problems. 

 \citet{vovk2021values} showed that a decreasing function $h$ is an admissible calibrator if and only if $h$ is upper semicontinuous, $h(0)=0$ and $\int_0^1 h(u) d u=1$. However, this result is restricted to continuous p-values while the permutation p-value $\pperm_T$ only takes on discrete values. For example, a simple and often used calibrator class satisfying the above conditions is given by $h(p)=\kappa p^{\kappa-1}$ for some $\kappa \in (0,1)$. Note that, e.g., for $\kappa=1/2$ and $T=2$, we have
$$\EE[h(\pperm_T)]=\frac{1}{3} \frac{1}{2} \sqrt{3}+\frac{1}{3} \frac{1}{2} \sqrt{\frac{3}{2}}+\frac{1}{3} \frac{1}{2} \sqrt{1} =\frac{1}{6} \left(\sqrt{3}+\sqrt{\frac{3}{2}}+1\right) < 1,$$ which shows that $h$ is inadmissible in this case. In the following we provide a characterization of admissible calibrators for p-values with the same support as $\pperm_T$.

\begin{proposition}\label{prop:admissible_cal}
    Let $\pperm_T$ be a p-value with support $\{1/(T+1), \ldots,T/(T+1), 1\}$ and $\mathbb{P}(\pperm_T\leq r/(T+1))=r/(T+1)$ for all $1\leq r \leq T+1$. The calibrator $h$ is admissible for $\pperm_T$ if and only if $\sum_{r=1}^{T+1} h(r/(T+1))=T+1$. 
\end{proposition}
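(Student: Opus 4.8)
The plan is to exploit that $\pperm_T$ is \emph{exactly} uniform on its support, which collapses the calibrator condition into the stated sum condition; the admissibility statement is then a short Pareto-domination argument.

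\textbf{Step 1: distribution of $\pperm_T$ and reformulation.} From $\mathbb{P}(\pperm_T\le r/(T+1))=r/(T+1)$ for $1\le r\le T+1$ together with the fact that the support is $\{1/(T+1),\dots,T/(T+1),1\}$, telescoping gives $\mathbb{P}(\pperm_T=r/(T+1))=1/(T+1)$ for every $r$. Hence, for any nonnegative function $h$,
\[
\EE[h(\pperm_T)]=\frac{1}{T+1}\sum_{r=1}^{T+1}h\!\left(\frac{r}{T+1}\right),
\]
so $h$ is a calibrator for $\pperm_T$ (i.e.\ $h(\pperm_T)$ is an e-value) iff $\sum_{r=1}^{T+1}h(r/(T+1))\le T+1$. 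Since $h(\pperm_T)$ depends on $h$ only through the vector $(h_1,\dots,h_{T+1})$ with $h_r:=h(r/(T+1))$ --- and a calibrator is nonincreasing, so $h_1\ge\cdots\ge h_{T+1}\ge 0$ --- both the calibrator property and admissibility for $\pperm_T$ are properties of this vector, which I will manipulate directly.

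\textbf{Step 2: the sum condition implies admissibility.} Assume $\sum_r h_r=T+1$ and let $g$ be any calibrator for $\pperm_T$ with $g\ge h$. Then $g(r/(T+1))\ge h_r$ for each $r$, so $T+1\ge\sum_r g(r/(T+1))\ge\sum_r h_r=T+1$; equality combined with $g(r/(T+1))-h_r\ge 0$ forces $g(r/(T+1))=h_r$ for all $r$. Thus $g$ coincides with $h$ on the support of $\pperm_T$, hence equals $h$ as a calibrator for $\pperm_T$, so no strict improvement exists and $h$ is admissible.

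\textbf{Step 3: admissibility implies the sum condition.} By contraposition, suppose $h$ is a calibrator with $\sum_r h_r<T+1$, and set $\delta:=T+1-\sum_r h_r>0$. Define $g:=h$ on $(1/(T+1),1]$ and $g:\equiv h_1+\delta$ on $[0,1/(T+1)]$. Then $g$ is nonnegative, still nonincreasing on $[0,1]$ (it is constant on $[0,1/(T+1)]$ and $h_1+\delta\ge h_1\ge h(p)$ for $p>1/(T+1)$), and $\sum_r g(r/(T+1))=\sum_r h_r+\delta=T+1$, so $g$ is a valid calibrator; moreover $g\ge h$ and $g\ne h$ (they differ at $1/(T+1)$). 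Hence $h$ is inadmissible, which completes the contrapositive. (If calibrators are additionally required to be upper semicontinuous, this $g$ is too.)

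I expect no genuine obstacle: the argument is the telescoping computation of Step 1 plus two one-line comparisons. The only point needing a little care is Step 3, where the dominating object must be exhibited as an honest nonincreasing function on $[0,1]$ rather than merely a dominating vector of support values --- this is why $h$ is raised on the entire sub-interval $[0,1/(T+1)]$ rather than only at the point $1/(T+1)$, which would break monotonicity below that point.
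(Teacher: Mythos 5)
Your proposal is correct and rests on the same core computation as the paper's proof, namely $\EE[h(\pperm_T)]=\frac{1}{T+1}\sum_{r=1}^{T+1}h(r/(T+1))$, from which the paper simply asserts the result "follows immediately." You go further by spelling out the two-sided admissibility argument — including the careful construction in Step 3 of a dominating calibrator that remains a genuine nonincreasing function on $[0,1]$ — which is a worthwhile elaboration but not a different route.
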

\begin{proof}
    The proposition follows immediately from $\EE[h(\pperm_T)]=\sum_{r=1}^{T+1}h(r/(T+1))/(T+1) $ under the null hypothesis.
\end{proof}

An example of an admissible calibrator for $\pperm_T$ is
\begin{align}
h(\pperm_T)=h(r/(T+1)) = (T+1) / (r v_{T+1}), \quad 1 \leq r \leq T+1,
\label{eq:calibrator_1}\end{align}
where $v_{T}=1+1/2+1/3+\dots 1/T$ is the $T$-th harmonic number. This is reminiscent of the Benjamini-Yekutieli
correction in multiple testing \citep{benjamini2001control}. Indeed, if these e-values (calibrated p-values) are plugged into the e-BH procedure \citep{wang2022false}, one obtains a method that is equivalent to applying the BY procedure on the original p-values.

% Note that $\EE_{H_0}[h(P_B)]=1$, so $h$ is a valid calibrator.

% In particular note that $1/P_B$ is not a valid e-value, but the calibrator 
% \[h(P_B) := 1/(\ell_{B+1}P_B).\] 

% This appears to be a prototypical way to convert any exchangeable p-value into an e-value. 

Another option is to define
\begin{align}
h(\pperm_T)=h(r/(T+1)) = \frac{(T+1) / \sqrt{r}}{s_{T+1}} > \sqrt{T+1}/(2\sqrt{r}) = 1/(2\sqrt{\pperm_T}),
\label{eq:calibrator_2}\end{align}
where $s_{T+1} = \sum_{i=1}^{T+1} 1/\sqrt{i} < 2\sqrt{T+1}-1$, which shows that this is a uniform improvement of the calibrator by  \citet{vovk2021values} for $\kappa=1/2$.

\section{The Besag-Clifford p-value for $T=\infty$: a negative binomial permutation test\label{sec:neg_bin}}

The Besag-Clifford strategy also offers an exact p-value if we do not specify an upper bound $T$ of maximum permutations, meaning that the p-value $$\pval_{\gamma(h)}, \quad \text{where } \gamma(h):=\lim_{T \to \infty} \gamma(h,T), $$
is also valid and exact. 
While Besag and Clifford pointed that out, they, and also most follow up work, were mainly concerned with the case $T<\infty$ in order to obtain a closed (bounded) sampling scheme. Nevertheless, the maximum number of permutations drawn is still almost surely finite, and, conditional on the data, has finite expectation and variance. Indeed, conditional on $\plim=\ptrue$, $\gamma(r)$ is just a sum of (shifted) geometric random variables, or equivalently, $\gamma(r)-r$ is negative binomial with \enquote{success} probability $\ptrue$, so we call the test as the negative binomial permutation test and provide the result below for easier reference.

\begin{proposition}[Negative binomial permutation test]
For any pre-defined number of losses $h \geq 1$, define $\gamma(h) := \inf\{t \geq 1: \sum_{i=1}^t \mathbbm{1}\{Y_t \geq Y_0\} \geq h\}$ as being the time of the $h$-th loss. Then, $\pval_{\gamma(h)} := h/\gamma(h)$ is a valid and exact p-value for $H_0$: for any $N \geq 0$, $\mathbb{P}_{H_0}(\pval_{\gamma(h)}\leq h/(h+N)) = h/(h+N)$.
\end{proposition}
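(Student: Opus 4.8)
The plan is to reduce the claim to the elementary fact, established in Section~\ref{sec:betting_is_general}, that under $H_0$ the number of losses $L_t=\sum_{i=1}^t\mathbbm{1}\{Y_i\ge Y_0\}$ after any fixed number $t$ of permutations is distributed uniformly on $\{0,1,\dots,t\}$; this rests on the exchangeability of $Y_0,Y_1,\dots$ together with the tie-breaking randomization of Remark~\ref{remark:randomization}, which is exactly what makes $L_t$ \emph{exactly} uniform rather than merely stochastically dominated by a uniform.

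First I would check that $\pval_{\gamma(h)}$ is well defined, i.e.\ that $\gamma(h)<\infty$ almost surely under $H_0$. By De Finetti's theorem (Section~\ref{sec:resampling_risk}), under $H_0$ the indicators $I_1,I_2,\dots$ are i.i.d.\ $\mathrm{Ber}(\plim)$ conditionally on $\plim\sim U[0,1]$, and since $\mathbb{P}_{H_0}(\plim=0)=0$, conditionally on $\plim=\ptrue>0$ the partial sums $L_t$ tend to infinity almost surely, so the $h$-th loss occurs in finite time; moreover $\gamma(h)$ is a stopping time with respect to $(\mathcal{I}_t)_{t\ge1}$.

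The key step is then a chain of equivalences. Fix $N\ge0$ and note
\[
\{\pval_{\gamma(h)}\le h/(h+N)\}=\{h/\gamma(h)\le h/(h+N)\}=\{\gamma(h)\ge h+N\}=\{L_{h+N-1}\le h-1\},
\]
where the last equality holds because $L_\cdot$ is nondecreasing with unit increments, so $\gamma(h)=\inf\{t:L_t=h\}$ exceeds $h+N-1$ precisely when fewer than $h$ losses have occurred by time $h+N-1$ (when $N=0$ both sides are the whole space). Applying the uniform law to the fixed time $t=h+N-1$, the variable $L_{h+N-1}$ is uniform on the $h+N$ points $\{0,1,\dots,h+N-1\}$ under $H_0$, hence $\mathbb{P}_{H_0}(L_{h+N-1}\le h-1)=h/(h+N)$, which is the asserted equality. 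Since $\{h/(h+N):N\ge0\}$ is exactly the support of $\pval_{\gamma(h)}$, this simultaneously proves validity and exactness.

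I do not anticipate a genuine obstacle. The only points needing care are the almost-sure finiteness of $\gamma(h)$ (so that both $\pval_{\gamma(h)}$ and the event translation above are legitimate) and being explicit that it is the tie-breaking randomization that upgrades the conclusion from ``valid'' to ``exact''. Optionally I would add the remark motivating the name: conditionally on $\plim=\ptrue$, the variable $\gamma(h)-h$ is $\mathrm{NegBin}(h,\ptrue)$, so an alternative derivation integrates $\int_0^1\mathbb{P}(\mathrm{NegBin}(h,p)\ge N)\,dp=h/(h+N)$ over $\ptrue\sim U[0,1]$; but the route through uniformity of $L_t$ is shorter and self-contained.
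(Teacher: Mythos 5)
Your proof is correct and follows essentially the same route as the paper's: both rewrite $\{\pval_{\gamma(h)}\le h/(h+N)\}$ as the fixed-time event $\{\gamma(h)>h+N-1\}=\{L_{h+N-1}\le h-1\}$ and then invoke exchangeability of $Y_0,\dots,Y_{h+N-1}$ — you phrase this as uniformity of $L_{h+N-1}$ on $\{0,\dots,h+N-1\}$, the paper as the rank of $Y_0$ being among the $h$ largest, which is the same fact. Your added remarks on the almost-sure finiteness of $\gamma(h)$ and the role of tie-breaking in achieving exactness are correct and slightly more explicit than the paper's one-line argument.
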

\begin{proof}
Note that  $\mathrm{supp}(\pval_{\gamma(h)})=\{1,h/(h+1),h/(h+2),\ldots\}$ and for every $N\in \mathbb{N}_0$, we have $\mathbb{P}_{H_0}(\pval_{\gamma(h)}\leq h/(h+N))=\mathbb{P}_{H_0}(\gamma(h) > h+N-1)=h/(h+N)$, where the last equation follows from the fact that $\gamma(h)>h+N-1$ means that $Y_0$ is among the $h$ largest test statistics of $Y_0,Y_1,\ldots,Y_{h+N-1}$. 
\end{proof}

We remark that $\pval_{\gamma(h)}$ bears an interesting resemblance to the standard permutation p-value $\pperm_T$. Both of them equal the number of losses divided by the number of permutations; $\pperm_T$ fixes the denominator to equal $T$ and lets the numerator be determined by the data, while  $\pval_{\gamma(h)}$ fixes the numerator as $h$, and lets the denominator be determined by the data. The fact most relevant for us is that since
\[
\pagg_{\tau} =  \pagg_{\gamma(1)},
\]
the same exactness of p-value holds for $\pagg_\tau$. Said differently, one can obtain a valid and exact p-value as the inverse of the first time at which a permuted statistic exceeds the original unpermuted statistic. We summarize this in another proposition for ease of future reference.

\begin{proposition}
    If the aggressive strategy is stopped when its wealth hits zero, as defined by the time $\gamma(1)$, the resulting stopped p-value equals $1/\gamma(1)$, and is an exact p-value.  Further, if one stops at some arbitrary stopping time $\rho \leq \gamma(1)$, then  $1/\rho$ is also a valid (potentially inexact) p-value.
\end{proposition}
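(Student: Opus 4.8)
The plan is to make the wealth trajectory of the aggressive strategy fully explicit, read off the stopped p-value from it, and then borrow the exactness already established for the negative binomial permutation test together with the anytime-validity of the p-process~\eqref{eq:p-process}. First I would record the wealth: as long as no loss has yet occurred, the bet $B_t(0)=(t+1)/t$ gives $W_t=\prod_{s=1}^t (s+1)/s = t+1$, and at the first loss time $\gamma(1)$ the bet $B_{\gamma(1)}(1)=0$ collapses the wealth to $W_{\gamma(1)}=0$ (and it stays $0$ thereafter, irrespective of how one bets once a loss has occurred). Hence the running maximum up to time $\gamma(1)$ is attained one step before the loss: $\sup_{s\le\gamma(1)} W_s = W_{\gamma(1)-1} = \gamma(1)$, using the convention $W_0=1$ to cover $\gamma(1)=1$. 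Since the p-value produced upon stopping at a stopping time $\tau$ is $1/\sup_{s\le\tau}W_s$ by~\eqref{eq:p-process}, stopping at $\tau=\gamma(1)$ yields exactly $\pagg_{\gamma(1)} = 1/\gamma(1)$.

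For exactness I would invoke the negative binomial permutation test proposition proved just above, taken with $h=1$: there $\gamma(1)$ is precisely the time of the first loss, and $1/\gamma(1)$ is shown to be an exact p-value, with $\mathbb{P}_{H_0}(1/\gamma(1)\le h/(h+N))=h/(h+N)$ reducing to $\mathbb{P}_{H_0}(1/\gamma(1)\le 1/(1+N)) = 1/(1+N)$ for every integer $N\ge 0$. (If a self-contained line is preferred: under $H_0$ exchangeability makes the tie-broken rank of $Y_0$ among $Y_0,\dots,Y_t$ uniform on $\{1,\dots,t+1\}$, so $\mathbb{P}_{H_0}(\gamma(1)>t) = \mathbb{P}_{H_0}(Y_0\text{ is the maximum among }Y_0,\dots,Y_t)=1/(t+1)$, which is exactly exactness on the support $\{1,1/2,1/3,\dots\}$.) Combining this with the previous paragraph gives the first assertion.

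For the second assertion I would use domination of p-values. For any stopping time $\rho$ with $\rho\le\gamma(1)$ pathwise, the computation above shows $\sup_{s\le\rho}W_s = \rho+1$ on $\{\rho<\gamma(1)\}$ and $\sup_{s\le\rho}W_s = \rho$ on $\{\rho=\gamma(1)\}$; in both cases $\sup_{s\le\rho}W_s \ge \rho$, so $\pagg_\rho = 1/\sup_{s\le\rho}W_s \le 1/\rho$. Because $(\pagg_t)_{t\ge1}$ is an anytime-valid p-value (a p-process, by Ville's inequality~\eqref{eq:ville}), $\pagg_\rho$ is a valid p-value; and any random variable that dominates a valid p-value pointwise is again a valid p-value, so $1/\rho$ is valid. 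It need not be exact, since on $\{\rho<\gamma(1)\}$ one has $1/\rho>1/(\rho+1)=\pagg_\rho$, a strict inequality, which explains the ``potentially inexact'' qualification.

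The argument is essentially bookkeeping, so I anticipate no genuine obstacle; the one point requiring care is the off-by-one in $\sup_{s\le\gamma(1)}W_s=\gamma(1)$ --- the maximum is realised at step $\gamma(1)-1$, where the wealth equals $\gamma(1)$, not at step $\gamma(1)$, where it has already dropped to $0$ --- together with keeping track of exactly when the inequality $\sup_{s\le\rho}W_s\ge\rho$ is strict, which is what distinguishes the exact statement at $\gamma(1)$ from the merely valid statement at a general $\rho\le\gamma(1)$.
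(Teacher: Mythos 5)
Your proposal is correct and follows essentially the same route as the paper: the paper also reads off $\pagg_{\gamma(1)}=1/\gamma(1)$ from the explicit wealth trajectory of the aggressive strategy and imports exactness from the negative binomial permutation test with $h=1$, with validity at earlier $\rho$ coming from the anytime-validity of the p-process. Your explicit handling of the off-by-one in the running maximum and the domination argument $1/\rho \geq \pagg_\rho$ just makes precise what the paper leaves as a brief remark.
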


Thus, if a level $\alpha$ is prespecified, one can stop as soon as $1/\rho$ drops below $\alpha$ (or the first loss occurs, whichever happens first).
The above propositions slightly generalize the results by~\citet{besag1991sequential} about $\pbc_{\gamma(1,T)}$, because it can potentially stop at any arbitrary stopping time before $\gamma_{1,T}$. 

% Nevertheless, it shows the limit of what is achievable in an extreme situation.

\section{Nonstandard asymptotic behavior of log-optimal bets and wealth\label{appn:asymp_log_opt}}

In the following, we characterize the long-term behavior of the log-optimal strategy and the corresponding logarithmic wealth. 

\begin{theorem}
    Let $B_t^*$ be the log-optimal betting strategy \eqref{eq:log_opt_strategy} at step $t$. It holds that $B_t^*(I_t) \to 1$ almost surely for $t\to \infty$.
    \label{theo:constant_bet}
\end{theorem}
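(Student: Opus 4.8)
The plan is to reduce everything to the asymptotics of the three data-dependent quantities appearing in the log-optimal bets \eqref{eq:log_opt_strategy}, namely $\pt$, $(t-L_{t-1})/(t+1)$ and $(L_{t-1}+1)/(t+1)$, and then to split into cases according to the value of the de Finetti limit $\plim$.

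First I would record two facts. (i) By de Finetti's theorem and the strong law, as recalled in Section~\ref{sec:resampling_risk}, $L_t/t \stackrel{a.s.}{\to}\plim$, and conditionally on $\plim$ the indicators $I_1,I_2,\dots$ are i.i.d.\ $\mathrm{Ber}(\plim)$; hence $(t-L_{t-1})/(t+1)\stackrel{a.s.}{\to}1-\plim$ and $(L_{t-1}+1)/(t+1)\stackrel{a.s.}{\to}\plim$. (ii) Since $I_1,I_2,\dots$ are exchangeable, $L_{t-1}$ is sufficient for $\plim$, so $\pt=\mathbb{E}[\plim\mid L_{t-1}]=\mathbb{E}[\plim\mid\mathcal I_{t-1}]$; this is a bounded Doob martingale for $(\mathcal I_t)_t$, so it converges a.s.\ to $\mathbb{E}[\plim\mid\mathcal I_\infty]$, and since $\plim=\lim_t L_t/t$ is $\mathcal I_\infty$-measurable, that limit equals $\plim$ itself. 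Thus $\pt\stackrel{a.s.}{\to}\plim$.

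On the event $\{0<\plim<1\}$ the conclusion is then immediate: writing $B_t^*(0)=(1-\pt)\cdot\frac{(t+1)/t}{(t-L_{t-1})/t}$ and $B_t^*(1)=\pt\cdot\frac{(t+1)/t}{(L_{t-1}+1)/t}$, fact (i) together with (ii) shows that both factors converge a.s.\ to $\tfrac{1-\plim}{1-\plim}=1$ and $\tfrac{\plim}{\plim}=1$ respectively, so $B_t^*(I_t)\to 1$ no matter which outcome occurs in round $t$. The remaining work is on the boundary $\plim\in\{0,1\}$, where one of the two betting factors need not converge at all --- for instance, if $L_{t-1}=0$ then $B_t^*(1)=\pt(t+1)$, which may diverge if $\pt$ decays slowly. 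The resolution is that on $\{\plim=0\}$ the indicators are $\mathrm{Ber}(0)$, so a.s.\ $I_t=0$ and hence $L_{t-1}=0$ for every $t$, whence $B_t^*(I_t)=B_t^*(0)=(1-\pt)\tfrac{t+1}{t}\to1$ using (ii) with $\pt\to0$; symmetrically, on $\{\plim=1\}$ a.s.\ $I_t=1$ and $L_{t-1}=t-1$ for every $t$, so $B_t^*(I_t)=B_t^*(1)=\pt\tfrac{t+1}{t}\to1$. Combining the three cases gives the claim.

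I expect the only genuinely delicate point to be this last one: making sure the possibly divergent betting factor is never actually selected on the boundary events, which forces the case split rather than a one-line limit argument. Everything else --- the frequency limit and the identification $\pt\to\plim$ --- is a routine application of the strong law and of martingale convergence.
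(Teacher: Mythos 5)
Your proposal is correct and follows essentially the same route as the paper: establish $\pt\to\plim$ a.s.\ via martingale convergence (the paper applies L\'evy's zero--one law to $\mathbb{E}[I_1\mid I_2^t]$ rather than Doob's theorem to $\mathbb{E}[\plim\mid\mathcal I_{t-1}]$, but these are the same fact), then split into the interior case $\plim\in(0,1)$ where both betting factors converge to $1$, and the boundary cases $\plim\in\{0,1\}$ where the potentially divergent factor is a.s.\ never selected. No gaps.
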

This result means that even the log-optimal betting strategy will eventually have its wealth multiplied a factor arbitrarily close to one. This is a key aspect where our problem setting differs from many other standard i.i.d.-like problems --- for the latter, the optimal strategies usually multiply the wealth of the gambler by a constant factor strictly larger than one in expectation. The reason for this unusual behavior is that in our case the indicators $I_1, I_2,\ldots$ are exchangeable under both, the null and the alternative. Therefore, the null and alternative hypothesis learn the same distribution of $I_t$ in the long run. Thus, even if $H_1$ was true, at some point we can longer exploit this knowledge by betting against $H_0$.

\begin{proof}
      From Section~\ref{sec:resampling_risk}, we know that $I_1,I_2,\ldots$ are i.i.d. conditional on $\plim$ with $\mathbb{E}(I_1|\plim)=\plim$. Furthermore,  for $\pt$ from Proposition~\ref{theo:log_optimal}, we have $\pt=\mathbb{E}(I_t|I_1^{t-1})=\mathbb{E}(I_1|I_2^{t})$, since $I_1,\ldots,I_t$ are exchangeable. Levy's zero-one law implies that $\mathbb{E}(I_1|I_2^{t}) \stackrel{a.s.}{\to }\mathbb{E}(I_1|I_2^{\infty})=\mathbb{E}(I_1|\plim,I_2^{\infty})=\mathbb{E}(I_1|\plim)=\plim$ for $t \to \infty$ and thus $\pt \stackrel{a.s.}{\to } \plim$. We now show that $B_t^*(I_t)|\{\plim=\ptrue\} \stackrel{a.s.}{\to } 1$  for every possible realization $\ptrue$ of $\plim$. First, consider the case $\ptrue\in (0,1)$. Then $\pt (t+1)/(L_{t-1}+1)|\{\plim=\ptrue\} \stackrel{a.s.}{\to } \ptrue \cdot 1/\ptrue=1$ and $(1-\pt)(t+1)/(t-L_{t-1})|\{\plim=\ptrue\} \stackrel{a.s.}{\to } (1-\ptrue) \cdot 1/(1-\ptrue)=1$  for $t \to \infty$ and due to Proposition~\ref{theo:log_optimal}, we have $B_t^*(I_t)|\{\plim=\ptrue\} \stackrel{a.s.}{\to } 1$. Now consider the case $\ptrue=0$. Since $\mathbb{P}(I_t=1|\plim=0)=0$ for all $t\in \mathbb{N}$, we have $L_t|\{\plim=0\} = 0$ for all $t\in \mathbb{N}$ almost surely. Since  $(1-\pt)(t+1)/(t-L_{t-1}) |\{\plim=0\} \stackrel{a.s.}{\to } 1$, we have $B_t^*(I_t)|\{\plim=0\}\stackrel{a.s.}{\to } 1$. If $\ptrue=1$, we analogously have $L_t |\{\plim=1\} =t$ almost surely and $\pt (t+1)/(L_{t-1}+1)|\{\plim=1\} \stackrel{a.s.}{\to } 1$.
\end{proof}

\begin{corol}
    Let $W_T^*$ be the wealth obtained by the log-optimal betting strategy after $T$ permutations. It holds $\log(W_T^*)/T \to 0$ for $T\to \infty$ almost surely.
    \label{corol:log_opt_wealth}
\end{corol}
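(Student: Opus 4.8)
The plan is to derive this directly from Theorem~\ref{theo:constant_bet} by a Cesàro-averaging argument. Since $W_T^* = \prod_{t=1}^{T} B_t^*(I_t)$, we have the identity
\[
\frac{\log(W_T^*)}{T} \;=\; \frac{1}{T}\sum_{t=1}^{T} \log\bigl(B_t^*(I_t)\bigr),
\]
so it suffices to show that each summand is almost surely finite and that $\log(B_t^*(I_t)) \to 0$ almost surely as $t \to \infty$; the conclusion then follows by applying the Cesàro mean theorem ($a_t \to a$ in $\RR$ implies $T^{-1}\sum_{t=1}^{T} a_t \to a$) along each sample path.

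First I would verify that $B_t^*(I_t) > 0$ almost surely for every fixed $t \in \NN$, so that the logarithm is well defined and finite. By Proposition~\ref{theo:log_optimal}, the realized factor is $B_t^*(1) = \pt\,(t+1)/(L_{t-1}+1)$ on $\{I_t = 1\}$ and $B_t^*(0) = (1-\pt)\,(t+1)/(t - L_{t-1})$ on $\{I_t = 0\}$, and the ratios $(t+1)/(L_{t-1}+1)$ and $(t+1)/(t-L_{t-1})$ are always strictly positive since $0 \le L_{t-1} \le t-1$. As $\pt = \mathbb{P}(I_t = 1 \mid L_{t-1})$, the event $\{I_t = 1,\ \pt = 0\}$ is $\mathbb{P}$-null and likewise $\{I_t = 0,\ \pt = 1\}$ is $\mathbb{P}$-null; hence $B_t^*(I_t) > 0$ a.s. By a countable intersection over $t$, on an event of probability one the wealth $W_T^*$ is strictly positive — and $\log(W_T^*)$ finite — for all $T$ simultaneously.

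On that same almost-sure event, Theorem~\ref{theo:constant_bet} gives $B_t^*(I_t) \to 1$, so by continuity of $\log$ at $1$ we get $\log(B_t^*(I_t)) \to 0$. Applying the Cesàro mean theorem pathwise to $a_t = \log(B_t^*(I_t))$ then yields $T^{-1}\sum_{t=1}^{T} \log(B_t^*(I_t)) = \log(W_T^*)/T \to 0$ almost surely, which is the claim. I expect no real obstacle here: the result is essentially immediate once Theorem~\ref{theo:constant_bet} is available. The only point that needs a moment's care is ensuring that the log-optimal wealth never vanishes (equivalently, the a.s. finiteness of the logarithms), which is exactly the degenerate-case check in the preceding paragraph.
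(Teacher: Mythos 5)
Your proof is correct and follows essentially the same route as the paper's: both deduce the claim from Theorem~\ref{theo:constant_bet} by a Ces\`aro-averaging argument applied to $\log(W_T^*)/T = T^{-1}\sum_{t=1}^T \log(B_t^*(I_t))$ (the paper just writes out the $\epsilon$--$N$ splitting by hand instead of citing the Ces\`aro mean theorem). Your explicit check that $B_t^*(I_t)>0$ almost surely, so that every summand is finite, is a worthwhile precaution that the paper leaves implicit when it bounds the first $N-1$ terms by a finite constant $a$, but it does not change the substance of the argument.
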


Again, for other i.i.d.-like problems in this area, the wealth grows exponentially and $\log(W_T^*)/T$ usually tends to a constant larger than 0 under the alternative. In our problem however, the wealth (and its maximum) will hit a plateau, and at some point sampling more permutations does not help anymore (as would be expected).

\begin{proof}
   Let $\epsilon>0$. Due to Theorem~\ref{theo:constant_bet}, there exists $N\in \mathbb{N}$ such that $|\log(B_t^*(I_t))|<\epsilon$ a.s. for all $t\geq N$. Therefore, 
   $$|\log(W_T^*)|\leq \sum_{t=1}^{N-1}|\log(B_t^*(I_t))| + \sum_{t=N}^T |\log(B_t^*(I_t))| < a + \epsilon T$$
   a.s. for all $T>N$, where $a$ is some non-negative constant. Hence, there exists $n\in \mathbb{N}$ such that $|\log(W_T^*)/T|< \epsilon$ a.s. for all $T\geq n$. 
\end{proof}

Theorem~\ref{theo:constant_bet} and Corollary~\ref{corol:log_opt_wealth} show that at a late stage we can only gain wealth very slowly. Therefore, we need to bet using the structure of the alternative carefully in order to be able to build up a significant wealth in an appropriate time period. In the following section, we introduce a simple strategy based on these results.

\section{Using a Beta distribution as working prior for $\plim$\label{sec:beta_dist}}
A common prior for the probability in a Bernoulli trial is a beta distribution. The probability density function of a beta distribution for shape parameters $a,b>0$ is given by 
$$ g_{a,b}(x)= \frac{x^{a-1} (1-x)^{b-1}}{\mathrm{Beta}(a,b)}, \text{ where } \mathrm{Beta}(a,b)=\int_0^1u^{a-1} (1-u)^{b-1} du \quad (0\leq x \leq 1).$$
The function $\mathrm{Beta}(a,b)$ is called beta function.
In this section we derive the wealth obtained by applying the binomial mixture strategy with a beta distribution as working prior for $\plim$ and show some simulation results for this strategy. 

\begin{proposition}
    After $T$ permutations and $\ell$ losses, the wealth obtained by the binomial mixture strategy with density $g_{a,b}$ is given by $W_T^{g_{a,b}}(\ell)=\mathrm{Beta}(a+\ell,b+T-\ell)/(\mathrm{Beta}(\ell+1,T-\ell+1)\mathrm{Beta}(a,b))$.
\end{proposition}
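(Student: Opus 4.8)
The plan is to compute the mixture integral directly; this proposition is an immediate consequence of the closed form for the binomial strategy in Proposition~\ref{prop:wealth_log_optimal} together with the Euler integral representation of the beta function. First I would write out the definition of the binomial mixture strategy with $f = g_{a,b}$ and $\mu$ the Lebesgue measure on $[0,1]$, and substitute the closed form $W_T^{\pchoice}(\ell) = (T+1)\binom{T}{\ell}\pchoice^{\ell}(1-\pchoice)^{T-\ell}$ from Proposition~\ref{prop:wealth_log_optimal} together with $g_{a,b}(\pchoice) = \pchoice^{a-1}(1-\pchoice)^{b-1}/\mathrm{Beta}(a,b)$. Collecting the powers of $\pchoice$ and $1-\pchoice$ gives
\[
W_T^{g_{a,b}}(\ell) = \frac{(T+1)\binom{T}{\ell}}{\mathrm{Beta}(a,b)}\int_0^1 \pchoice^{a+\ell-1}(1-\pchoice)^{b+T-\ell-1}\, d\pchoice.
\]

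Second, since $a,b>0$ and $0\le \ell \le T$, the two exponents $a+\ell-1$ and $b+T-\ell-1$ are both strictly greater than $-1$, so the integral is exactly the Euler integral defining $\mathrm{Beta}(a+\ell, b+T-\ell)$ and evaluates to that quantity. Third, I would rewrite the combinatorial prefactor as a beta function: from $\mathrm{Beta}(m,n) = (m-1)!(n-1)!/(m+n-1)!$ with $m=\ell+1$, $n=T-\ell+1$ one gets $\mathrm{Beta}(\ell+1,T-\ell+1) = \ell!(T-\ell)!/(T+1)!$, hence $1/\mathrm{Beta}(\ell+1,T-\ell+1) = (T+1)!/(\ell!(T-\ell)!) = (T+1)\binom{T}{\ell}$. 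Plugging this into the display gives $W_T^{g_{a,b}}(\ell) = \mathrm{Beta}(a+\ell,b+T-\ell)/(\mathrm{Beta}(\ell+1,T-\ell+1)\mathrm{Beta}(a,b))$, which is the claim; it remains only to note (as in Proposition~\ref{prop:wealth_log_optimal}) that the answer depends on $\ell$ alone and not on the order of the losses, which is already inherited from the corresponding property of $W_T^{\pchoice}(\ell)$.

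There is essentially no serious obstacle: the whole argument is one integral plus an elementary factorial identity. The only points worth double-checking are that the beta-function integral is applied in its valid parameter range (handled by $a,b>0$ and $0\le\ell\le T$), and the interchange implicit in the definition of the mixture wealth — but this is already licensed by Tonelli's theorem exactly as in Section~\ref{sec:average_wealth}, since the integrand is nonnegative. If one wishes, the computation can also be read conceptually: $W_T^{g_{a,b}}(\ell)$ equals $T+1$ times the marginal likelihood of observing $\ell$ losses in $T$ Bernoulli trials under a $\mathrm{Beta}(a,b)$ prior, i.e.\ the familiar beta-binomial normalizing constant $\binom{T}{\ell}\mathrm{Beta}(a+\ell,b+T-\ell)/\mathrm{Beta}(a,b)$, and the stated closed form is just a rewriting of this via the identity above.
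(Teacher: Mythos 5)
Your proposal is correct and follows exactly the same route as the paper's proof: substitute the closed form $W_T^{\pchoice}(\ell)=(T+1)\binom{T}{\ell}\pchoice^{\ell}(1-\pchoice)^{T-\ell}$ into the mixture integral, evaluate the resulting Euler integral as $\mathrm{Beta}(a+\ell,b+T-\ell)/\mathrm{Beta}(a,b)$, and rewrite the prefactor $(T+1)\binom{T}{\ell}$ as $1/\mathrm{Beta}(\ell+1,T-\ell+1)$. Your added remarks on the parameter range, Tonelli, and the beta-binomial interpretation are fine but not needed beyond what the paper already records.
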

\begin{proof}
    First note that $(T+1){T \choose \ell}=\mathrm{Beta}(\ell+1,T+1)$. Therefore, the wealth is given by 
    \begin{align*}
        W_T^{g_{a,b}}(\ell)&=(T+1){T \choose \ell} \int_0^1 p^\ell (1-p)^{T-\ell} g_{a,b}(p) dp \\
        &= (T+1){T \choose \ell} \frac{\int_0^1 p^{\ell+a-1} (1-p)^{T+b-\ell-1} dp}{\mathrm{Beta}(a,b)} \\
        &=\frac{\mathrm{Beta}(\ell+a,T+b-\ell)}{\mathrm{Beta}(\ell+1,T-\ell+1)\mathrm{Beta}(a,b)}.
    \end{align*}
\end{proof}
It can be shown that if $a\leq 1$ and $b\geq 1$, then $g_{a,b}(x)$ is nonincreasing in $x$ which seems reasonable as working prior for $\plim$. In particular, for $a=1$ and $b\in \mathbb{N}$, the wealth simplifies to 
$$W_T^{g_{1,b}}(\ell)=b\frac{(T+1)\cdots(T-\ell+1)}{(T+b)\cdots(T-\ell+b)},$$
which is decreasing for an increasing number of losses. In Figure~\ref{fig:sim_results_beta} the log-transformed p-values obtained by the binomial mixture strategy with density $g_{1,b}$ are compared with the ones obtained by the binomial strategy, the binomial mixture strategy (with density $u_{0.9\alpha}$), the aggressive strategy and the classical permutation p-value. We used the same simulation setup as described in Section \ref{sec:simulations}. The binomial mixture strategy with beta working prior bets more aggressively when $b$ is large. However, the binomial strategy and the binomial mixture strategy with uniform working prior lead to more rejections, as seen by the larger number of log-transformed p-values below the grey line $\log(\alpha)$, in all cases.

\begin{figure}[h!]
\centering
\includegraphics[width=15cm]{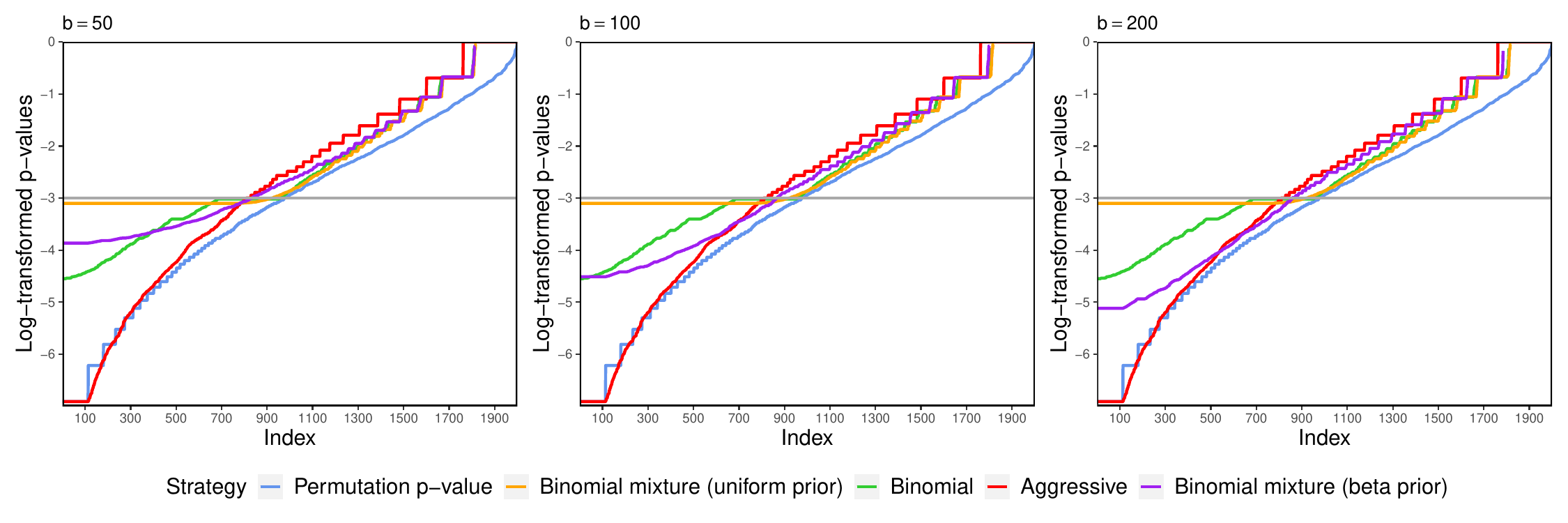}
\caption{All plots display log-transformed p-values in ascending order for $2000$ simulations --- following the experimental protocol in Subsection~\ref{sec:beta_dist}. The grey horizontal line equals $\log(\alpha)$, where $\alpha$ equals 0.05. The binomial mixture strategy with a beta working prior bets more aggressive when $b$ is large. However, in terms of rejections (p-values below grey line) it is outperformed by the binomial strategy and the binomial mixture strategy with uniform working prior in all cases. \label{fig:sim_results_beta} }\end{figure}

\section{Additional simulation results\label{appn:additional_sim_results}}

In this section, we provide additional simulation results for the comparison of the Besag-Clifford method with our randomized binomial mixture strategy in terms of power and number of permutations. The  simulation setup is the same as described in Section \ref{sec:simulations}, however, in each figure we varied one of the parameters. In Figure \ref{fig:sim_results_log_normal}, the observations follow a log-normal distribution (control observations were drawn from $\exp(Z)$ and treated observations from $\exp(Z+\mu)$, where $Z$ follows a standard normal distribution). In Figures \ref{fig:sim_results_c0.8} and \ref{fig:sim_results_c0.99}, we chose the parameter of the binomial mixture strategy as $c=0.8\alpha$ and $c=0.99\alpha$, respectively. In Figures \ref{fig:sim_results_n100}--\ref{fig:sim_results_n2000}, the sample size in each trial was varied ($100$, $200$, $500$ and $2000$).

All figures show the same behavior as in Section~\ref{sec:simulations}: the power of both methods is nearly identical, while the binomial mixture strategy reduces the number of permutations substantially.

\begin{figure}[h!]
\centering
\includegraphics[width=15cm]{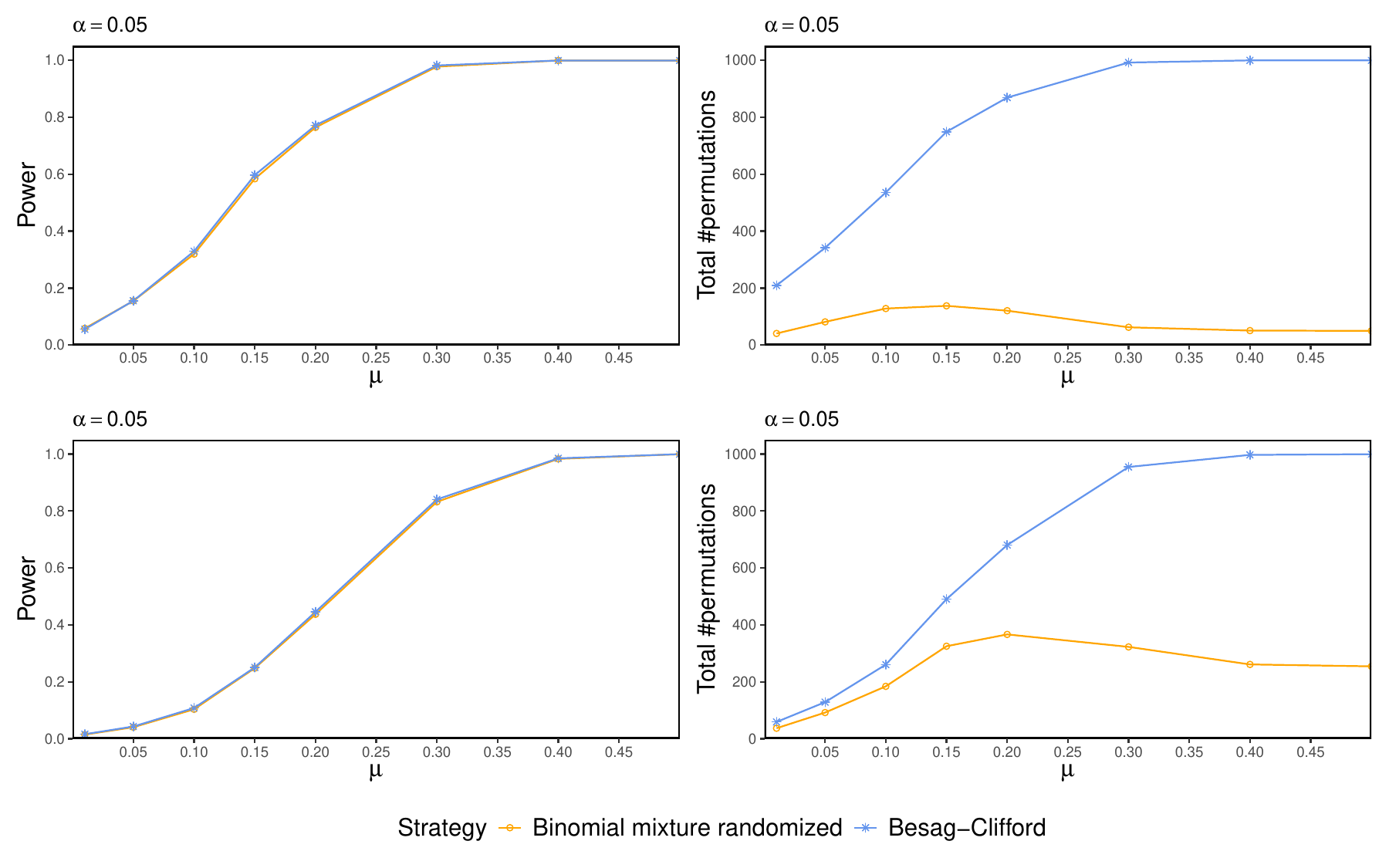}
\caption{Power and average number of permutations until the decision was obtained for log-normally distributed data --- following the experimental protocol in Appendix~\ref{appn:additional_sim_results}. \label{fig:sim_results_log_normal} }\end{figure}

\begin{figure}[h!]
\centering
\includegraphics[width=15cm]{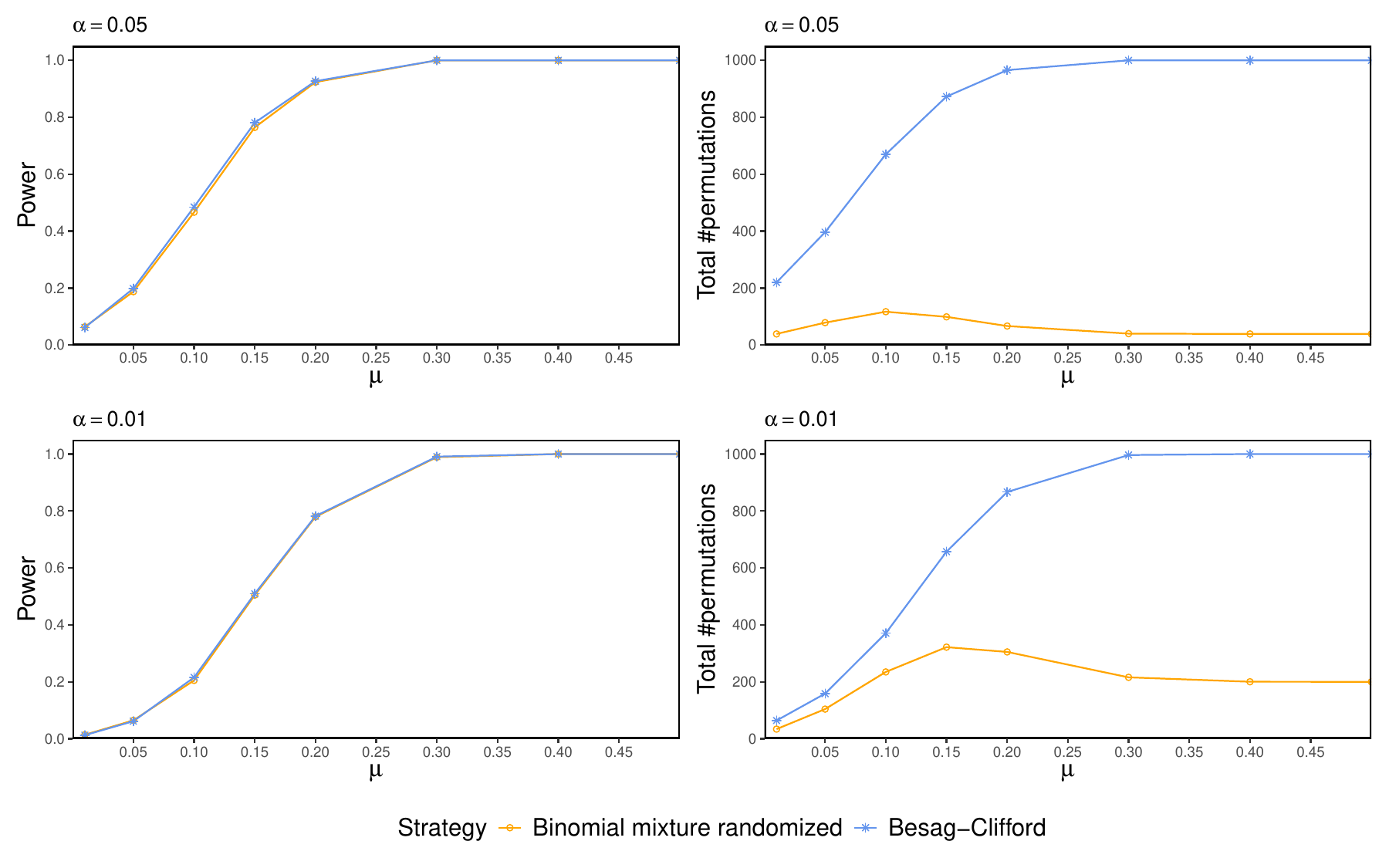}
\caption{Power and average number of permutations until the decision was obtained for $c=0.8$ --- following the experimental protocol in Appendix~\ref{appn:additional_sim_results}. \label{fig:sim_results_c0.8} }\end{figure}

\begin{figure}[h!]
\centering
\includegraphics[width=15cm]{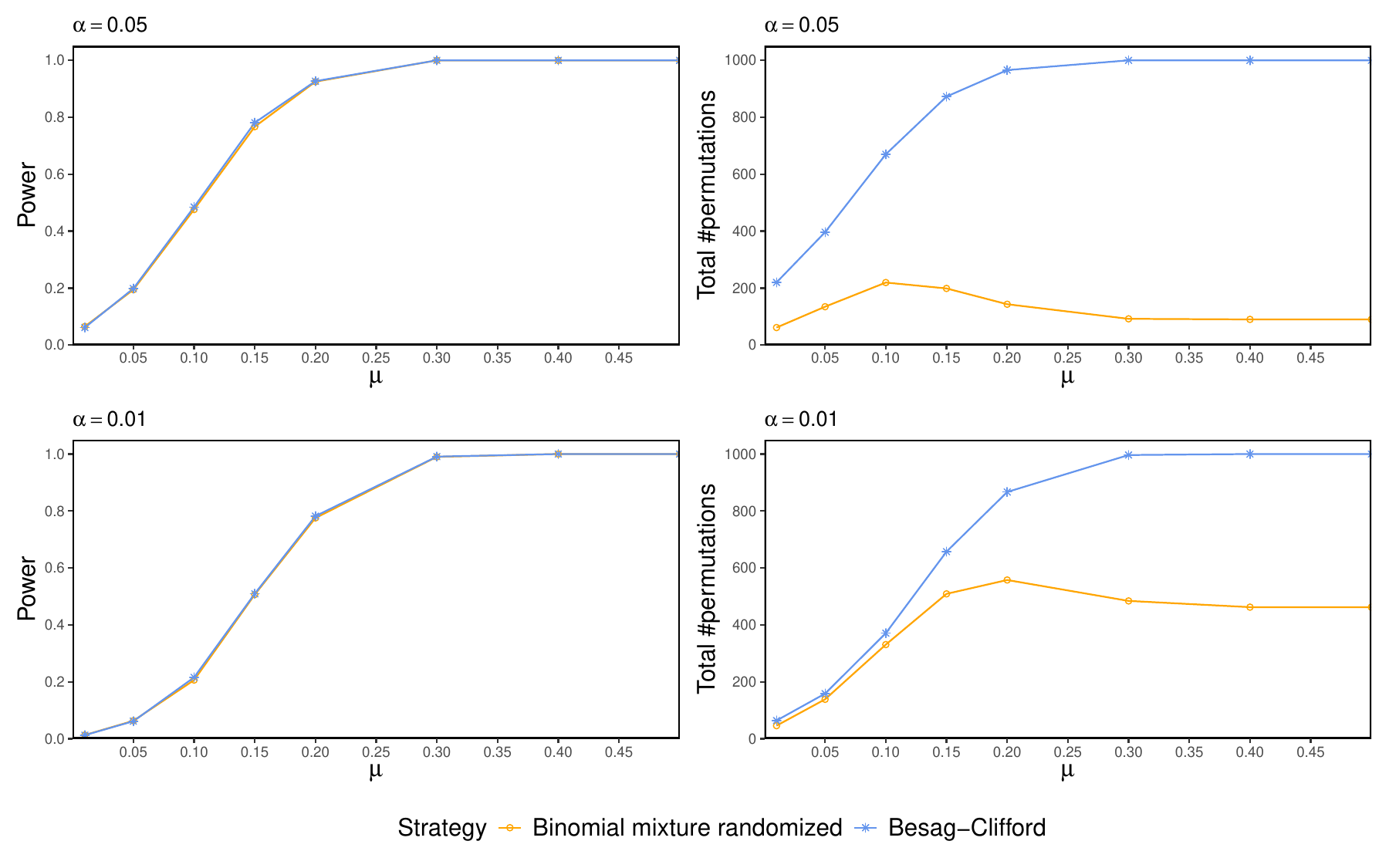}
\caption{Power and average number of permutations until the decision was obtained for $c=0.99$ --- following the experimental protocol in Appendix~\ref{appn:additional_sim_results}. \label{fig:sim_results_c0.99} }\end{figure}

\begin{figure}[h!]
\centering
\includegraphics[width=15cm]{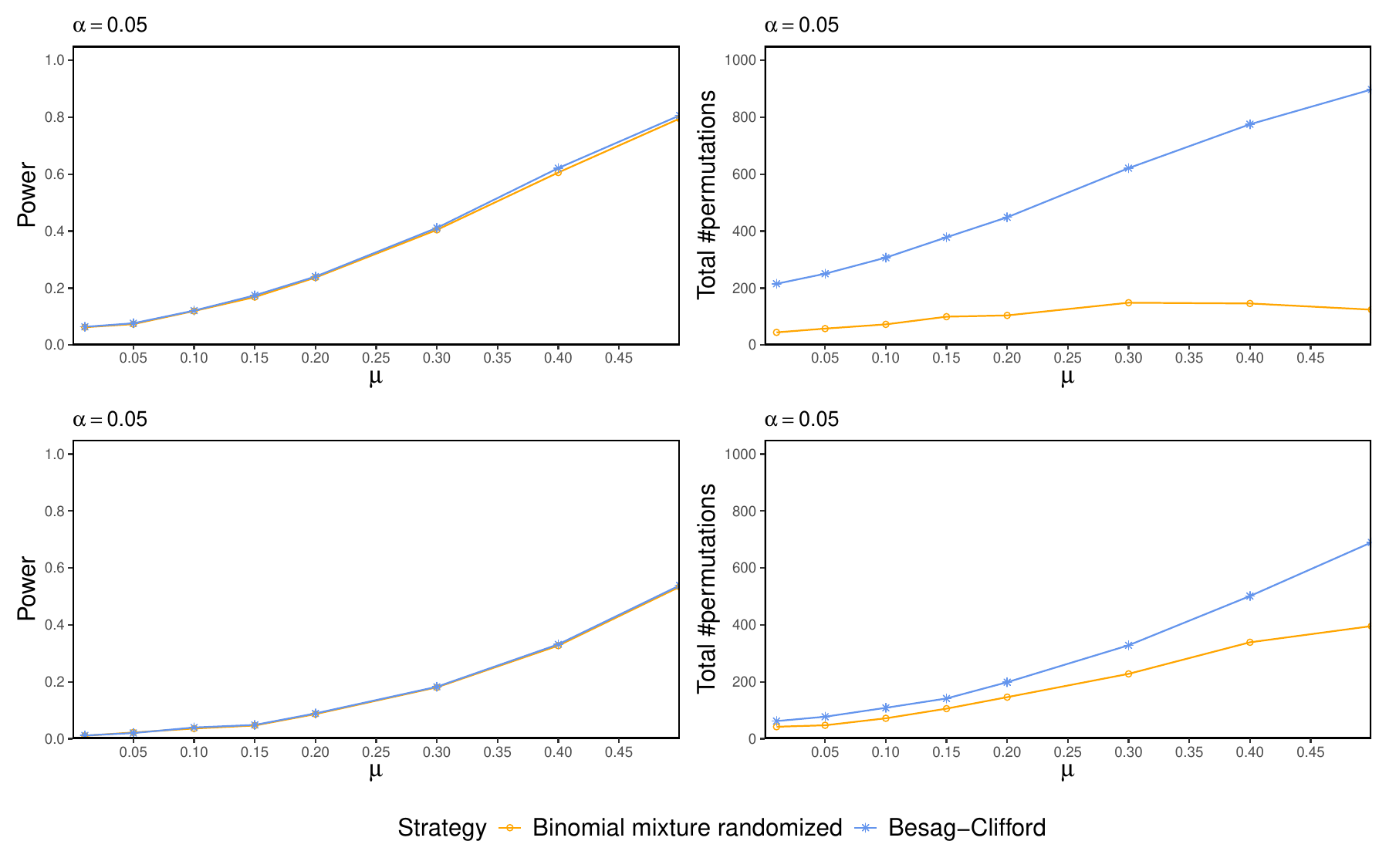}
\caption{Power and average number of permutations until the decision was obtained for $n=100$ --- following the experimental protocol in Appendix~\ref{appn:additional_sim_results}. \label{fig:sim_results_n100} }\end{figure}

\begin{figure}[h!]
\centering
\includegraphics[width=15cm]{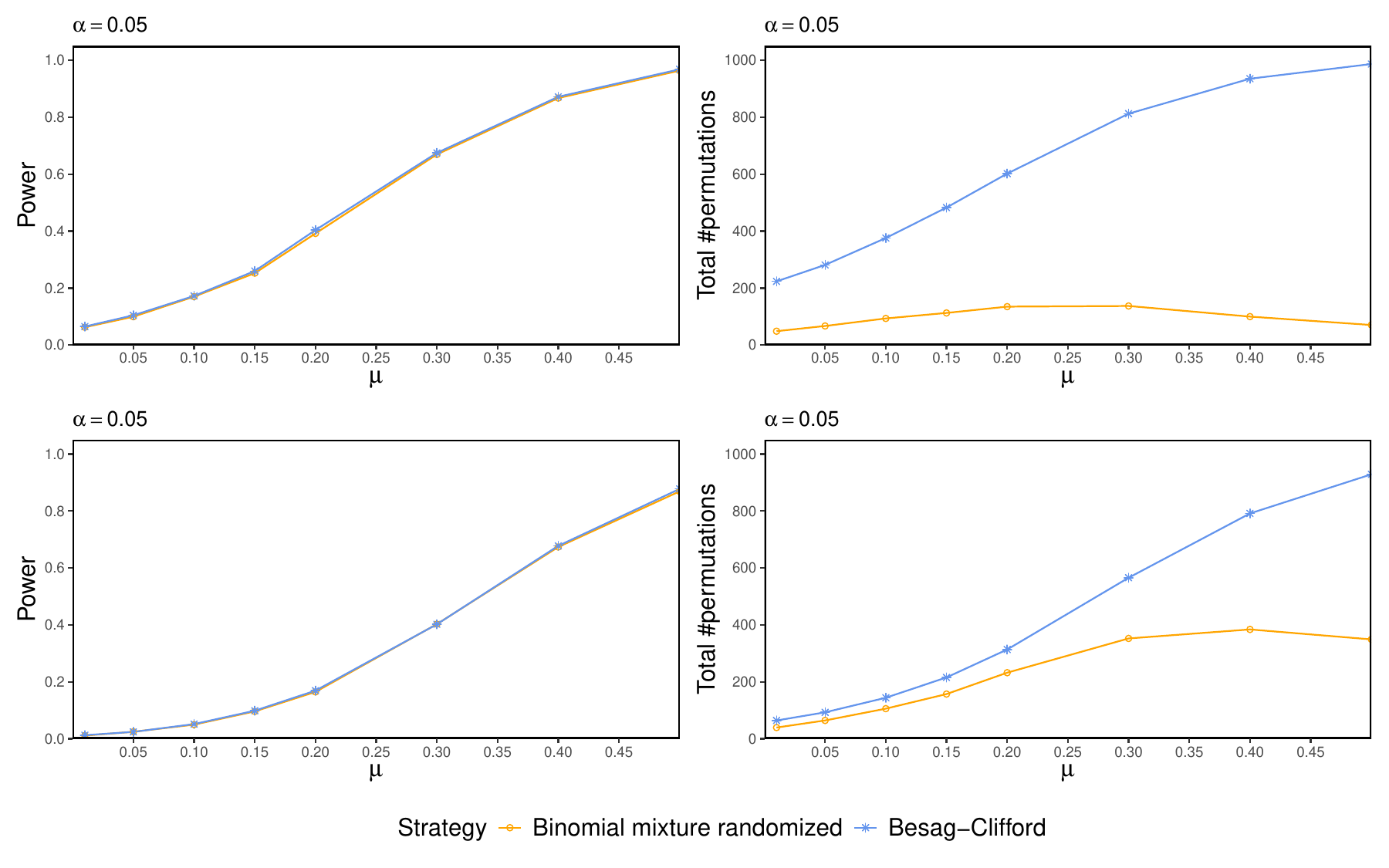}
\caption{Power and average number of permutations until the decision was obtained for $n=200$ --- following the experimental protocol in Appendix~\ref{appn:additional_sim_results}. \label{fig:sim_results_n200} }\end{figure}

\begin{figure}[h!]
\centering
\includegraphics[width=15cm]{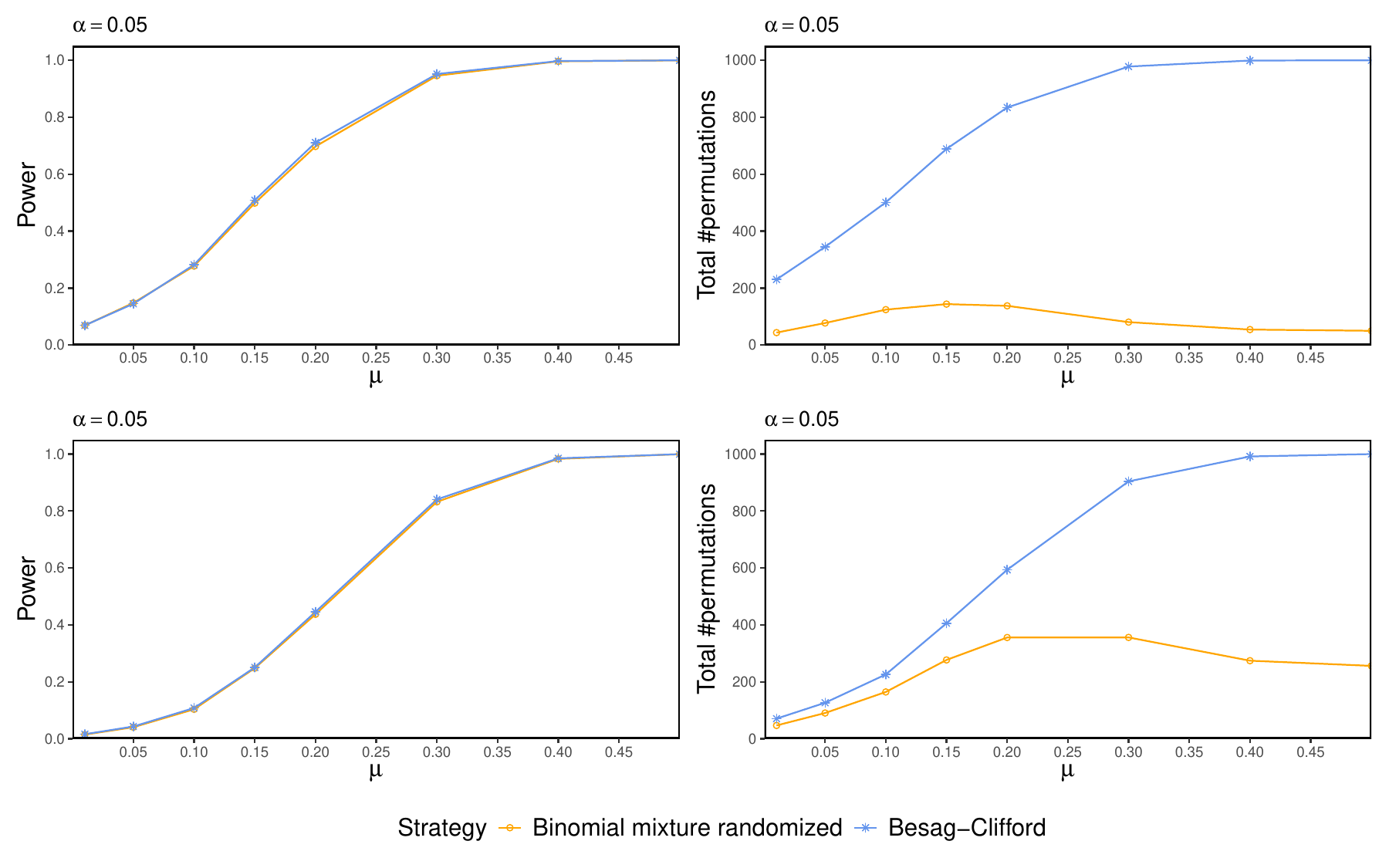}
\caption{Power and average number of permutations until the decision was obtained for $n=500$ --- following the experimental protocol in Appendix~\ref{appn:additional_sim_results}. \label{fig:sim_results_n500} }\end{figure}

\begin{figure}[h!]
\centering
\includegraphics[width=15cm]{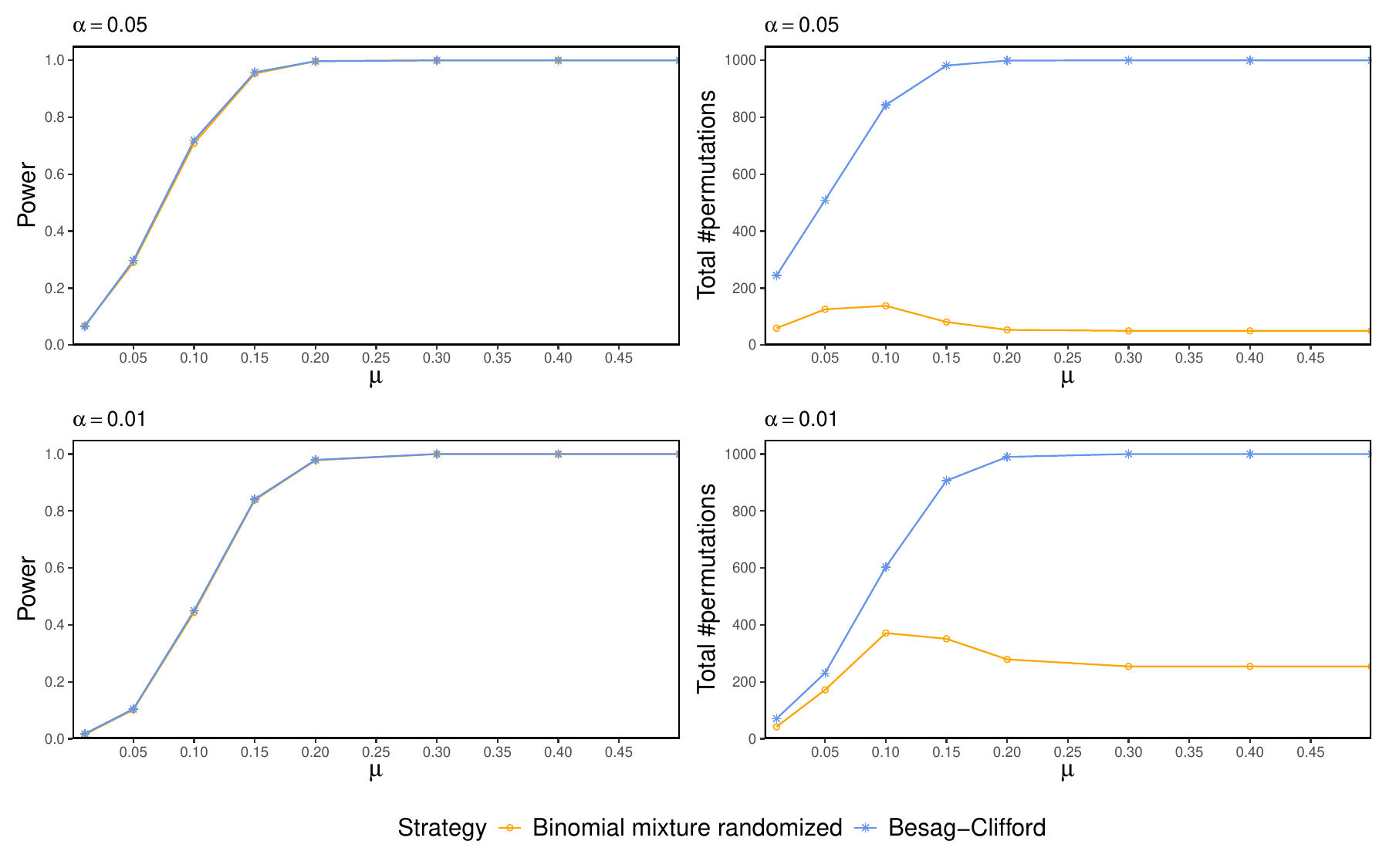}
\caption{Power and average number of permutations until the decision was obtained for $n=2000$ --- following the experimental protocol in Appendix~\ref{appn:additional_sim_results}. \label{fig:sim_results_n2000} }\end{figure}

\end{appendix}

\end{document}